\newtheorem{example}{\textbf{Example}}
\newtheorem{remark}{Remark}
\newtheorem{theorem}{\textbf{Theorem}}
\newtheorem{definition}{\textbf{Definition}}
\newcommand{\kc}{$k$-core\xspace}
\newcommand{\krc}{($k$,$r$)-core\xspace}
\newcommand{\krcs}{($k$,$r$)-cores\xspace}
\newcommand{\kkc}{($k$,$k'$)-core\xspace}
\newcommand{\bkrc}{($\mathbf{k}$,$\mathbf{r}$)-\textbf{core}\xspace}
\newcommand{\advenum}{\textbf{AdvEnum}\xspace}
\newcommand{\basicenum}{\textbf{BasicEnum}\xspace}
\newcommand{\advmax}{\textbf{AdvMax}\xspace}
\newcommand{\basicmax}{\textbf{BasicMax}\xspace}
\newcommand{\eeb}{expand branch\xspace}
\begin{document}
\title{When Engagement Meets Similarity: Efficient (k,r)-Core Computation on Social Networks}
\author{{Fan Zhang$^{\dagger}$, Ying Zhang$^{\dagger}$, Lu Qin$^{\dagger}$, Wenjie Zhang$^{\S}$, Xuemin Lin$^{\S}$}
\vspace{3.6mm}
\\
\fontsize{10}{10}
\selectfont\itshape
$^\dagger$University of Technology Sydney, $^\S$University of New South Wales\\
\fontsize{9}{9} \selectfont\ttfamily\upshape
fanzhang.cs@gmail.com, \{ying.zhang, lu.qin\}@uts.edu.au, \{zhangw, lxue\}@cse.unsw.edu.au\\
}

\maketitle
\begin{abstract}
In this paper, we investigate the problem of ($k$,$r$)-core which intends to find cohesive subgraphs on social networks considering both user engagement and similarity perspectives.
In particular, we adopt the popular concept of $k$-core to guarantee the engagement of the users (vertices) in
a group (subgraph) where each vertex in a ($k$,$r$)-core connects to at least $k$ other vertices.
Meanwhile, we also consider the pairwise similarity between users based on their profiles.
For a given similarity metric and a similarity threshold $r$, the similarity between any two vertices in a ($k$,$r$)-core
is ensured not less than $r$. 
Efficient algorithms are proposed to enumerate all \textit{maximal} ($k$,$r$)-cores and
find the \textit{maximum}($k$,$r$)-core, where both problems are shown to be NP-hard.
Effective pruning techniques significantly reduce the search space of two algorithms and
a novel \kkc based \krc size upper bound enhances performance of
the maximum ($k$,$r$)-core computation.
We also devise effective search orders to accommodate the different nature of two mining algorithms.
Comprehensive experiments on real-life data demonstrate that the maximal/maximum ($k$,$r$)-cores enable us to
find interesting cohesive subgraphs, and performance of two mining algorithms is significantly
improved by proposed techniques.
\end{abstract}
\section{Introduction}
\label{sec:introduction}

In social networks, where vertices represent users and edges represent friendship,
there has been a surge of interest to study user engagement in recent years (e.g.,~\cite{DBLP:journals/siamdm/BhawalkarKLRS15,DBLP:journals/iandc/ChitnisFG16,DBLP:conf/aaai/ChitnisFG13,DBLP:conf/cikm/MalliarosV13,DBLP:conf/wsdm/WuSFLT13}).
User engagement models the behavior that each user may engage in, or leave, a community.
In a basic model of \kc, considering positive influence from friends, a user remain engaged if at least $k$ of his/her friends are engaged. This implies that all members are likely to remain engaged if the group is a \kc.
On the other hand, the vertex attribute, along with the topological structure of the network, is associated with specific properties. For instance, a set of keywords is associated with a user to describe the user's research interests in co-author networks and the geo-locations of users are recorded in geo-social networks.
The pairwise similarity has been extensively used to identify a group of similar users based on their attributes (e.g.,~\cite{jaho2011iscode,rangapuram2013towards}).

We study the problem of mining cohesive subgraphs, namely \krcs, which capture vertices with cohesiveness on both structure and similarity perspectives. 
Regarding graph structure, we adopt the \kc model~\cite{seidman1983network} where each vertex connects to at least $k$ other vertices in the subgraph (\textit{structure constraint}).
From the similarity perspective, a set of vertices is cohesive if their pairwise similarities are not smaller than a given threshold $r$ (\textit{similarity constraint}).
Thus, given a number $k$ and a similarity threshold $r$, we say a connected subgraph is a \krc if and only if it satisfies both structure and similarity constraints.
Considering a \textit{similarity graph} in which two vertices have an edge if and only if they are similar, a \krc in the similarity graph is a clique in which every vertex pair has an edge.
To avoid redundant results, we aim to enumerate the \textit{maximal} \krcs. A \krc is maximal if none of its supergraphs is a \krc.
Moreover, we are also interested in finding the \textit{maximum} \krc which is the \krc with the largest number of vertices among all \krcs.
Following is a motivating example.

\begin{figure}
\centering
\subfigure[Graph]{
\includegraphics[width=0.48\columnwidth]{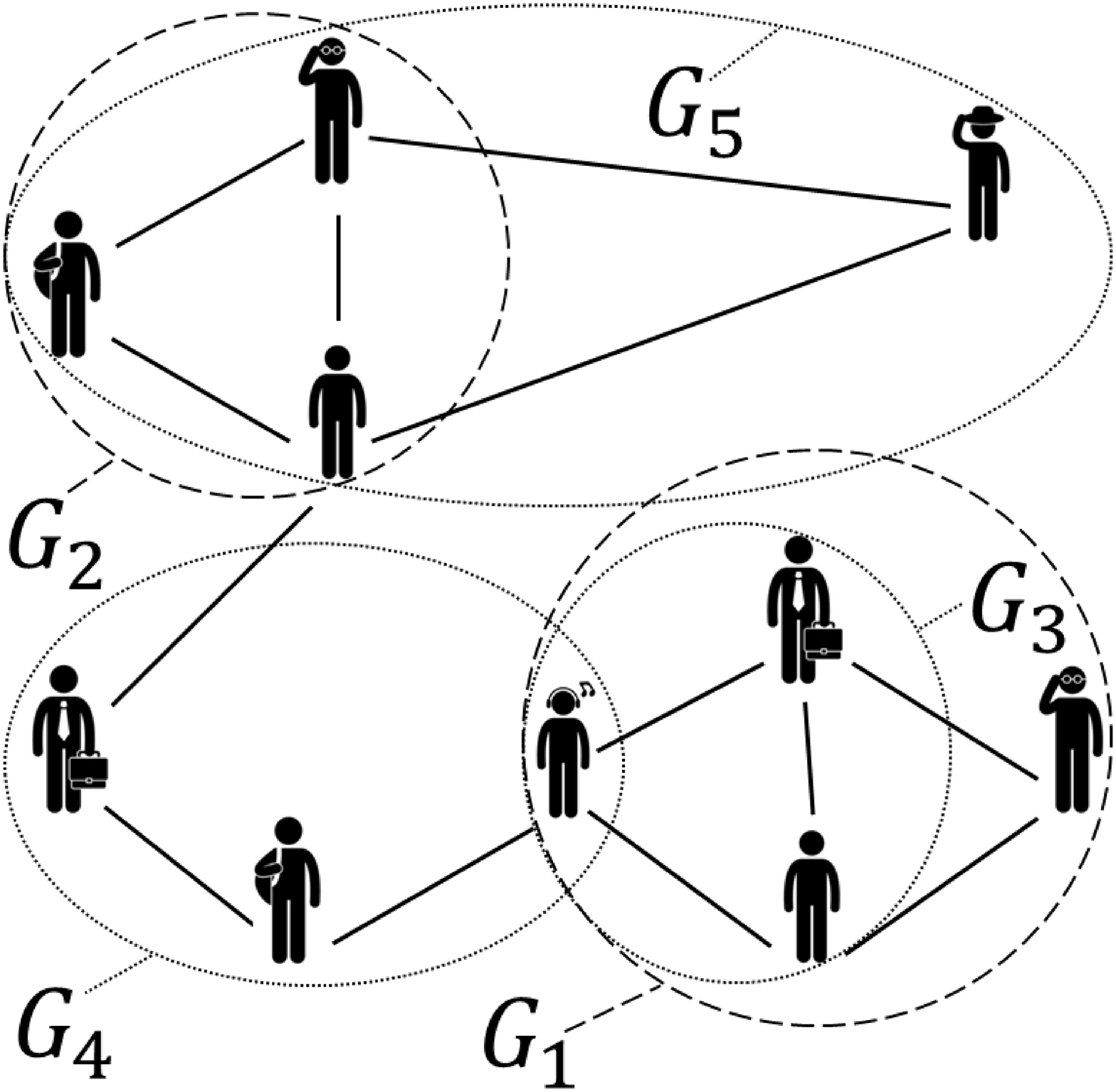}}
\subfigure[Similarity Graph]{
\includegraphics[width=0.48\columnwidth]{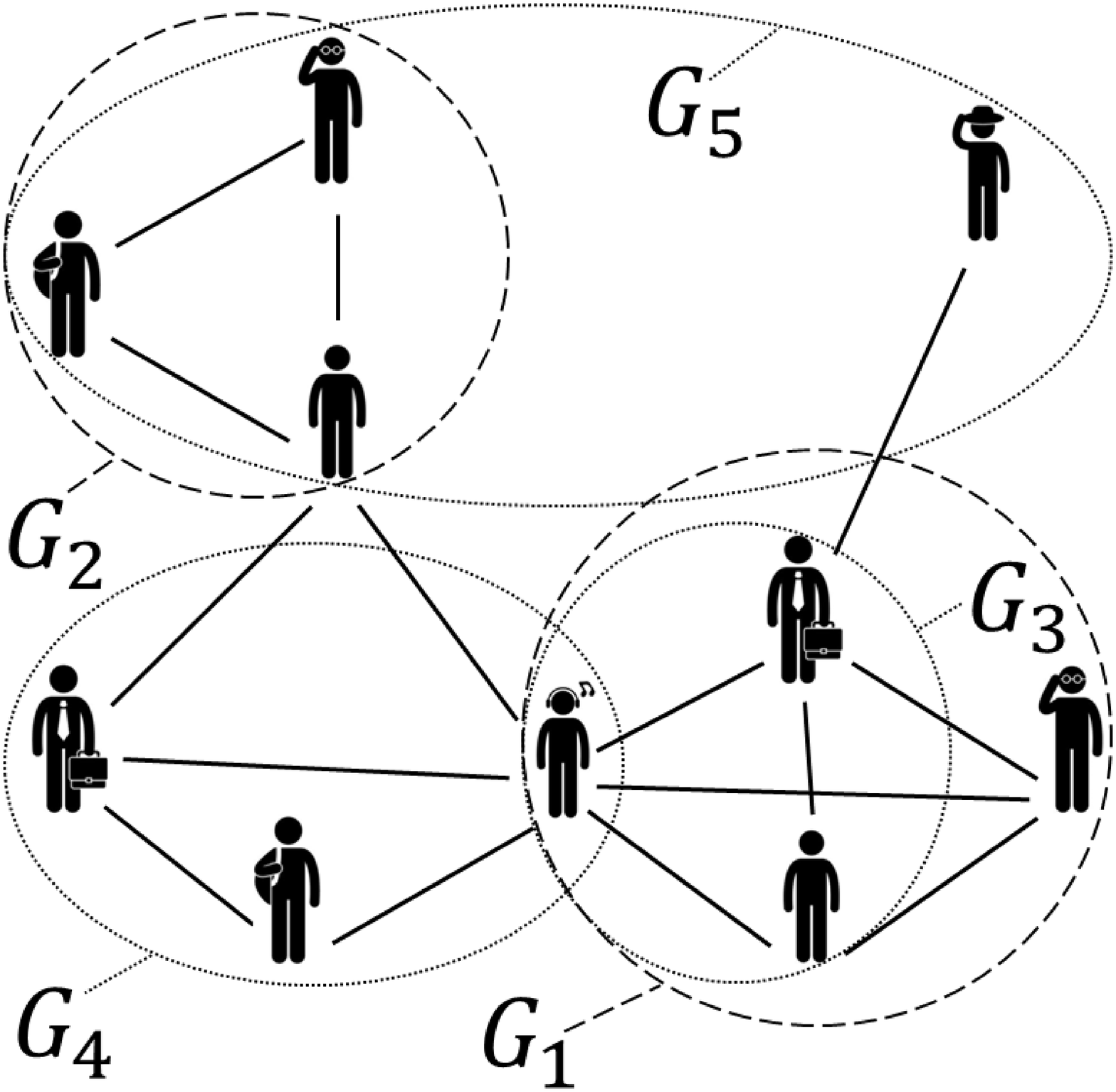}}
\vspace{-5mm}
\caption{Motivating Example}
\label{fig:introduction:motivation}
\end{figure}

\begin{example}[\textbf{Co-author Network}]
Suppose an organization wants to sponsor a number of groups to be continuously involved in a particular activity such as research collaboration.
Two key criteria can be used to evaluate the sustainability of the group: \textit{user engagement} and \textit{user similarity}.
In general, each user in a group should remain engaged and his/her background should be similar to each other.
As illustrated in Figure~\ref{fig:introduction:motivation}(a),
we can model the scholars in DBLP and their co-author relationship as a graph.
In Figure~\ref{fig:introduction:motivation}(b), we construct a corresponding similarity graph in which two scholars have an edge if and only if their research background is similar.
For instance, we may use a set of keywords to describe the background of each user and
the similarity can be measured by Jaccard similarity metric.
Suppose $k=2$ and the similarity threshold $r=0.6$, $G_4$ and $G_5$ are not good candidate groups.
Although each scholar in $G_4$ has similar background with others, their co-author collaboration is weak.
Likewise, although each scholar in $G_5$ co-authors with at least $k$ others, there are some scholars can not collaborate well with others for different research background. However, maximal \krcs (i.e., $G_1$ and $G_2$) can effectively identify good candidate groups because each scholar has at least $k$ co-authors in the same group, and the research similarity for every two scholars is at least $r$.
Note that although $G_3$ is also a \krc, it is less interesting because
it is fully contained by a larger group, $G_1$.
Moreover, knowing the largest possible number of people engaged in the same candidate group
(i.e., the size of the maximum \krc) is also desirable. Using a similar argument, the \krc mining problem studied in this paper can help to identify potential groups for other activities in social networks based on their friendships and personal interests. 
\end{example}

The above examples imply an essential need to enumerate the maximal \krcs, and find the maximum \krc.
Recently, various extensions of \kc have been proposed in the literature (e.g.,~\cite{DBLP:journals/pvldb/FangCLH16,lee2014cast}) but the problems studied are inherently different
to \krc.
For instance, Fang {\em et al.}~\cite{DBLP:journals/pvldb/FangCLH16} aims to find attributed communities for a given query node and a set of query keywords. Specifically, the result subgraph is a \kc and the number of common query keywords is maximized for all vertices in the subgraph.


\vspace{1mm}
\noindent \textbf{Challenges and Contributions.}
Although a linear algorithm for \kc computation~\cite{DBLP:journals/corr/cs-DS-0310049} (i.e., only consider structure constraint) is readily available,
we show that the problem of enumerating all maximal \krcs and finding the maximum \krc are both NP-hard because of
the similarity constraint involved.
A straightforward solutions is the simple combination of the existing \kc and clique algorithms,
e.g., enumerating the cliques on similarity graph (e.g., the similarity graph in Figure~\ref{fig:introduction:motivation}(b)) and then checking the structure constraint on the graph (e.g., graph in Figure~\ref{fig:introduction:motivation}(a)).
In Section~\ref{sec:disc} and the empirical study, we show that this is not promising because of the
isolated processing of structure and similarity constraints.
In this paper, we show that the performance can be immediately improved by
considering two constraints (i.e., two pruning rules) at the same time without explicitly materializing the similarity graph.
Then our technique contributions focus on further significantly reducing the search space of two mining algorithms
from the following three aspects:
($i$) effective pruning, early termination and maximal check techniques.
($ii$) \kkc based approach to derive tight upper bound for the problem of finding the maximum \krc.
and ($iii$) good search orders in two mining algorithms.
Following is a summary of our principal contributions.


\begin{itemize}
\item We advocate a novel cohesive subgraph model for attributed graphs, called \krc, to capture the cohesiveness of subgraphs from both the graph structure and the vertex attributes. We prove that the problem of enumerating all maximal \krcs and finding the maximum \krc are both NP-hard.\hfill (Section~\ref{sec:preli})
\item We develop efficient algorithms to enumerate the maximal \krcs with candidate pruning, early termination and maximal checking techniques. \hfill (Section~\ref{sec:enumerate})
\item We also develop an efficient algorithm to find the maximum \krc. Particularly, a novel \kkc based approach is proposed to derive a tight upper bound for the size of the candidate solution. \hfill (Section~\ref{sec:max})
\item Based on some key observations, we propose three search orders for enumerating maximal \krcs, checking maximal \krcs, and finding maximum \krc algorithms, respectively. \hfill ~(Section~\ref{sec:order})
\item Our empirical studies on real-life data demonstrate that interesting cohesive subgraphs can be identified by maximal \krcs and maximum \krc. The extensive performance evaluation shows that the techniques proposed in this paper can significantly improve the performance of two mining algorithms. \hfill (Section~\ref{sec:exp})
\end{itemize}


%
%

\section{Preliminaries}
\label{sec:preli}
In this section, we first formally introduce the concept of \krc,
then show that the two problems are NP-hard.
Table~\ref{tb:notations} summarizes the mathematical notations used throughout this paper.
\begin{table}
\small
  \centering
    \begin{tabular}{|c|l|}
      \hline
      \textbf{Notation}   & \textbf{Definition}                   \\ \hline \hline

      $G$    &  a simple attributed graph  \\ \hline

      $S$,$J$,$R$  &  induced subgraph or corresponding vertices  \\ \hline

      $u$, $v$ &  vertices in the attributed graph \\ \hline

      $sim(u,v)$  & similarity between $u$ and $v$  \\ \hline

      $deg(u,S)$  & number of adjacent vertex of $u$ in $S$ \\ \hline

	  $deg_{min}(S)$ & minimal degree of the vertices in $S$ \\ \hline

      $DP(u,S)$  & number of dissimilar vertices of $u$ w.r.t $S$ \\ \hline
		
	  $DP(S)$  & number of dissimilar pairs of $S$ \\ \hline

      $SP(u,S)$  & number of similar vertices of $u$ w.r.t $S$ \\ \hline

      $M$  &  vertices chosen so far in the search  \\ \hline

      $C$  &  candidate vertices set in the search \\ \hline

      $E$  &  relevant exclusive vertices set in the search \\ \hline

	  $\mathcal{R}$($M$,$C$) & maximal \krcs derived from $M \cup C$ \\ \hline

\end{tabular}
\vspace{-2mm}
\caption{The summary of notations}
\label{tb:notations}
\end{table}
\subsection{Problem Definition}
\label{subsec:prob}
We consider an undirected, unweighted, and attributed graph $G=(V, \mathcal{E},A)$, where $V(G)$ (resp. $\mathcal{E}(G)$) represents the set of vertices (resp. edges) in $G$,
and $A(G)$ denotes the attributes of the vertices.
By $sim(u,v)$, we denote the similarity of two vertices $u$, $v$ in $V(G)$ which is derived from their corresponding attribute values (e.g., users' geo-locations and interests) such as Jaccard similarity or Euclidean distance.
For a given similarity threshold $r$, we say two vertices are dissimilar (resp. similar)
if $\mathop{sim}(u,v) < r$ (resp. $\mathop{sim}(u,v) \geq r$)~\footnote{Following the convention, when the distance metric (e.g., Euclidean distance) is employed, we say two vertices are similar if their distance is not larger than the given distance threshold.}.

For a vertex $u$ and a set $S$ of vertices, $DP(u,S)$ (resp. $SP(u,S)$) denotes the number of other vertices in $S$ which are dissimilar (resp. similar) to $u$ regarding the given similarity threshold $r$.
We use $DP(S)$ denote the number of dissimilar pairs in $S$.
We use $S \subseteq G$ to denote that $S$ is a subgraph of $G$ where
$\mathcal{E}(S) \subseteq \mathcal{E}(G)$ and $A(S) \subseteq A(G)$.
By $deg(u,S)$, we denote the number of adjacent vertices of $u$ in $V(S)$.
Then, $deg_{min}(S)$ is the minimal degree of the vertices in $V(S)$.
Now we formally introduce two constraints.
\begin{definition} \textbf{Structure Constraint.}
Given a positive integer $k$, a subgraph $S$ satisfies the structure constraint if
$deg(u,S) \geq k$ for each vertex $u \in V(S)$, i.e., $deg_{min}(S) \geq k$.
\end{definition}
\begin{definition} \textbf{Similarity Constraint.}
Given a similarity threshold $r$, a subgraph $S$ satisfies the similarity constraint if
$DP(u,S) = 0$ for each vertex $u \in V(S)$, i.e., $DP(S)=0$.
\end{definition}

We then formally define the \krc based on structure and similarity constraints.
\begin{definition} \bkrc.
Given a connected subgraph $S \subseteq G$, $S$ is a \krc if $S$ satisfies both structure and similarity constraints.
\end{definition}

In this paper, we aim to find all maximal \krcs and the maximum \krc, which are defined as follows. 

\begin{definition} \textbf{Maximal} \bkrc.
Given a connected subgraph $S \subseteq G$, $S$ is a maximal \krc if $S$ is a \krc of $G$ and there exists no \krc $S'$ of $G$ such that $S \subset S'$.
\end{definition}

\begin{definition} \textbf{Maximum} \bkrc.
Let $\mathcal{R}$ denote all \krcs of an attributed graph $G$, a \krc $S \subseteq G$ is maximum if $|V(S)| \geq |V(S')|$ for every \krc $S' \in \mathcal{R}$.
\end{definition}

\noindent \textbf{Problem Statement}.
Given an attributed graph $G$, a positive integer $k$ and a similarity threshold $r$, we aim to develop efficient algorithms for the following two fundamental problems:
($i$) enumerating all maximal \krcs in $G$;
($ii$) finding the maximum \krc in $G$.

\begin{example}
In Figure~\ref{fig:introduction:motivation}, all vertices are from the \kc where $k=2$.
$G_1$, $G_2$ and $G_3$ are the three \krcs.
$G_1$ and $G_2$ are maximal \krcs while $G_3$ is fully contained by $G_1$.
$G_1$ is the maximum \krc.
\end{example}

\subsection{Problem Complexity}
\label{subsec:complexity}

We can compute \kc in linear time by recursively removing the vertices with a degree of less than $k$~\cite{DBLP:journals/corr/cs-DS-0310049}.
Nevertheless, the two problems studied in this paper are NP-hard due to the additional similarity constraint.

\begin{theorem}
\label{the:hardnesslemma}
Given a graph $G(V,\mathcal{E})$, the problems of enumerating all maximal \krcs and finding the maximum \krc are NP-hard.
\end{theorem}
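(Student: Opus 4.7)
The plan is to reduce from the \textbf{Maximum Clique} problem to both problems at once. Given an instance $H = (V_H, E_H)$ whose maximum clique we want, I would build an attributed graph $G$ with $V(G) = V_H$, declare $G$ to be the \emph{complete} graph on $V(G)$ (so every pair of vertices is joined by an edge), and choose vertex attributes so that two vertices are similar under the metric and threshold $r$ if and only if they are adjacent in $H$. A concrete realization using the Jaccard metric discussed in the paper is to assign each edge $e \in E_H$ a fresh label $\ell_e$, let $A(v) = \{\ell_e : v \in e\}$, and pick any $r$ with $0 < r < 1/(2|V_H|)$; then $sim(u,v) \geq r$ iff $A(u) \cap A(v) \neq \emptyset$ iff $uv \in E_H$.

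With $k$ fixed at a small constant such as $k=1$, I would then argue that the \krc{}s of $G$ of size at least $k+1$ stand in an inclusion-preserving bijection with the cliques of $H$ of size at least $k+1$. Indeed, since $G$ is complete, the structure constraint is automatically met by any subgraph of size $\geq k+1$, while the similarity constraint is exactly the statement that the chosen vertex set induces a clique in $H$. This bijection preserves size and maximality, so computing the maximum \krc of $G$ yields the maximum clique of $H$, and enumerating the maximal \krc{}s of $G$ enumerates the maximal cliques of $H$. Consequently, both target problems inherit NP-hardness from maximum clique in the usual sense that their natural decision versions, e.g., ``does $G$ have a \krc of size $\geq t$?'', are NP-complete.

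The hard part, I expect, is not the combinatorial argument, which is transparent once the reduction is set up, but rather verifying that an arbitrary similarity relation can be realized by a concrete, polynomial-time-computable attribute assignment under the similarity metric of interest. The Jaccard construction above handles set-valued attributes cleanly; for distance-based metrics such as Euclidean distance, one would instead place vertices at carefully chosen coordinates so that adjacent pairs lie within distance $r$ and non-adjacent pairs beyond it, which is a standard but slightly delicate embedding argument. Once the similarity oracle is in place, the remaining size and maximality comparisons are immediate, and the NP-hardness of both problems follows.
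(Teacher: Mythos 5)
Your proposal is correct and takes essentially the same route as the paper: both reduce from the clique problem by building a complete attributed graph on the same vertex set, choosing Jaccard attributes and an infinitesimal threshold $r$ so that the similarity relation coincides with adjacency in the original graph, whence the structure constraint is vacuous and \krcs correspond to cliques. The only difference is your attribute encoding via fresh edge labels, which is in fact slightly more careful than the paper's choice $A(u)=adj(u,G)$, under which two non-adjacent vertices sharing a common neighbor would still have positive Jaccard similarity.
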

\begin{proof}
Given a graph $G(V,\mathcal{E})$, we construct an attributed graph $G'(V',\mathcal{E}',A')$ as follows.
Let $V(G') = V(G)$ and $\mathcal{E}(G')=\{(u,v)~|~u\in V(G'), v\in V(G'), u\neq v\}$, i.e., $G'$ is a complete graph.
For each $u\in V(G')$, we let $A(u) = adj(u,G)$ where $adj(u,G)$ is the set of adjacent vertices of $u$ in $G$.
Suppose a Jaccard similarity is employed, i.e., $\mathop{sim}(u,v) = \frac{|A(u)\cap A(v)|}{|A(u) \cup A(v)|}$ for
any pair of vertices $u$ and $v$ in $V(G')$, and let the similarity threshold $r = \epsilon$ where $\epsilon$ is an infinite small positive number (e.g., $\epsilon=\frac{1}{2|V(G')|}$).
We have $\mathop{sim}(u,v) \geq r$ if the edge $(u,v) \in \mathcal{E}(G)$, and otherwise $\mathop{sim}(u,v)=0 < r$.
Since $G'$ is a complete graph, i.e., every subgraph $S \subseteq G'$ with $|S| \geq k$ satisfies the structure constraint of a \krc,
the problem of deciding whether there is a $k$-clique on $G$ can be reduced to the problem of finding a \krc on $G'$ with $r = \epsilon$, and hence can be solved by the problem of enumerating all maximal \krcs or finding the maximum \krc. Theorem~\ref{the:hardnesslemma} holds due to the NP-hardness of the $k$-clique problem~\cite{DBLP:books/fm/GareyJ79}.
\end{proof}

\section{The Clique-based Method}
\label{sec:disc}

Let $G'$ denote a new graph named \textit{similarity graph} with $V(G') = V(G)$ and $\mathcal{E}(G') = \{(u,v)~|~sim(u, v) \geq r~\&~u,v\in V(G)\}$, i.e., $G'$ connects the similar vertices in $V(G)$.
According to the proof of Theorem~\ref{the:hardnesslemma},
it is clear that the set of vertices in a \krc satisfies the structure constraint on $G$ and is a \textit{clique} in the similarity graph $G'$.
This implies that we can use the existing clique algorithms on the similarity graph to enumerate the \krc candidates, followed by a structure constraint check.
More specifically, we may first construct the similarity graph $G'$ by computing the pairwise similarity of the vertices. Then we enumerate the cliques in $G'$, and compute the \kc on each induced subgraph of
$G$ for each clique. We uncover the maximal \krcs after the maximal check.
We may further improve the performance of this clique-based method in the following three ways.
\begin{itemize}
\item  Instead of enumerating cliques on the similarity graph $G'$, we can first compute the \kc of $G$, denoted by $S$.
Then, we apply the clique-based method on the similarity graph of each connected subgraph in $S$.

\item  An edge in $S$ can be deleted if its corresponding vertices are \textit{dissimilar}, i.e., there is no edge between these two vertices in similarity graph $S'$.

\item  We only need to compute \kc for each maximal clique because any maximal \krc derived from a non-maximal clique can be obtained from the maximal cliques.
\end{itemize}

Although the above three methods substantially improve the performance of the clique-based method,
our empirical study demonstrates that the improved clique-based method is still significantly outperformed by our baseline algorithm (Section~\ref{subsec:exp efficiency}).
This further validates the efficiency of the new techniques proposed in this paper.

\section{Warming Up for Our Approach}
\label{sec:naive}
In this section, we first present a naive solution with a proof of the algorithm's correctness in Section~\ref{subsec:naive}.
Then, Section~\ref{subsec:wm_analysis} shows the limits of the naive approach
and briefly introduces the motivations for our proposed techniques.

\subsection{Naive Solution}
\label{subsec:naive}
For ease of understanding, we start with a straightforward set enumeration approach . The pseudo-code is given in  Algorithm~\ref{alg:naive}.
At the initial stage (Line~\ref{alg:naive_remove}-\ref{alg:naive_remove end}), we remove the edges in $\mathcal{E}(G)$
whose corresponding vertices are dissimilar, and then
compute the $k$-core $\mathcal{S}$ of the graph $G$.
For each connected subgraph $S \in \mathcal{S}$,
the procedure NaiveEnum (Line~\ref{alg:naive_enum_1}) identifies all possible \krcs
by enumerating and validating all the induced subgraphs of $S$.
By $\mathcal{R}$, we record the \krcs seen so far. Lines~\ref{alg:naive_check_s}-\ref{alg:naive_check_e} eliminate the non-maximal $(k,r)$-cores by checking all ($k$,$r$)-cores.

\begin{algorithm}[htb]
\SetVline 
\SetFuncSty{textsf}
\SetArgSty{textsf}
\small
\caption{\bf EnumerateMKRC($G$, $k$, $r$)}
\label{alg:naive}
\Input
{
   $G:$ attributed graph, $k:$ degree threshold,
   $r:$ similarity threshold
}
\Output{$\mathcal{M}:$ Maximal \krcs}
\For{each edge $(u,v)$ in $\mathcal{E}(G)$}
{
    \label{alg:naive_remove}
	\State{Remove edge $(u,v)$ from $G$ \textbf{If} $sim(u,v)<r$}
    \label{alg:naive_remove end}
}
\State{$\mathcal{S} \leftarrow$ \textbf{k-core}($G$); $\mathcal{R}:=\emptyset$} \label{alg:naive_getS}
\For{each connected subgraph $S$ in $\mathcal{S}$}
{
	\label{alg:naive_s}
	\State{$\mathcal{R} := \mathcal{R}~\cup$ \textbf{NaiveEnum}($\emptyset$, $S$)}
	\label{alg:naive_enum_1}
}
\For{each \krc $R$ in $\mathcal{R}$}
{
	\label{alg:naive_check_s}
    \If {there is a \krc $R' \in \mathcal{R}$ s.t. $R \subset R'$}
	{
		\State{$\mathcal{R}:= \mathcal{R} \setminus R$}
		\label{alg:naive_check_e}
	}
}
\Return{$\mathcal{R}$}
\end{algorithm}

\begin{algorithm}[htb]
\SetVline 
\SetFuncSty{textsf}
\SetArgSty{textsf}
\small
\caption{\bf NaiveEnum($M$, $C$)}
\label{alg:naive_enum}
\Input
{
   $M :$ chosen vertices, $C :$ candidate vertices
}
\Output{$\mathcal{R}:$ \krcs}
\If{$C = \emptyset$ and $deg_{min}(M)\geq k$ and  $DP(M)=0 $ }
{
    \label{alg:nenum validate}
	\State{$\mathcal{R}:= \mathcal{R}$ $\cup$ $R$ \textbf{for} every connected subgraph $R\in M$}
}
\Else
{
	\State{$u \leftarrow$ choose a vertex in $C$}
    \label{alg:nenum select u}
	\StateCmt{\textbf{NaiveEnum}($M \cup u$, $C \setminus u$)}{\textbf{Expand}}
    \label{alg:nenum expand}
	\StateCmt{\textbf{NaiveEnum}($M$, $C \setminus u$ )}{\textbf{Shrink}}
    \label{alg:nenum shrink}
}
\end{algorithm}

During the NaiveEnum search procedure (Algorithm~\ref{alg:naive_enum}),
the vertex set $M$ incrementally retains the chosen vertices, and $C$ retains the candidate vertices.
As shown in Figure~\ref{fig:naive:searchtree}, the enumeration process corresponds to a \textit{binary search tree} in which each leaf node represents a subset of $S$.
In each non-leaf node, there are two branches. The chosen vertex will be moved to $M$ from $C$ in \textit{expand branch}, and will be deleted from $C$ in \textit{shrink branch}, respectively.

\begin{figure}[htb]
\centering
\includegraphics[width=1\columnwidth]{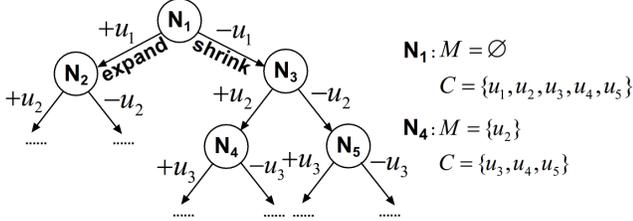}
\vspace{-8mm}
\caption{Example of the Search Tree}
\label{fig:naive:searchtree}
\end{figure}

\vspace{1mm}
\noindent \textbf{Algorithm Correctness.}
We can safely remove dissimilar edges (i.e., edges whose corresponding vertices are dissimilar) at Line~\ref{alg:naive_remove} and~\ref{alg:naive_remove end}, since they will not be considered in any \krc due to the similarity constraint.
For every \krc $R$ in $G$, there is one and only one connected \kc subgraph $S$ from $\mathcal{S}$ with $R \subseteq S$.
Since all possible subsets of $S$ (i.e., $2^{|S|}$ leaf nodes) are enumerated in the corresponding search tree, every \krc $R$ can be accessed \textit{exactly once} during the search.
Together with the structure/similarity constraints and maximal property validation,
we can output all maximal \krcs.

Algorithm~\ref{alg:naive} immediately finds the maximum \krc by returning the maximal \krc with the largest size.

\subsection{Limits of the Naive Approach}
\label{subsec:wm_analysis}
It is cost-prohibitive to enumerate all subsets of each $k$-core subgraph $S$ in Algorithm~\ref{alg:naive}.
Below, we briefly discuss the limits of the above algorithm and the ways to address these issues.

\vspace{0.5mm}
\noindent \textbf{Large Candidate Size.}
We aim to reduce the candidate size by explicitly/implicitly excluding some vertices in $C$.

\vspace{0.5mm}
\noindent \textbf{No Early Termination Check.}
We carefully maintain some excluded vertices to foresee whether all possible \krcs in the subtree are not maximal, such that the search may be terminated early.

\vspace{0.5mm}
\noindent \textbf{Unscalable Maximal Check.}
We devise a new approach to check the maximal property by
extending the current solution $R$ by the excluded vertices.

\vspace{0.5mm}
\noindent \textbf{No Upper Bound Pruning.}
To find the maximum \krc, we should use the largest size of the \krcs seen so far
to terminate the search in some non-promising subtrees, so it is critical to estimate the upper bound size of \krcs in the candidate solution. We devise a $(k,k')$-core based upper bound which is much tighter than the upper bound derived from state-of-the-art techniques.

\vspace{0.5mm}
\noindent \textbf{Search Order Not Considered.}
In Algorithm~\ref{alg:naive}, we do not consider the visiting order of the vertices chosen from $C$
along with the order of two branches.
Our empirical study shows that an improper search order may result in very poor performance,
even if all of other techniques are employed.
Considering the different nature of the problems, we should devise effective search orders
for enumeration, finding the maximum, and checking maximal algorithms.

\section{Finding All Maximal (k,r)-Cores}
\label{sec:enumerate}

In this section, we propose pruning techniques for the enumeration algorithm including candidate reduction, early termination, and maximal check techniques, respectively.
Note that we defer discussion on search orders to Section~\ref{sec:order}.

\subsection{Reducing Candidate Size}
\label{subsec:enum_prune}
We present pruning techniques to explicitly/implicitly exclude some vertices from $C$.

\subsubsection{Eliminating Candidates}
\label{subsubsec:elim cand}
Intuitively, when a vertex in $C$ is assigned to (i.e., expand branch) $M$ or discarded (i.e., shrink branch), we shall recursively remove some non-promising vertices from $C$ due to structure and
similarity constraints.
The following two pruning rules are based on the definition of \krc.
\begin{theorem}
\label{the:str_prune}
\textbf{Structure based Pruning}. We can discard a vertex $u$ in $C$ if $deg(u, M \cup C)~<~k$.
\end{theorem}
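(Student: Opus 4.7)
The plan is to exploit a simple invariant of the NaiveEnum recursion: every $(k,r)$-core $R$ that can still be output from the current call satisfies $V(R)\subseteq M\cup C$. This is because the expand branch only moves a vertex from $C$ into $M$, and the shrink branch only deletes a vertex from $C$; in neither case is a vertex outside $M\cup C$ ever reintroduced, so the set $M\cup C$ is monotonically non-increasing along any root-to-leaf path, and at a leaf the candidate $R$ is taken from $M$. Establishing this invariant formally (by a straightforward induction on the depth of the recursion) is the first step.

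Once the invariant is in hand, I would apply monotonicity of the degree function with respect to subgraph inclusion. Suppose for contradiction that some $(k,r)$-core $R$ enumerated from the current subtree contains $u$. Then $R\subseteq M\cup C$, so the set of neighbours of $u$ in $R$ is a subset of the neighbours of $u$ in $M\cup C$, giving
\[
\mathit{deg}(u,R) \;\leq\; \mathit{deg}(u,M\cup C) \;<\; k.
\]
This contradicts the structure constraint of Definition~1 applied to $R$. Hence no $(k,r)$-core producible from this node contains $u$.

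The final step is to argue that this justifies \emph{discarding} $u$ from $C$ (rather than merely blocking the expand branch on $u$). Since $u\notin M$ by assumption, discarding $u$ amounts to committing to the shrink branch for $u$ at the current node; by the argument above the expand branch would contribute nothing, so no $(k,r)$-core is lost and the enumeration remains complete. The main obstacle, if any, is merely to state the invariant $V(R)\subseteq M\cup C$ precisely across the recursive calls; the degree-monotonicity argument itself is immediate from the definitions and requires no further combinatorial work.
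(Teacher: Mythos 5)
Your proof is correct: the invariant $V(R)\subseteq M\cup C$ for any $(k,r)$-core reachable from the current search node, combined with monotonicity of $deg(u,\cdot)$ under subgraph inclusion, is exactly the argument the paper implicitly relies on (the paper states this pruning rule as an immediate consequence of the definition of a \krc and gives no explicit proof). Your write-up simply makes that omitted reasoning explicit, including the completeness point that discarding $u$ is equivalent to committing to the shrink branch, so nothing further is needed.
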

\begin{theorem}
\label{the:sim_prune}
\textbf{Similarity based Pruning}. We can discard a vertex $u$ in $C$ if $\mathop{DP}(u, M)>0$.
\end{theorem}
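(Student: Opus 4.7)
The plan is to argue that every descendant leaf of the ``expand'' branch for $u$ is guaranteed to fail the similarity check, so that branch emits no \krc at all; discarding $u$ from $C$ (i.e., executing only the ``shrink'' branch) therefore loses nothing.

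The first step I would establish is a monotonicity invariant on $M$ in Algorithm~\ref{alg:naive_enum}: along any root-to-leaf path in the search tree, $M$ only grows, because the expand branch adds a single vertex to $M$ while the shrink branch leaves $M$ untouched. In particular, if $v \in M$ witnesses $\mathop{DP}(u,M)>0$ (i.e., $\mathop{sim}(u,v) < r$), then $v$ remains in $M$ throughout every descendant state of the current node.

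Second, I would consider an arbitrary leaf reached by first taking the expand branch on $u$. At that leaf, $M$ contains both $u$ (added at the current step) and the dissimilar witness $v$ (retained by monotonicity), so the pair $(u,v)$ contributes at least one dissimilar pair to $M$, giving $\mathop{DP}(M) \geq 1$. The guard $\mathop{DP}(M)=0$ on Line~\ref{alg:nenum validate} therefore fails, so no \krc is output at this leaf. Ranging over all leaves below the expand branch, the entire subtree contributes nothing to $\mathcal{R}$, so skipping it preserves completeness. Since the pair of recursive calls on Line~\ref{alg:nenum expand}--\ref{alg:nenum shrink} exhausts all cases concerning $u$, skipping the expand call is equivalent to deleting $u$ from $C$ and recursing on the shrink branch alone.

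The main obstacle is really just fixing the semantics of the leaf output: one must be sure that any \krc emitted from a descendant state truly contains all current-state members of $M$ (so that $v$ cannot be ``dropped'' later to evade the dissimilarity), which follows immediately from the monotonicity invariant above together with the fact that the leaf emits subgraphs of the final $M$, not of the current $M$. Once that is in hand, no further case analysis is needed, and the theorem follows.
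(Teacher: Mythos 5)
Your argument is correct and matches the paper's own (implicit) justification: the paper states this rule without a written proof, as an immediate consequence of the definition of a \krc together with the fact that any core derived from the current search node contains all of $M$, so a vertex dissimilar to some member of $M$ can never survive the similarity constraint. One small imprecision in your closing paragraph: a leaf emits the \emph{connected components} of the final $M$, which need not each contain every current member of $M$ --- but this does not matter, because (as your second paragraph already uses) the guard $\mathop{DP}(M)=0$ is evaluated on the whole of $M$, so a leaf containing the dissimilar pair emits nothing at all.
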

\vspace{0.5mm}

\noindent \textbf{Candidate Pruning Algorithm.} If a chosen vertex $u$ is extended to $M$ (i.e., to the expand branch),
we first apply the similarity pruning rule (Theorem~\ref{the:sim_prune}) to exclude vertices in $C$
which are dissimilar to $u$.
Otherwise, none of the vertices will be discarded by the similarity constraint when we follow the shrink branch.
Due to the removal of the vertices from $C$ (expand branch) or $u$ (shrink branch),
we conduct structure based pruning by computing the $k$-core for vertices in $M \cup C$.
Note that the search terminates if any vertex in $M$ is discarded.

It takes at most $O(|C|)$ time to find dissimilar vertices of $u$ from $C$.
Due to the \kc computation, the structure based pruning takes linear time to the number of
edges in the induced graph of $M \cup C$.



\begin{figure*}
	\begin{minipage}[b]{\linewidth}
		\centering
		\includegraphics[width=0.22\linewidth]{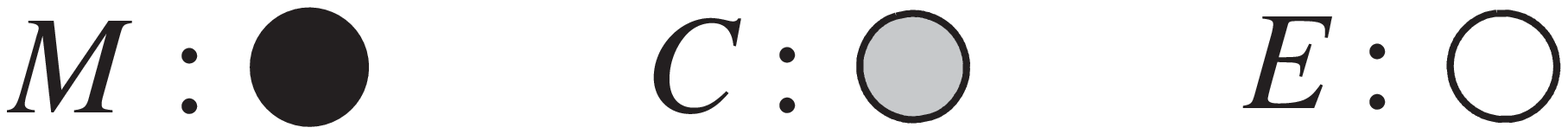}%
        \vspace{-4mm}
	\end{minipage}

  \subfigure[Pruning Candidates]{
          \includegraphics[width=0.24\linewidth]{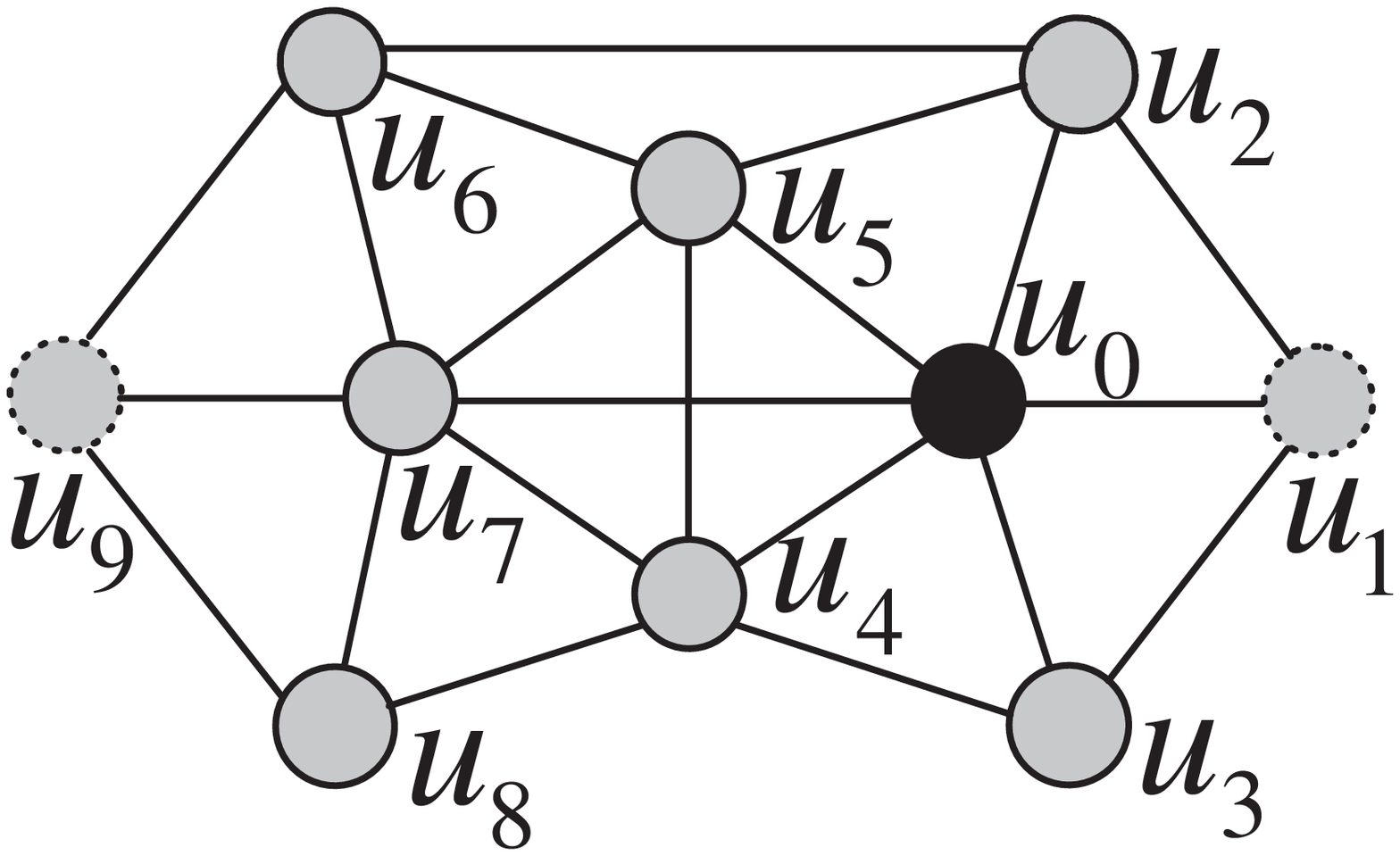}}
  \subfigure[Retaining Candidates]{
          \includegraphics[width=0.24\linewidth]{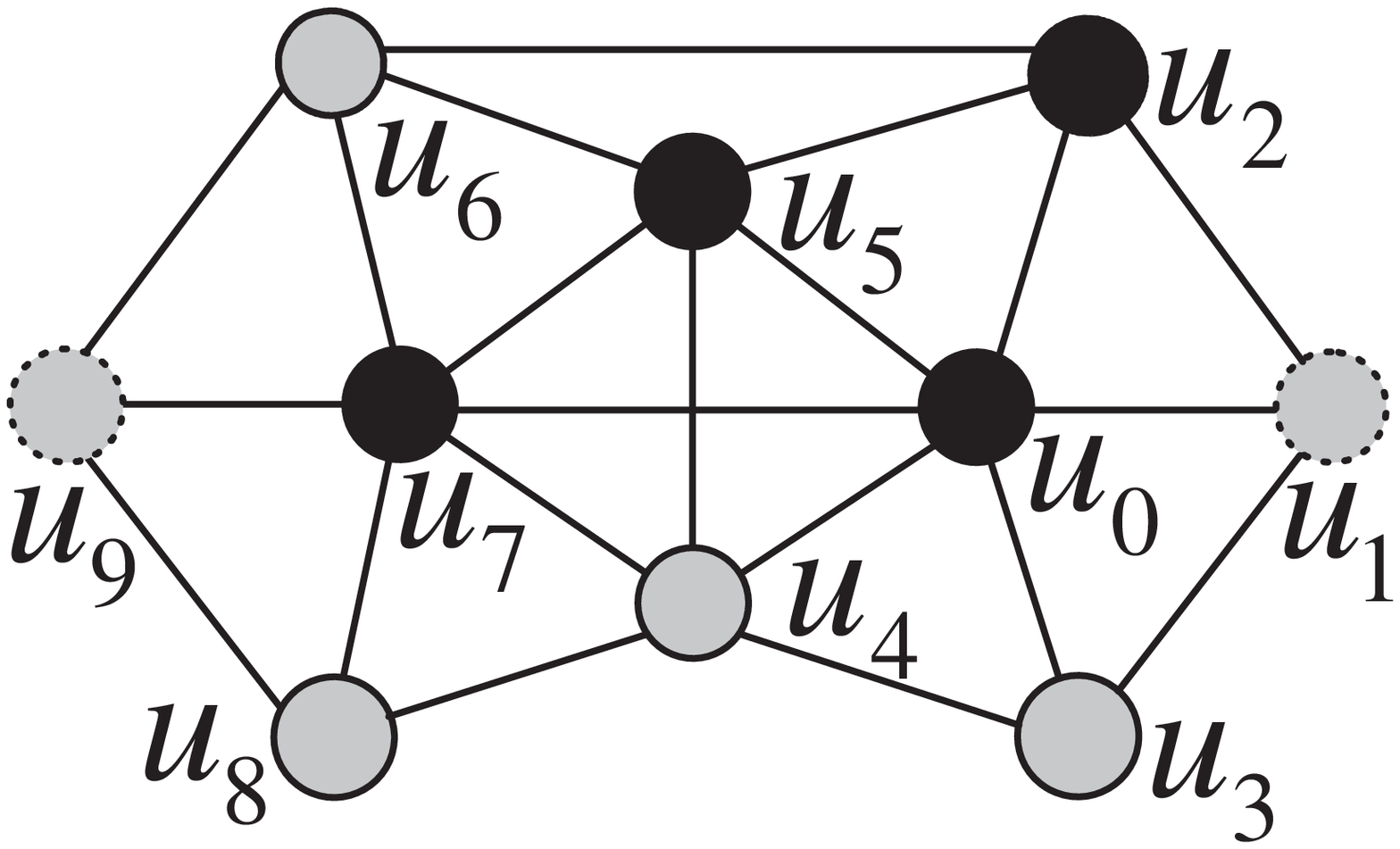}}
  \subfigure[Early Termination]{
          \includegraphics[width=0.24\linewidth]{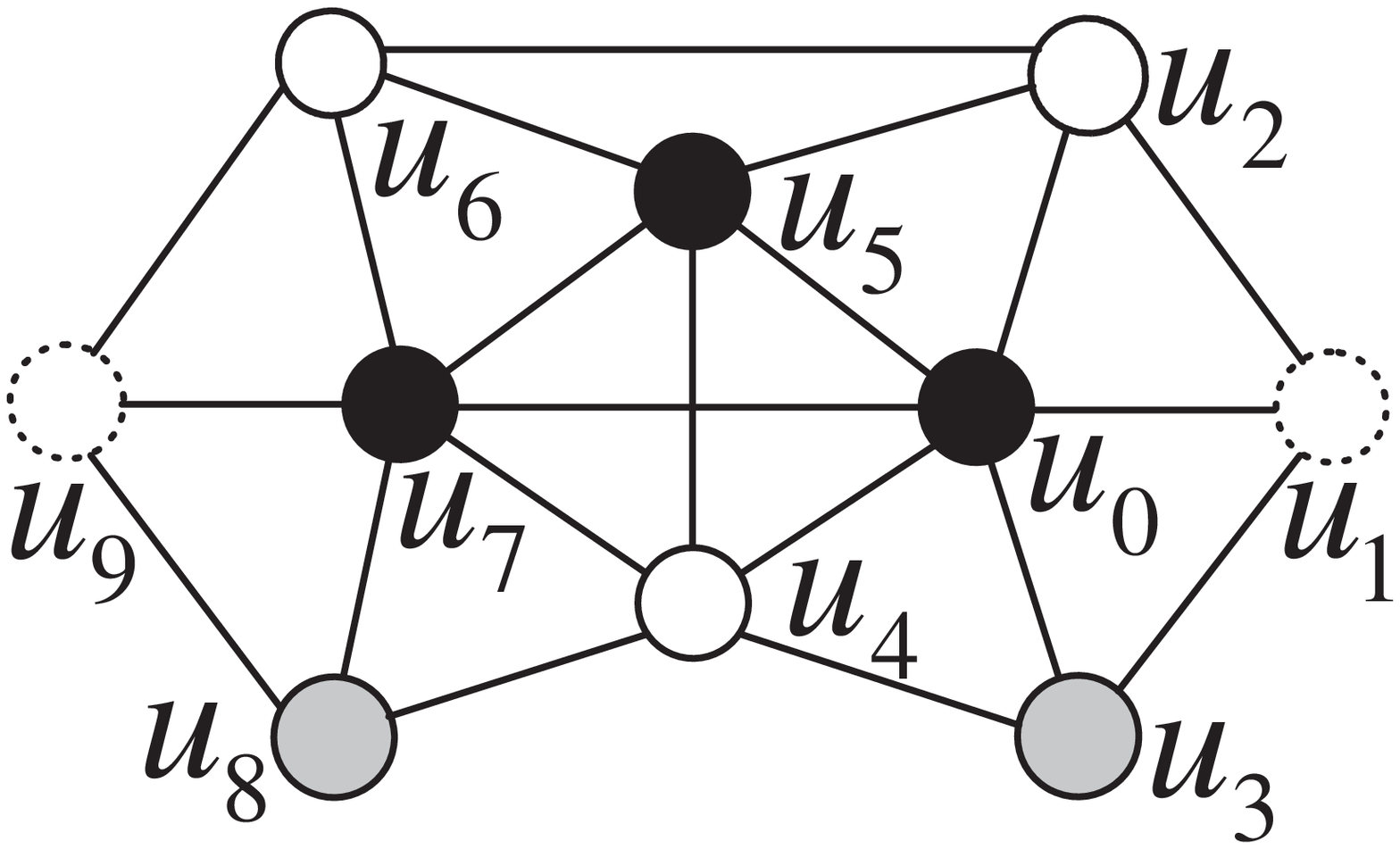}}
  \subfigure[Checking Maximals]{
          \includegraphics[width=0.24\linewidth]{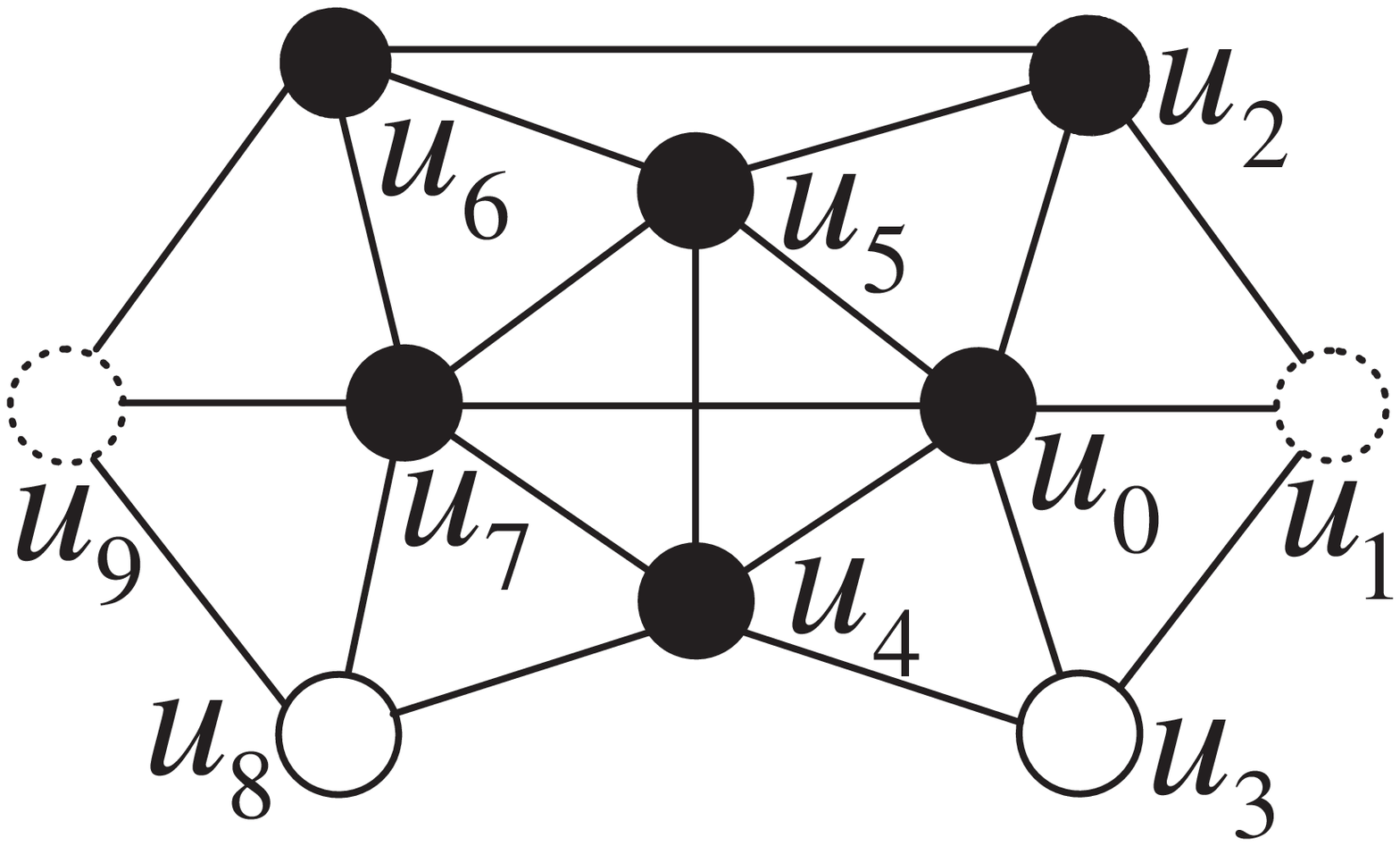}}

\vspace{-3mm}
\caption{Examples for Techniques}
\label{fig:enumerate:examples}
\vspace{-3mm}
\end{figure*}

\begin{example}
In Figure~\ref{fig:enumerate:examples} (a), we have $M = \{u_0 \}$
and $C=$ $\{u_1, \ldots, u_9 \}$. In our running examples, $u_1$ and $u_9$ is the only dissimilar pair.
Suppose $u_1$ is chosen from $C$,
following the expand branch, $u_1$ will be extended to $M$ and then $u_9$ will be pruned due to the similarity constraint.
Then we need remove $u_8$ as $deg(u_8, M \cup C)<3$. Thus, $M = \{u_0, u_1\}, C = \{u_2, \ldots, u_7\}$.
Regarding the shrink branch, $u_1$ is explicitly discarded,
which in turn leads to the deletion of $u_3$ due to structure constraint. Thus, $M = \{u_0\}, C = \{u_2, u_4, \ldots, u_9\}$.
\end{example}

After applying the candidate pruning, following two important invariants always hold at each search node.\vspace{2mm}

\noindent \textbf{Similarity Invariant.} We have
\begin{equation}
\label{eqn:sim_invariant}
\mathop{DP}(u, M \cup C)=0 \text{~~for every vertex~} u\in M
\end{equation}
That is, $M$ satisfies similarity constraint regarding $M \cup C$.\vspace{1mm}\\
\noindent \textbf{Degree Invariant.} We have
\begin{equation}
\label{eqn:str_invariant}
deg_{min}(M \cup C) \geq k
\end{equation}
That is, $M$ and $C$ together satisfy the structure constraint.

\subsubsection{Retaining Candidates}
\label{subsubsec:retain cand}
In addition to explicitly pruning some non-promising vertices,
we may implicitly reduce the candidate size by not choosing some vertices from $C$.
In this paper, we say a vertex $u$ is \textit{similarity free} w.r.t $C$
if $u$ is similar to all vertices in $C$, i.e., $DP(u, C)=0$.
By $SF(C)$ we denote the set of similarity free vertices in $C$.


\begin{theorem}
\label{the:retain}
Given that the pruning techniques are applied in each search step, we do not need to choose vertices from $SF(C)$ on both expand and shrink branches.
Moreover, $M \cup C$ is a \krc if we have $C=SF(C)$.
\end{theorem}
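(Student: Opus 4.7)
My plan is to split the claim into two parts: the ``moreover'' clause asserting $M\cup C$ is a \krc whenever $C=SF(C)$, and the main claim that branching on a vertex $u\in SF(C)$ is unnecessary. The first part is a direct verification. For the similarity constraint on $M\cup C$, I split into cases: if $x\in M$, the Similarity Invariant applied at $x$ yields similarity to any other vertex of $M\cup C$; if $x,y\in C$, the hypothesis $x\in SF(C)$ gives $DP(x,C)=0$, so $sim(x,y)\geq r$. The structure constraint is immediate from the Degree Invariant, and connectivity is inherited from the connected $k$-core component that seeded the branch.

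For the main claim, I first observe that the expand branch on any $u\in SF(C)$ is inert: similarity-based pruning has nothing to delete (no vertex of $C$ is dissimilar to $u$), and structure-based pruning is trivial because moving $u$ from $C$ to $M$ leaves $M\cup C$ unchanged. So simply leaving $u$ in $C$ produces the same working set as expanding, and the real content of the theorem is that the shrink branch is also redundant. I plan to establish this via a search-state invariant: for every maximal \krc $R$ reachable from the current node (i.e.\ $M\subseteq R\subseteq M\cup C$), branching only on $v\in C\setminus SF(C)$ in the direction consistent with $R$ preserves $M\subseteq R\subseteq M\cup C$. Pruning never evicts a member of $R$, since each $w\in R$ already satisfies $deg(w,R)\geq k$ and is similar to every other element of $R\subseteq M\cup C$, so neither pruning rule can fire on $w$. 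Iterating, the modified search reaches a leaf with $C=SF(C)$; by the first part $M\cup C$ is then a \krc containing $R$, and maximality of $R$ forces $R=M\cup C$, so $R$ is output.

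The main obstacle I anticipate is the apparent corner case in which some $u\in SF(C)$ does not belong to $R$, which would seem to let the modified search output the strictly larger set $R\cup\{u\}$ instead of $R$. I resolve it by showing the case is vacuous. Since $u$ is similar to every element of $M\cup C\supseteq R$, the only obstruction to $R\cup\{u\}$ being a \krc is $deg(u,R)<k$; but then structure-based pruning deletes $u$ from $C$ as soon as the candidate set has contracted enough that $deg(u,M\cup C)<k$, which must happen before the leaf if $M$ is to converge to $R$. If instead $u$ survived all the way to a leaf with $C=SF(C)$, then $M\cup C\supseteq R\cup\{u\}$ would itself be a \krc strictly containing $R$, contradicting the maximality of $R$. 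Either horn of the dichotomy is consistent with the invariant above, so the modified search still outputs every maximal \krc, which is exactly what we need.
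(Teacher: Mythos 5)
Your proof is correct, and your treatment of the ``moreover'' clause coincides with the paper's (similarity of $M$-vertices from the Similarity Invariant, similarity within $C$ from $DP(u,C)=0$, structure from the Degree Invariant). For the main claim you take a genuinely more explicit route. The paper compares the two child states directly: writing $(M_1,C_1)$ for the expand child and $(M_2,C_2)$ for the shrink child of a vertex $u\in SF(C)$, it observes that expansion discards nothing while shrinking may discard more, hence $M_2\subset M_1$ and $C_2\subseteq C_1$, and from this asserts $\mathcal{R}(M_2,C_2)\subseteq\mathcal{R}(M_1,C_1)$, concluding that the shrink branch is useless. That subset assertion is exactly the delicate point you isolate: a maximal \krc $R$ with $u\notin R$ would, under literal branching, be produced only in the shrink subtree, so one still owes an argument that $R$ is recovered when $u$ is merely retained in $C$. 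Your reachability invariant ($M\subseteq R\subseteq M\cup C$ preserved along the $R$-consistent path, with neither pruning rule able to evict a member of $R$), combined with the dichotomy for $u\in SF(C)\setminus R$ --- either structure-based pruning eventually removes $u$, or $u$ survives to a leaf where $M\cup C\supseteq R\cup\{u\}$ is a \krc, contradicting the maximality of $R$ --- supplies precisely that missing justification. In short, you prove the same statement by verifying correctness of the modified search directly, which closes a corner case the paper's shorter proof buries inside an unproved containment of result sets; the paper's version buys brevity, yours buys rigor.
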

\begin{proof}
For every vertex $u \in SF(C)$, we have $DP(u, M \cup C)$ $= 0$
due to the similarity invariant of $M$ (Equation~\ref{eqn:sim_invariant}) and the definition of $SF(C)$.
Let $M_1$ and $C_1$  denote the corresponding chosen set and candidate after
$u$ is chosen for expansion. Similarly, we have $M_2$ and $C_2$ if
$u$ is moved to the shrink branch.
We have $M_2 \subset M_1$ and $C_2 \subseteq C_1$,
because there are no discarded vertices when $u$ is extended to $M$
while some vertices may be eliminated due to the removal of $u$ in the shrink branch.
This implies that $\mathcal{R}(M_2,C_2) \subseteq \mathcal{R}(M_1,C_1)$.
Consequently, we do not need to explicitly discard $u$ as the shrink branch of $u$ is useless.
Hence, we can simply retain $u$ in $C$ in the following computation.

However, $C=SF(C)$ implies every vertex $u$ in $M \cup C$
satisfies the similarity constraint.
Moreover, $u$ also satisfies the structure constraint due to the degree invariant (Equation~\ref{eqn:str_invariant}) of $M \cup C$. Consequently, $M \cup C$ is a \krc.
\end{proof}

Note that a vertex $u \in SF(C)$ may be discarded in the following search due to the strutural constraint.
Otherwise, it is moved to $M$ when the condition $SF(C)=C$ holds.
For each vertex $u$ in $C$, we can update $DP(u,C)$ in a passive way when
its dissimilar vertices are eliminated from the computation.
Thus it takes $O(n_d)$ time in the worst case where $n_d$ denote the number of dissimilar pairs in $C$.

\begin{remark}
With similar rationale, we can move a vertex $u$ directly from $C$ to $M$
if it is similarity free (i.e., $u \in SF(C)$) and is adjacent to at least $k$ vertices in $M$.
As this validation rule is trivial, it will be used in this paper without further mention.
\end{remark}

\begin{example}
In Figure~\ref{fig:enumerate:examples} (b), suppose we have $M = \{u_0, u_2, u_5, u_7\}$, $C =$ $\{u_1, u_3,u_4, u_6, u_8, u_9 \}$. $u_1$ and $u_9$ is the only dissimilar pair.
We have $SF(C)=$ $\{u_3, u_4,$ $u_6, u_8 \}$ in which the vertices will not be chosen by Theorem~\ref{the:retain}. If we choose $u_1$ in the expand branch, the search will be terminated by $SF(C) = C = \{u_3, u_4, u_6\}$; if in the shrink branch, the search will be terminated by $SF(C) = C = \{u_4, u_6, u_8, u_9\}$.
Note that we may directly move $\{u_4, u_6\}$ to $M$ since $deg(u_4, M) \geq 3$ and $deg(u_6, M) \geq 3$.
\end{example}

\subsection{Early Termination}
\label{subsec:enum_term}

\noindent \textit{Trivial Early Termination.} There are two trivial early termination rules.
As discussed in Section~\ref{subsec:enum_prune},
we immediately terminate the search if any vertex in $M$ is discarded due to the structure constraint.
We also terminate the search if $M$ is disconnected to $C$.
Both these stipulations will be applied in the remainder of this paper without further mention.

In addition to identifying the subtree that cannot derive any \krc,
we further reduce the search space by identifying the subtrees that cannot lead to any \textit{maximal} \krc.
By $E$, we denote the related excluded vertices set for a search node of the tree,
where the discarded vertices during the search are retained if they are similar to $M$,
i.e., $DP(v, M)=0$ for every $v \in E$ and $E~\cap$ $(M \cup C)$ $=\emptyset$.
We use $SF_{C}(E)$ to denote the similarity free vertices in $E$ w.r.t the set $C$;
that is, $DP(u,C)$ $= 0$ for every $u \in SF_{C}(E)$.
Similarly, by $SF_{C \cup E}(E)$ we denote the similarity free vertices in $E$ w.r.t
the set $E \cup C$.

\begin{theorem}
\label{the:early_terminate}
\textbf{Early Termination.} We can safely terminate the current search if one of the following two conditions hold:\\
($i$) there is a vertex $u \in SF_{C}(E)$ with $deg(u,M) \geq k$;\\
($ii$) there is a set $ U \subseteq SF_{C\cup E}(E)$,
such that $deg(u, M \cup U) \geq k$ for every vertex $u \in U$.
\end{theorem}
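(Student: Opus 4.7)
The plan is to prove both conditions by a contrapositive style argument: I will show that if either condition (i) or (ii) holds, then any $(k,r)$-core $R$ that could be produced in the current subtree fails to be maximal because it is properly contained in a strictly larger $(k,r)$-core. Because the enumeration only emits maximal $(k,r)$-cores, exhibiting this enlargement justifies terminating the search. The starting observation is that every $(k,r)$-core $R$ reachable from the current node satisfies $M \subseteq R \subseteq M \cup C$, and that by the definition of $E$, every $v\in E$ is similar to all vertices of $M$, i.e., $DP(v,M)=0$.

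For condition (i), I would pick the witness $u \in SF_C(E)$ and argue that $R' := R \cup \{u\}$ is still a $(k,r)$-core. Similarity: $u$ is similar to every vertex in $M$ since $u\in E$, and similar to every vertex in $C$ (and thus to $R\setminus M$) since $u\in SF_C(E)$; together with the similarity invariant on $R$ itself, $R'$ satisfies the similarity constraint. Degree: $deg(u,R') \geq deg(u,M) \geq k$ by hypothesis, and for any $v\in R$, adding $u$ only increases $deg(v,\cdot)$, so the structure constraint is preserved. Connectivity follows because $u$ has at least $k\geq 1$ neighbors in $M\subseteq R$. Hence $R\subsetneq R'$ is a $(k,r)$-core and $R$ cannot be maximal.

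For condition (ii), I would take the witness set $U\subseteq SF_{C\cup E}(E)$ and show $R' := R \cup U$ is a $(k,r)$-core. Similarity is the delicate part: each $u\in U$ is similar to $M$ (since $U\subseteq E$) and to $C\supseteq R\setminus M$ (since $u$ is similarity-free w.r.t. $C\cup E$); crucially, every pair $u,u'\in U$ is also similar because $U\subseteq E$ and $u$ is similarity-free w.r.t. $C\cup E\supseteq U$, so adding $U$ as a block introduces no new dissimilar pair. For the degree constraint, $deg(u,R') \geq deg(u,M\cup U) \geq k$ by hypothesis since $M\subseteq R$, and vertices already in $R$ can only gain neighbors from $U$. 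Connectivity is immediate from $deg(u,M\cup U)\geq k$, forcing each $u\in U$ to have a neighbor in $M\cup U$, hence in $R'$. So $R\subsetneq R'$ is a $(k,r)$-core, and $R$ is again not maximal.

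The main obstacle, as I see it, is not any deep combinatorial argument but rather the careful bookkeeping of the four sets $M$, $C$, $E$, $U$ and the two flavors of similarity-freeness $SF_C(\cdot)$ versus $SF_{C\cup E}(\cdot)$. In particular, one must be careful that condition (i) is \emph{not} a special case of condition (ii) with $|U|=1$, because $SF_C(E)\not\subseteq SF_{C\cup E}(E)$ in general; the single-vertex case gets away without requiring similarity between the witness and the rest of $E$, whereas the set version needs pairwise similarity inside $U$, which is precisely why the stronger $SF_{C\cup E}(E)$ is used there. Once this distinction is kept straight, the rest is a direct verification of the structure and similarity constraints on $R\cup\{u\}$ and $R\cup U$.
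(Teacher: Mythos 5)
Your proof is correct and takes essentially the same route as the paper's: show that any \krc $R$ derivable from the current node (so $M \subseteq R \subseteq M \cup C$) can be strictly enlarged to $R \cup \{u\}$ or $R \cup U$ using the similarity invariant, membership in $E$, and the degree hypotheses, so the subtree cannot yield a maximal \krc. Your write-up is in fact slightly more explicit than the paper's one-sentence treatment of case ($ii$) --- in particular the observation that pairwise similarity inside $U$ is exactly why $SF_{C\cup E}(E)$ rather than $SF_{C}(E)$ is required there, and that ($i$) is therefore not subsumed by ($ii$) with $|U|=1$.
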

\begin{proof}
($i$) We show that every \krc $R$ derived from current $M$ and $C$ (i.e., $R\subseteq \mathcal{R}(M, C)$)
can reveal a larger \krc by attaching the vertex $u$.
For any $R \in \mathcal{R}$, we have $deg(u, R) \geq k$ because $deg(u, M) \geq k$ and $M \subseteq V(R)$.
$u$ also satisfies the similarity constraint based on the facts that $u \in SF_{C}(E)$ and
$R \subseteq M \cup C$. Consequently, $V(R) \cup \{u\}$ is a \krc.
($ii$) The correctness of condition ($ii$) has a similar rationale.
The key idea is that for every $u \in U$, $u$ satisfies the structure constraint because $deg(u, M \cup U) \geq k$;
and $u$ also satisfies the similarity constraint because $U \subseteq$ $SF_{E\cup C}(E)$
implies that $DP(u, U \cup R)$ $=0$.
\end{proof}

\vspace{0.5mm}
\noindent \textbf{Early Termination Check.} It takes $O(|E|)$ time to check the condition ($i$) of Theorem~\ref{the:early_terminate} with one scan of the vertices in $SF_{C}(E)$
Regarding condition ($ii$), we may conduct $k$-core computation on $M \cup SF_{C\cup E}(E)$
to see if a subset of $SF_{C\cup E}(E)$ is included in the $k$-core.
The time complexity is $O(n_e)$ where $n_e$ is the number of edges in the induced graph of $M \cup C \cup E$.

\begin{example}
In Figure~\ref{fig:enumerate:examples} (c), suppose we have $M = \{u_0, u_5, u_7\}$,
$C = \{u_3, u_8\}$ and $E = \{u_1, u_2, u_4, u_6, u_9\}$.
Recall that $u_1$ and $u_9$ is the only dissimilar pair.
Therefore, we have $SF_{C}(E)=\{u_1, u_2, u_4, u_6, u_9\}$
and $SF_{C\cup E}(E)=\{u_2, u_4, u_6\}$.
According to Theorem~\ref{the:early_terminate} ($i$), the search is terminated because there is a vertex $u_4 \in SF_{C}(E)$ with $deg(u_4,M) \geq 3$.
We may also terminate the search according to Theorem~\ref{the:early_terminate} ($ii$)
because we have $U = \{u_2, u_6\} \subset SF_{C\cup E}(E)$ such that $deg(u_2,M \cup U) \geq 3$ and $deg(u_6,M \cup U) \geq 3$.
\end{example}

\subsection{Checking Maximal}
\label{subsec:enum_check}
In Algorithm~\ref{alg:naive} (Lines~\ref{alg:naive_check_s}-\ref{alg:naive_check_e}),
we need to validate the maximal property based on all \krcs of $G$.
The cost significantly increases with both the number and the average size of the \krcs.
Similar to the early termination technique, we use the following rule to check the maximal property.

\begin{theorem}
\label{the:maximal_check}
\textbf{Checking Maximals.}
Given a \krc $R$, $R$ is a maximal \krc if there doesn't exist a non-empty set $U \subseteq E$
such that $R \cup U$ is a \krc, where $E$ is the excluded vertices set
when $R$ is generated.
\end{theorem}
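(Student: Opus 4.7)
My plan is to argue the contrapositive: if $R$ is not maximal, then some non-empty $U \subseteq E$ already witnesses this via $R \cup U$ being a \krc. Concretely, suppose there is a \krc $R'$ with $R \subsetneq R'$ and set $U := V(R') \setminus V(R)$, which is non-empty. Since $R \cup U = R'$ is by assumption a \krc, the entire argument collapses to showing the containment $U \subseteq E$.

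Fix an arbitrary $v \in U$. I need to verify the two defining properties of membership in $E$: (a)~$DP(v, M) = 0$, and (b)~$v$ was actually discarded by the current search. For (a), since $R'$ satisfies the similarity constraint, $v$ is similar to every other vertex of $R'$; in particular $DP(v, R) = 0$, and because $M \subseteq R$ at the moment $R$ is reported, this gives $DP(v, M) = 0$. For (b), observe that $R$ is emitted in exactly two configurations: at a leaf of the search tree with $C = \emptyset$ and $R = M$, or at an internal node where Theorem~\ref{the:retain} lets us declare $R = M \cup C$. In both configurations $v \notin R$ forces $v \notin M \cup C$. Moreover $v$ was not eliminated by the initial preprocessing either: since $R'$ is a \krc of $G$, each of its vertices survives the removal of dissimilar edges on Line~\ref{alg:naive_remove} and the subsequent \kc computation on Line~\ref{alg:naive_getS}, and $R'$ is connected and shares vertices with $R$, so $v$ lies in the very component currently being searched. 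Hence $v$ was touched by the search and then dropped from the working set, which together with (a) places $v \in E$.

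The subtle step, and the one I expect to need the most care, is ensuring that $v$ is still in $E$ at the node where $R$ is reported, rather than only at the moment it was originally discarded. The maintenance rule of $E$ evicts a vertex only when a freshly added element of $M$ is dissimilar to it, so an element currently in $E$ is precisely a discarded vertex that remains similar to the present $M$. Since (a) already shows that $v$ is similar to the final $M$, and $M$ only grows monotonically down the branch leading to the reporting node, $v$ cannot have been evicted from $E$ at any ancestor step. Combining this observation with the containment argument above yields $U \subseteq E$, contradicting the hypothesis that no such $U$ exists and establishing Theorem~\ref{the:maximal_check}.
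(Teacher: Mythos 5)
Your proposal is correct and takes essentially the same route as the paper's proof: argue the contrapositive by showing that any \krc $R'$ strictly containing $R$ satisfies $V(R')\setminus V(R) \subseteq E$, hence $R' \subseteq E \cup R$, so the non-existence of a suitable $U \subseteq E$ certifies maximality. You simply spell out the bookkeeping the paper leaves implicit (why each vertex of $R'\setminus R$ must have been discarded rather than never reached, why it is similar to the final $M$, and why it is never evicted from $E$ along the branch), which is a faithful elaboration rather than a different argument.
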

\begin{proof}
$E$ contains all discarded vertices that are similar to $M$ according to the definition of the excluded vertices set.
For any \krc $R'$ which fully contains $R$, we have $R' \subseteq E \cup R$
because $R = M$ and $C = \emptyset$,
i.e., the vertices outside of $E \cup R$ cannot contribute to $R'$.
Therefore, we can safely claim that $R$ is maximal if we cannot find $R'$ among $E \cup R$.
\end{proof}

\begin{example}
In Figure~\ref{fig:enumerate:examples} (d), we have $C$ $=\emptyset$,
$M$ $= \{u_0, u_2,$ $u_4, u_5,$ $u_6, u_7\}$ and $E=\{u_1,$ $u_3, u_8, u_9\}$.
Here $M$ is a \krc, but we can further extend $u_1$ and $u_3$ to $M$,
and come up with a larger \krc. Hence, $M$ is not a maximal \krc.
\end{example}

Since the maximal check algorithm is similar to our advanced enumeration algorithm,
we delay providing the details of this algorithm to Section~\ref{subsec:enumerate}.

\begin{remark}
The early termination technique can be regarded as a lightweight version of the maximal check,
which attempts to terminate the search before a \krc is constructed.
\end{remark}

\subsection{Advanced Enumeration Method}
\label{subsec:enumerate}
\begin{algorithm}[htb]
\SetVline 
\SetFuncSty{textsf}
\SetArgSty{textsf}
\small
\caption{\bf AdvancedEnum($M$, $C$, $E$)}
\label{alg:advancedenu}
\Input
{
   $M :$ chosen vertices set, $C :$ candidate vertices set,
   $E :$ relevant excluded vertices set
}
\Output{$\mathcal{R}:$ maximal \krcs}
\State{Update $C$ and $E$ based on candidate pruning techniques (Theorem~\ref{the:str_prune} and Theorem~\ref{the:sim_prune})}
\label{alg:enu_cand_prune}
\State{\textbf{Return} \textbf{If} current search can be terminated~(Theorem~\ref{the:early_terminate})}
\label{alg:enu_term_1}
\If{$C=SF(C)$ (Theorem~\ref{the:retain})}
{
	\label{alg:enu_check_sf}
	\State{$M:=$ $M \cup C$}
	\label{alg:enu_check_yes1}
\State{$\mathcal{R}:= \mathcal{R} \cup M$ \textit{If} \textbf{CheckMaximal}($M$, $E$)~(Theorem~\ref{the:maximal_check})}
	\label{alg:enu_check_yes2}
}
\Else
{	
	\State{$u \leftarrow$ a vertex in $C \setminus SF(C)$ (Theorem~\ref{the:retain})}
	\label{alg:enu_sub_1}
	\State{\textbf{AdvancedEnum}($M \cup u$, $C \setminus u$, $E$)}
	\label{alg:enu_sub_2}
	\State{\textbf{AdvancedEnum}($M$, $C \setminus u$, $E \cup u$)}	
	\label{alg:enu_sub_3}
}
\end{algorithm}

In Algorithm~\ref{alg:advancedenu}, we present the pseudo code for our advanced enumeration algorithm which integrates
the techniques proposed in previous sections.
We first apply the candidate pruning algorithm outlined in Section~\ref{subsec:enum_prune}
to eliminate some vertices based on structure/similarity constraints.
Along with $C$, we also update $E$
by including discarded vertices and removing the ones that are not similar to $M$.
Line~\ref{alg:enu_term_1} may then terminate the search based on our early termination rules.
If the condition $C=SF(C)$ holds,
$M \cup C$ is a \krc according to Theorem~\ref{the:retain},
and we can conduct the maximal check (Lines~\ref{alg:enu_check_sf}-\ref{alg:enu_check_yes2}).
Otherwise, Lines~\ref{alg:enu_sub_1}-\ref{alg:enu_sub_3} choose one vertex from $C\setminus SF(C)$
and continue the search following two branches.

\begin{algorithm}[htb]
\SetVline 
\SetFuncSty{textsf}
\SetArgSty{textsf}
\small
\caption{\bf CheckMaximal($M$, $C$)}
\label{alg:checkMaximal}
\Input
{
   $M :$ chosen vertices, $C :$ candidate vertices\\
}
\Output{$isMax :$ true if $M$ is a maximal \krc}
\State{Update $C$ based on similarity and structure constraint}
\If{$M$ is a \krc}
{
	\State{Exit the algorithm with $isMax=false$ \textit{If} $|\mathbf{M^*}| <$  $|M|$}
}
\ElseIf{$|C| > 0$}
{	
 	\State{$u \leftarrow$ a vertex in $C$}
 	\State{\textbf{CheckMaximal}($M \cup u$, $C \setminus u$)}
 	\State{\textbf{CheckMaximal}($M$, $C \setminus u$ )}	
}
\end{algorithm}

\noindent \textbf{Checking Maximal Algorithm.}
According to Theorem~\ref{the:maximal_check}, we need to check whether some of the vertices in $E$
can be included in the current \krc, denoted by $M^*$.
This can be regarded as the process of further exploring the search tree by treating $E$ as candidate $C$ (Line~\ref{alg:enu_check_yes2} of Algorithm~\ref{alg:advancedenu}).
Algorithm~\ref{alg:checkMaximal} presents the pseudo code for our maximal check algorithm.\\

%

To enumerate all the maximal \krcs of $G$, we need to replace the \textbf{NaiveEnum} procedure (Line~\ref{alg:naive_enum_1})
in Algorithm~\ref{alg:naive} using our advanced enumeration method (Algorithm~\ref{alg:advancedenu}).
Moreover, the naive checking maximals process (Line~\ref{alg:naive_check_s}-\ref{alg:naive_check_e}) is not necessary
since checking maximals is already conducted by our enumeration procedure (Algorithm~\ref{alg:advancedenu}). Since the search order for vertices does not affect the correctness, the algorithm correctness can be immediately guaranteed based on above analyses. It takes $O(n_e + n_d)$ times for each search node in the worst case, where $n_e$ and $n_d$ denote the total number of edges and dissimilar pairs in $M \cup C \cup E$.

\section{Finding the Maximum (k,r)-core}
\label{sec:max}

In this section, we first introduce the upper bound based algorithm to find the maximum \krc.
Then a novel ($k,k'$)-core approach is proposed to derive tight upper bound of the \krc size.

\subsection{Algorithm for Finding the Maximum One}
\label{subsec:max_upperbounds}

Algorithm~\ref{alg:findmaximum} presents the pseudo code for finding the maximum \krc,
where $R$ denotes the largest \krc seen so far.
There are three main differences compared to the enumeration algorithm (Algorithm~\ref{alg:advancedenu}).
($i$) Line~\ref{alg:maximum_up} terminates the search if we find the current search is non-promising
based on the upper bound of the core size, denoted by \textit{KRCoreSizeUB}($M$,$C$).
($ii$) We do not need to validate the maximal property.
($iii$) Along with the order of visiting the vertices, the order of the two branches also matters
for quickly identifying large \krcs (Lines~\ref{alg:maximum_next_s}-\ref{alg:maximum_next_e}), which is discussed in Section~\ref{sec:order}.

\vspace{1mm}
\begin{algorithm}[htb]
\SetVline 
\SetFuncSty{textsf}
\SetArgSty{textsf}
\small
\caption{\bf FindMaximum($M$, $C$, $E$)}
\label{alg:findmaximum}
\Input
{
   $M :$ chosen vertices set , $C :$ candidate vertices set,
   $E :$ relevant excluded vertices set\\
}
\Output{$R :$ the largest \krc seen so far}
\State{Update $C$ and $E$; Early terminate if possible}
\If{ $KRCoreSizeUB$($M, C$) $> |R|$}
{
	\label{alg:maximum_up}
	\If{$C = SF(C)$}
	{
        \State{$R := M~\cup~C$}
	}
	\Else
	{
		\State{ $u \leftarrow$ choose a vertex in $C \setminus SF(C)$}
		\label{alg:maximum_next_s}
		\If {Expansion is preferred}
		{
			\State{\textbf{FindMaximum}($M \cup u$, $C \setminus u$, $E$)}
		    \State{\textbf{FindMaximum}($M$, $C \setminus u$, $E \cup u$)}
		}	
		\Else
		{
			\State{\textbf{FindMaximum}($M$, $C \setminus u$, $E \cup u$)}
			\State{\textbf{FindMaximum}($M \cup u$, $C \setminus u$, $E$)}		
			\label{alg:maximum_next_e}
        }
	}
}
\end{algorithm}

To find the maximum \krc in $G$, we need to replace the NaiveEnum procedure (Line~\ref{alg:naive_enum}) in Algorithm~\ref{alg:naive} with the method in Algorithm~\ref{alg:findmaximum},
and remove the naive maximal check section of Algorithm~\ref{alg:naive} (Line~\ref{alg:naive_check_s}-\ref{alg:naive_check_e}).
To quickly find a \krc with a large size, we start the algorithm from the subgraph $S$ which holds the
vertex with the highest degree.
The maximum \krc is identified when Algorithm~\ref{alg:naive} terminates.

\vspace{1mm}
\noindent \textbf{Algorithm Correctness.}
Since Algorithm~\ref{alg:findmaximum} is essentially an enumeration algorithm with an upper bound based pruning technique, the correctness of this algorithm is clear if the $KRCoreSizeUB$($M, C$) at Line~\ref{alg:maximum_up} is calculated correctly.

\vspace{1mm}
\noindent \textbf{Time Complexity.}
As shown in Section~\ref{sec:max:upperbounds}, we can efficiently compute the upper bound of core size in $O(n_e+n_s)$ time
where $n_s$ is the number of similar pairs w.r.t $M$ $\cup C$ $\cup E$. For each search node the time complexity of the maximum algorithm is same as that of the enumeration algorithm.

\subsection{Size Upper Bound of ($k$,$r$)-Core}
\label{sec:max:upperbounds}
We use $R$ to denote the \krc derived from $M \cup C$.
In this way, $|M|+|C|$ is clearly an upper bound of $|R|$.
However, it is very loose because it does not consider the similarity constraint.

Recall that $G'$ denotes a new graph that connects the similar vertices of $V(G)$, called \textit{similarity graph}.
By $J$ and $J'$, we denote the induced subgraph of vertices $M \cup C$ from graph $G$ and the similarity graph $G'$, respectively. Clearly, we have $V(J) = V(J')$.
Because $R$ is a \textit{clique} on the similarity graph $J'$ and the size of a $k$-clique is $k$, we can apply the maximum clique size estimation techniques to $J'$ to derive the upper bound of $|R|$.
Color and $k$-core based methods~\cite{DBLP:conf/icde/YuanQLCZ15} are two state-of-the-art techniques for maximum clique size estimation.

\vspace{1mm}
\noindent \textbf{Color based Upper Bound.} Let $c_{min}$ denote the minimum number of colors to \textit{color} the vertices in the similarity graph $J'$ such that every two adjacent vertices in $J'$ have different colors. Since a $k$-clique needs $k$ number of colors to be \textit{colored}, we have $|R| \leq c_{min}$. Therefore, we can apply graph coloring algorithms to estimate a small $c_{min}$~\cite{garey1976complexity}.

\vspace{1mm}
\noindent \textbf{\textit{k}-Core based Upper Bound.} Let $k_{max}$ denote the maximum $k$ value such that $k$-core of $J'$ is not empty.
Since a $k$-clique is also a ($k$-$1$)-core, this implies that we have $|R| \leq k_{max}+1$.
Therefore, we may apply the existing $k$-core decomposition approach~\cite{DBLP:journals/corr/cs-DS-0310049} to compute the maximal core number (i.e., $k_{max}$) on the similarity subgraph $J'$.

At the first glance, both the structure and similarity constraints are used in the above method
because $J$ itself is a $k$-core (structure constraint) and we consider the $k_{max}$-core of $J'$ (similarity constraint). The upper bound could be tighter by choosing the smaller one from color based upper bound and \kc based upper bound. Nevertheless, we observe that the vertices in $k_{max}$-core of $J'$ may not form a \kc on $J$ since we only have $J$ itself as a \kc. If so, we can consider $k_{max}$-$1$ as a tighter upper bound of $R$. Repeatedly, we have the largest $k_{max}$-$i$ as the upper bound such that the corresponding vertices form a \kc on $J$ and a $(k_{max}$-$i)$-core on $J'$. We formally introduce this $(k,k')$-core based upper bound in the following.

\vspace{1mm}
\noindent \textbf{(\textit{k,k'})-Core based Upper Bound.}
We first introduce the concept of \kkc to produce a tight upper bound of $|R|$.
Theorem~\ref{the:up_bound} shows that we can derive the upper bound for any possible \krc $R$
based on the largest possible $k'$ value, denoted by $k'_{max}$, from the corresponding \kkc.

\begin{definition} \textbf{($\mathbf{k},\mathbf{k'}$)-core.}
Given a set of vertices $U$, the graph $J$ and the corresponding similarity graph $J'$, let $J_U$ and $J'_U$ denote the induced subgraph by $U$ on $J$ and $J'$, respectively. If $deg_{min}(J_U) \geq k$ and $deg_{min}(J'_U) = k'$, $U$ is a \kkc of $J$ and $J'$.
\end{definition}

\begin{theorem}
\label{the:up_bound}
Given the graph $J$, the corresponding similarity graph $J'$, and the maximum \krc $R$ derived from $J$ and $J'$, if there is a \kkc on $J$ and $J'$ with the largest $k'$, i.e., $k'_{max}$, we have $|R|\leq k'_{max}+1$.
\end{theorem}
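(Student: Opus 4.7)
The plan is to exhibit $V(R)$ itself as a witness $(k,k')$-core with $k' = |R|-1$, and then obtain the bound $|R|-1 \le k'_{max}$ immediately from the maximality of $k'_{max}$. In other words, rather than reasoning about arbitrary cores inside $J$ and $J'$, I would show that the maximum $(k,r)$-core $R$ is already strong enough to satisfy both required degree conditions simultaneously, so it must be counted among the candidates whose maximum defines $k'_{max}$.

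Concretely, I would carry out two verifications in order. First, I would use the similarity constraint to argue that $V(R)$ induces a clique in the similarity graph: by definition of a $(k,r)$-core, $DP(u, R) = 0$ for every $u \in V(R)$, so every pair of vertices in $R$ is similar and therefore adjacent in $J'$. Consequently $J'_{V(R)}$ is a complete graph on $|R|$ vertices, and $deg_{min}(J'_{V(R)}) = |R|-1$. Second, I would use the structure constraint: since $R$ is a $(k,r)$-core in $G$ and $V(R) \subseteq V(J)$, we have $deg_{min}(J_{V(R)}) \geq k$. Combining the two conditions, $V(R)$ qualifies as a $(k, |R|-1)$-core on $J$ and $J'$ in the sense of the preceding definition.

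At this point the conclusion follows by the definition of $k'_{max}$: because $k'_{max}$ is the largest value of $k'$ for which some $(k,k')$-core exists on $J$ and $J'$, the witness just constructed forces $|R|-1 \le k'_{max}$, i.e., $|R| \le k'_{max}+1$.

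The main subtlety, and where I would spend the most care, is the mismatch between the stated definition (which fixes $deg_{min}(J'_U) = k'$ as an equality) and the natural monotone reading needed here. I would resolve this either by noting that the clique structure of $V(R)$ makes $deg_{min}(J'_{V(R)}) = |R|-1$ hold with equality (so the witness fits the definition exactly), or, more robustly, by reading $k'_{max}$ as the largest $k'$ such that some induced subgraph has minimum similarity-degree at least $k'$ while still being a $k$-core in $J$; under either reading the inequality $|R|-1 \le k'_{max}$ is forced. A secondary point worth noting is that $k'_{max}$ is well defined whenever $J$ itself contains a $(k,r)$-core, which is the only regime where the upper bound is invoked inside Algorithm~\ref{alg:findmaximum}.
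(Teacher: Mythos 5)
Your proposal is correct and follows exactly the paper's argument: the paper's one-line proof likewise observes that the maximum \krc $R$ is itself a \kkc with $k'=|R|-1$ (clique in $J'$ by the similarity constraint, minimum degree $\geq k$ in $J$ by the structure constraint), whence $|R|-1\leq k'_{max}$. Your expanded verification and the remark on the equality in the \kkc definition simply make explicit what the paper leaves implicit.
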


\begin{proof}
Based on the fact that a \krc $R$ is also a ($k$,$k'$)-core with $k'=|R|-1$ according to the definition of \krc,
the theorem is proven immediately.
\end{proof}
\begin{figure}[t]
\begin{center}
\begin{tabular}[t]{c}
      \subfigure[Graph ($J$)]{
      \includegraphics[width=0.47\columnwidth,height=2.6cm]{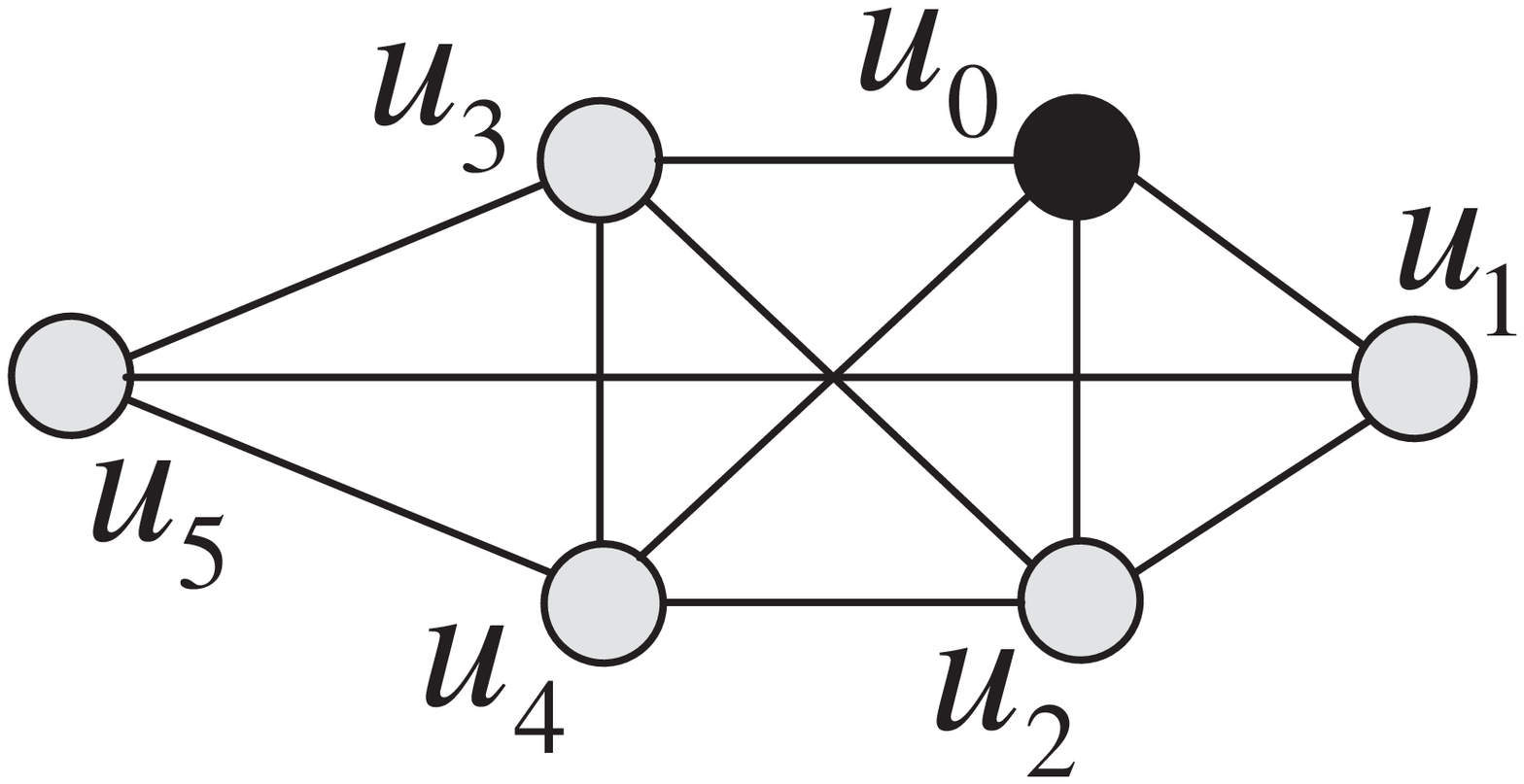}
      \label{fig:max:example structure graph}
     }
      \subfigure[Similarity Graph ($J'$)]{
      \includegraphics[width=0.47\columnwidth,height=2.6cm]{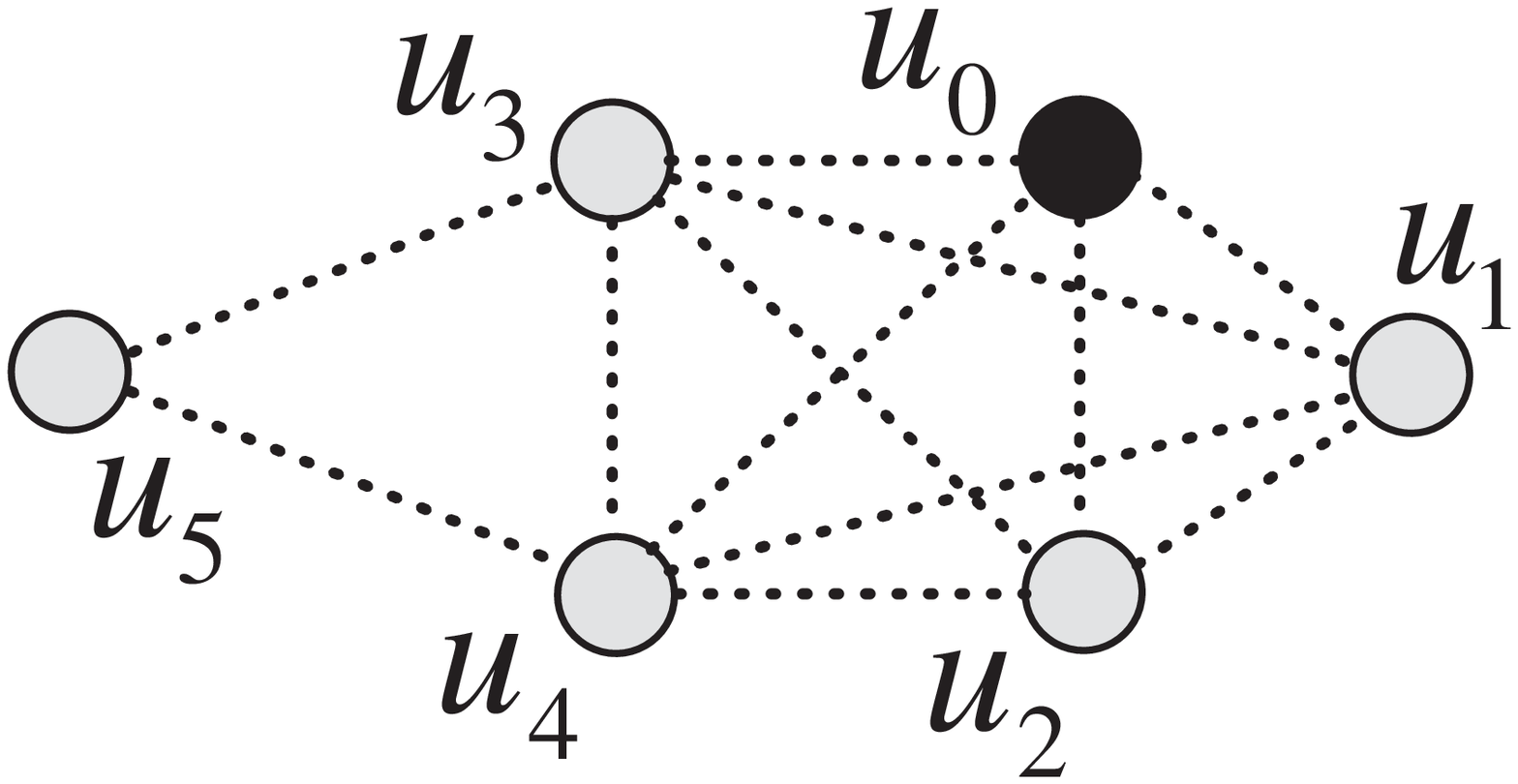}
      \label{fig:max:example similarity graph}
     }
\end{tabular}
\end{center}
\vspace{-7mm}
\caption{\small{Upper Bound Examples}}
\label{fig:max:example}
\end{figure}
\vspace{-3mm}
\begin{example}
In Figure~\ref{fig:max:example}, we have $k=3$, $M=\{u_0\}$ and $C=\{u_1, u_2, u_3, u_4, u_5\}$. Figure~$\ref{fig:max:example structure graph}$ shows the induced subgraph $J$ from $M \cup C$ on $G$
and Figure~\ref{fig:max:example similarity graph} shows the similarity graph $J'$ from $M \cup C$ on the similarity graph $G'$.
We need at least 5 colors to \textit{color} $J'$, so the color based upper bound is 5. By $k$-core decomposition on similarity graph $J'$, we get that the \kc based upper bound is $5$ since $k_{max}=4$.
Regarding the $(k,k')$-core based upper bound,
we can find $k'_{max} = 3$ because there is a $(3,3)$-core on $J$ and $J'$ with four vertices $\{u_0, u_2, u_3, u_4\}$,
and there is no other \kkc with a larger $k'$ than $k'_{max}$. Consequently, the \kkc based upper bound is $4$, which is tighter than $5$.
\end{example}

\subsection{Algorithm for ($k$,$k'$)-Core Upper Bound}
\label{sec:max:upperbounds algorithm}

Algorithm~\ref{alg:kkcorebound} shows the details of the \kkc based upper bound (i.e.,$k'_{max}$) computation, which conducts core decomposition~\cite{DBLP:journals/corr/cs-DS-0310049} on $J'$ with additional update which ensures the corresponding subgraph on $J$ is a \kc.
We use $deg[u]$ and $deg_{sim}[u]$ to denote the degree and similarity degree (i.e., the number of similar pairs from $u$)
of $u$ w.r.t $M \cup C$, respectively.
Meanwhile, $NB[u]$ (resp. $NB_{sim}[u]$) denotes the set of adjacent (resp. similar) vertices of $u$.
The key idea is to recursively mark the $k'$ value of the vertices until we reach the maximal possible value.
Line~\ref{alg:maximum_sort} sorts all vertices based on the increasing order of their similarity degrees.
In each iteration, the vertex $u$ with the lowest similarity degree has already reached its maximal possible $k'$ (Line~\ref{alg:maximum_min}).
Then Line~\ref{alg:maximum_update} invokes the procedure \textbf{KK'coreUpdate} to remove $u$ and
decrease the degree (resp. similarity degree) of its neighbors (resp. similarity neighbors)
at Lines~\ref{alg:maximum_up_sim_s}-\ref{alg:maximum_up_sim_e} (resp. Lines~\ref{alg:maximum_up_deg_s}-\ref{alg:maximum_up_deg_e}).
Note that we need to recursively remove vertices with degree smaller than $k$ (Line~\ref{alg:maximum_up_deg_e}) in the procedure.
At Line~\ref{alg:maximum_order}, we need to reorder the vertices in $H$ since their similarity degree values
may be updated. According to Theorem~\ref{the:up_bound}, $k'+1$ is returned at Line~\ref{alg:maximum_return} as the upper bound of the maximum \krc size.

\begin{algorithm}[htb]
\SetVline 
\SetFuncSty{textsf}
\SetArgSty{textsf}
\small
\caption{\bf KK'coreBound($M$, $C$)}
\label{alg:kkcorebound}
\Input
{
   $M :$ vertices chosen, $C:$ candidate vertices
}
\Output{$k'_{max}:$ the upper bound for the size of the maximum \krc in $M\cup C$ }
\State{$H := $ vertices in $M \cup C$ with increasing order of their similarity degrees}
\label{alg:maximum_sort}
\For{each $u \in H$}
{
    \State{$k':= deg_{sim}(u)$}
	\label{alg:maximum_min}
    \State{\textbf{KK'coreUpdate}($u$, $k'$, $H$)}
	\label{alg:maximum_update}
    \State{reorder $H$ accordingly}
	\label{alg:maximum_order}
}
\Return{$k'+1$}\label{alg:maximum_return}\\

\vspace{2mm}
\textbf{KK'coreUpdate}($u$, $k'$, $H$)\\
\State{Remove $u$ from $H$}
\For{each $v \in NB_{sim}[u] \cap H$}
{
	\label{alg:maximum_up_sim_s}
    ~\If{$deg_{sim}[v] > k'$}
    {
    \State{$deg_{sim}[v] := deg_{sim}[v]-1$}
		\label{alg:maximum_up_sim_e}
    }
}
\For{each $v \in NB[u] \cap H$}
{
	\label{alg:maximum_up_deg_s}
    \State{$deg[v] := deg[v]-1$}
     \If{$deg[v] <  k$}
     {
         \State{\textbf{KK'coreUpdate}($v$, $k'$, $H$)}
		 \label{alg:maximum_up_deg_e}
     }
}
\end{algorithm}

\vspace{1mm}
\noindent \textbf{Time Complexity.}
We can use an array $H$ to maintain the vertices where $H[i]$ keeps the vertices with similarity degree $i$.
Then the sorting of the vertices can be done in $O(|J|)$ time.
The time complexity of the algorithm is $O(n_e + n_s)$, where $n_e$ and $n_s$ denote the number of edges in the graph $J$ and the similarity graph $J'$, respectively.

\vspace{1mm}
\noindent \textbf{Algorithm Correctness.}
Let $k'_{max}(u)$ denote the largest $k'$ value $u$ can contribute to \kkc of $J$.
By $H_{j}$, we represent the vertices $\{ u \}$ with $k'_{max}(u) \geq j$ according to the definition of \kkc.
We then have $H_{j} \subseteq H_{i}$ for any $i < j$.
This implies that a vertex $u$ on $H_i$ with $k'_{max}(u)=i$ will not contribute to $H_j$ with $i < j$.
Thus, we can prove correctness by induction.

\section{Search Order}
\label{sec:order}
 Section~\ref{subsec:ordermodel} briefly introduces some important measurements that should be considered
 for an appropriate visiting order.
 Then we investigate the visiting orders in three algorithms:
 finding the maximum \krc (Algorithm~\ref{alg:findmaximum}), advanced maximal \krc enumeration (Algorithm~\ref{alg:advancedenu}) and maximal check (Algorithm~\ref{alg:checkMaximal})
 at Section~\ref{subsec:maxorder}, Section~\ref{subsec:enuorder}, and Section~\ref{subsec:maxcheckorder}, respectively.
\subsection{Important Measurements}
\label{subsec:ordermodel}
In this paper, we need to consider two kinds of search orders:
($i$) the vertex visiting order: the order of which vertex is chosen from candidate set $C$
and ($ii$) the branch visiting order: the order of which branch goes first (expand first or shrink first).
It is difficult to find simple heuristics or cost functions for two problems studied in this paper
because, generally speaking, finding a maximal/maximum \krc can be regarded as an optimization problem
with two constraints.
On one hand, we need to reduce the number of dissimilar pairs to satisfy the similarity constraint, which implies eliminating a considerable number of vertices from $C$.
On the other hand, the structure constraint and the maximal/maximum property favors a larger number of edges (vertices)
in $M \cup C$; that is, we prefer to eliminate fewer vertices from $C$.

To accommodate this, we propose three measurements where $M'$ and $\mathbf{C'}$ denote the updated $M$ and $C$ after a chosen vertex is extended to $M$ or discarded.
\begin{itemize}
\item \noindent $\Delta_1:$ the change of number of dissimilar pairs, where
\begin{equation}
\label{eqn:delta1}
\Delta_1 = \frac{DP(C) - DP(\mathbf{C'})}{DP(C)}
\end{equation}
Note that we have $DP(u, M \cup C)=0$ for every $u \in M$
according to the similarity invariant (Equation~\ref{eqn:sim_invariant}).

\item $\Delta_2:$ the change of the number of edges, where
\begin{equation}
\label{eqn:delta2}
\Delta_2 = \frac{ |\mathcal{E}(M\cup C)| - |\mathcal{E}(M'\cup \mathbf{C'})| }{|\mathcal{E}(M\cup C)|}
\end{equation}
Recall that $|\mathcal{E}(V)|$ denote the number of edges in the induced graph from the vertices set $V$.

\item $deg$($u,M \cup C$): Degree. We also consider the degree of the vertex as it may reflect its importance.
In our implementation, we choose the vertex with highest degree at the initial stage (i.e., $M=\emptyset$).
\end{itemize}


\subsection{Finding the Maximum \krc}
\label{subsec:maxorder}

Since the size of the largest \krc seen so far is critical to reduce the search space,
we aim to quickly identify the \krc with larger size.
One may choose to carefully discard vertices such that the number of edges in $M$ is reduced slowly
(i.e., only prefer smaller $\Delta_2$ value). However, as shown in our empirical study,
this may result in poor performance because it usually takes many search steps to satisfy the structure constraint.
Conversely, we may easily fall into the trap of finding \krcs with small size
if we only insist on removing dissimilar pairs (i.e., only favor larger $\Delta_1$ value).

In our implementation, we use a cautious greedy strategy where a parameter $\lambda$ is used
to make the trade-off.
In particular, we use $\lambda\Delta_1-\Delta_2$ to measure the suitability of a branch for each vertex in $C\setminus SF(C)$. In this way, each candidate has two scores.
The vertex with the highest score is then chosen and its branch with higher score is explored first (Line~\ref{alg:maximum_next_s}-\ref{alg:maximum_next_e} in Algorithm~\ref{alg:findmaximum}).

For time efficiency, we only explore vertices within two hops from the candidate vertex
when we compute its $\Delta_1$ and $\Delta_2$ values.
It takes $O(n_c\times(d_1^2 + d_2^2))$ time where $n_c$ denote the number of vertices in $C\setminus SF(C)$, and $d_1$ (resp. $d_2$) stands for the average degree of the vertices in $J$ (resp. $J'$).

\subsection{Enumerating All Maximal \krcs}
\label{subsec:enuorder}
The ordering strategy in this section differs from finding the maximum in two ways.

($i$) We observe that $\Delta_1$ has much higher impact than $\Delta_2$ in the enumeration problem,
so we adopt the $\Delta_1$-then-$\Delta_2$ strategy;
that is, we prefer the larger $\Delta_1$, and the smaller $\Delta_2$ is considered if there is a tie.
This is because the enumeration algorithm does not prefer \krc with very large size
since it eventually needs to enumerate all maximal \krcs.
Moreover, by the early termination technique proposed in Section~\ref{subsec:enum_term},
we can avoid exploring many non-promising subtrees that were misled by the greedy heuristic.

($ii$) We do not need to consider the search order of two branches because both must be explored eventually.
Thus, we use the score summation of the two branches to evaluate the suitability of a vertex.
The complexity of this ordering strategy is the same as that in Section~\ref{subsec:maxorder}.

\subsection{Checking Maximal}
\label{subsec:maxcheckorder}
The search order for checking maximals is rather different than the enumeration and maximum algorithms.
Towards the checking maximals algorithm, it is cost-effective to find a \textit{small} \krc which fully contains the candidate \krc.
To this end, we adopt a short-sighted greedy heuristic.
In particular, we choose the vertex with the largest degree and the \eeb is always preferred
as shown in Algorithm~\ref{alg:checkMaximal}.
By continuously maintaining a priority queue, we fetch the vertex with the highest degree in $O(\log |C|)$ time.
%
%
%
%

\section{Performance Evaluation}
\label{sec:exp}
This section evaluates the effectiveness and efficiency of our algorithms through comprehensive experiments.

\subsection{Experimental Setting}
\label{subsec:exp setting}

\begin{table}[htb]
\vspace{1mm}
\small
  \centering
    \begin{tabular}{|p{0.18\columnwidth}|p{0.72\columnwidth}|}
      \hline
      \textbf{Algorithm}   & \textbf{Description}                   \\ \hline \hline
      $\mathbf{Clique+}$  &  The advanced clique-based algorithm proposed in Section~\ref{sec:disc}, using the clique and \kc computation algorithms in~\cite{wang2013redundancy} and~\cite{DBLP:journals/corr/cs-DS-0310049}, respectively. The source code for maximal clique enumeration was downloaded from \url{http://www.cse.cuhk.edu.hk/~jcheng/publications.html}. \\ \hline
      $\mathbf{BasicEnum}$  &  The basic enumeration method proposed in Algorithm~\ref{alg:naive} including the structure and similarity constraints based pruning techniques (Theorems~\ref{the:str_prune} and \ref{the:sim_prune} in Section~\ref{subsec:enum_prune})\footnote{We do not specifically evaluate the structure/similarity based candidate pruning techniques because they are indispensable for the baseline algorithm.}. The best search order ($\Delta_1$-then-$\Delta_2$, in Section~\ref{subsec:enuorder}) is applied. \\ \hline
      $\mathbf{BasicMax}$  &  The algorithm proposed in Section~\ref{subsec:max_upperbounds} with the upper bound replaced by a naive one: $|M|+|C|$. The best search order is applied ($\lambda\Delta_1-\Delta_2$, in Section~\ref{subsec:maxorder}). \\ \hline
      $\mathbf{AdvEnum}$  &  The advanced enumeration algorithm proposed in Section~\ref{subsec:enumerate} that applies all advanced pruning techniques including: candidate size reduction (Theorems~\ref{the:str_prune},~\ref{the:sim_prune} and~\ref{the:retain} in Section~\ref{subsec:enum_prune}), early termination (Theorem~\ref{the:early_terminate} in Section~\ref{subsec:enum_term}) and checking maximals (Theorem~\ref{the:maximal_check} in Section~\ref{subsec:enum_check}). Moreover, the best search order is used ($\Delta_1$-then-$\Delta_2$, in Section~\ref{subsec:enuorder}). \\ \hline
      $\mathbf{AdvMax}$  &  The advanced finding maximum \krc algorithm proposed in Section~\ref{subsec:max_upperbounds} including \kkc based upper bound technique (Algorithm~\ref{alg:kkcorebound}). Again, the best search order is applied ($\lambda\Delta_1-\Delta_2$, in Section~\ref{subsec:maxorder}). \\ \hline
\end{tabular}
\vspace{-3.5mm}
\caption{Summary of Algorithms}
\label{tb:algorithms}
\end{table}

\vspace{1mm}
\noindent \textbf{Algorithms.} To the best of our knowledge, there are no existing works that investigate the problem of maximal \krc enumeration or finding the maximum \krc.
In this paper, we implement and evaluate 2 baseline algorithms, 2 advanced algorithms and the clique-based algorithm which are described in Table~\ref{tb:algorithms}.

Since the naive method in Section~\ref{sec:naive} is extremely slow even on a small graph,
we employ \basicenum and \basicmax as the baseline algorithms in the empirical study for the problem of enumerating all maximal \krcs and finding the maximum \krc, respectively.

\begin{table}
\vspace{2mm}
\small
  \centering
    \begin{tabular}{|l|l|l|l|l|}
      \hline
      \textbf{Dataset}  & \textbf{Nodes}  & \textbf{Edges} & $d_{avg}$ & $d_{max}$\\ \hline \hline

      \texttt{Brightkite}  &  58,228 & 194,090 & 6.7 & 1098\\ \hline

      \texttt{Gowalla}  & 196,591 & 456,830 & 4.7 & 9967\\ \hline

      \texttt{DBLP} &  1,566,919 & 6,461,300 & 8.3 & 2023\\ \hline

      \texttt{Pokec}  & 1,632,803 & 8,320,605 & 10.2 & 7266\\ \hline
%
%
%
%

    \end{tabular}
\vspace{-2mm}
\caption{Statistics of Datasets}
\label{tb:datasets}
\end{table}

\vspace{1mm}
\noindent \textbf{Datasets.}
Four real datasets are used in our experiments. The original data of \texttt{DBLP} was downloaded from \url{http://dblp.uni-trier.de/} and the remaining three datasets were downloaded from \url{http://snap.stanford.edu/}.
In \texttt{DBLP}, we consider each author as a vertex with attribute of counted 'attended conferences' and 'published journals' list.
There is an edge for a pair of authors if they have at least one co-authored paper.
We use \textit{Weighted Jaccard Similarity} between the corresponding attributes (counted conferences and journals) to measure the similarity between two authors.
In \texttt{Pokec}, we consider each user to be a vertex with personal interests.
We use \textit{Weighted Jaccard Similarity} as the similarity metric.
And there is an edge between two users if they are friends and
In \texttt{Gowalla} and \texttt{Brightkite}, we consider each user as a vertex along with his/her location information.
The graph is constructed based on friendship information.
We use \textit{Euclidean Distance} between two locations to measure the similarity between two users.
Table~\ref{tb:datasets} shows the statistics of the four datasets.

\vspace{1mm}
\noindent \textbf{Parameters.} We conducted experiments using different settings of $k$
and $r$. We set reasonable positive integers for $k$, which varied from $3$ to $15$.
In \texttt{Gowalla} and \texttt{Brightkite}, we used Euclidean distance as the distance threshold $r$, ranging from $1$ km to $500$ km.
The pairwise similarity distributions are highly skewed in \texttt{DBLP} and \texttt{Pokec}. Thus, we used the thousandth of the pairwise similarity distribution in decreasing order which grows from top $1$\textperthousand~to~top~$15$\textperthousand (i.e., the similarity threshold value drops).
Regarding the search orders of the \advmax and \basicmax algorithms, we set $\lambda$ to $5$ by default.

All programs were implemented in standard C++ and compiled with G++ in Linux.
All experiments were performed on a machine with an Intel Xeon 2.3GHz CPU and a Redhat Linux system.
We evaluate the performance of an algorithm by its running time.
To better evaluate the difference between the algorithms, we set the time cost to \textbf{INF} if an algorithm did not terminate within one hour. We also report the number of maximal \krcs and their average and maximum sizes.

\subsection{Effectiveness}
\label{subsec:exp effect}

\begin{figure}[t]
\begin{center}
      \subfigure[Enumeration]{
      \includegraphics[width=0.95\columnwidth,height=3.5cm]{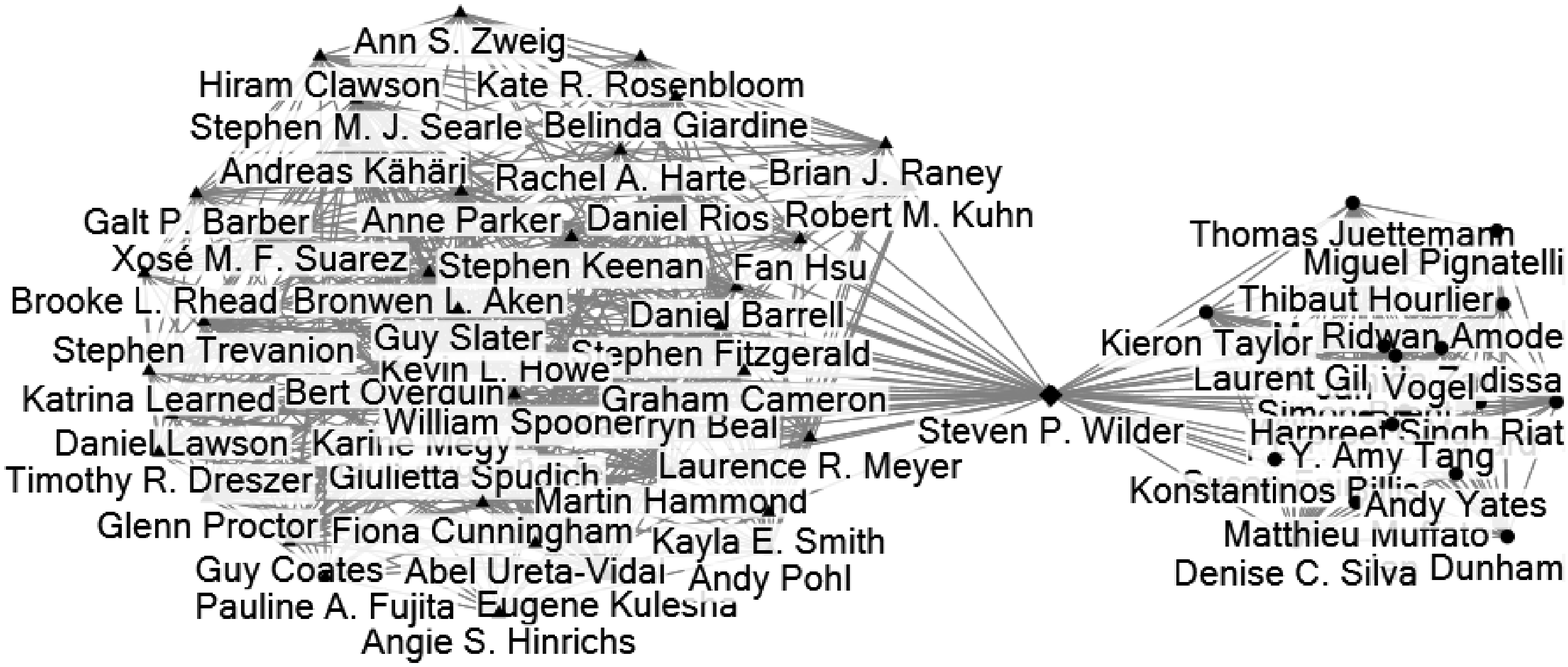}}
      \vspace{-2mm}

      \subfigure[Maximum]{
      \includegraphics[width=0.8\columnwidth,height=4cm]{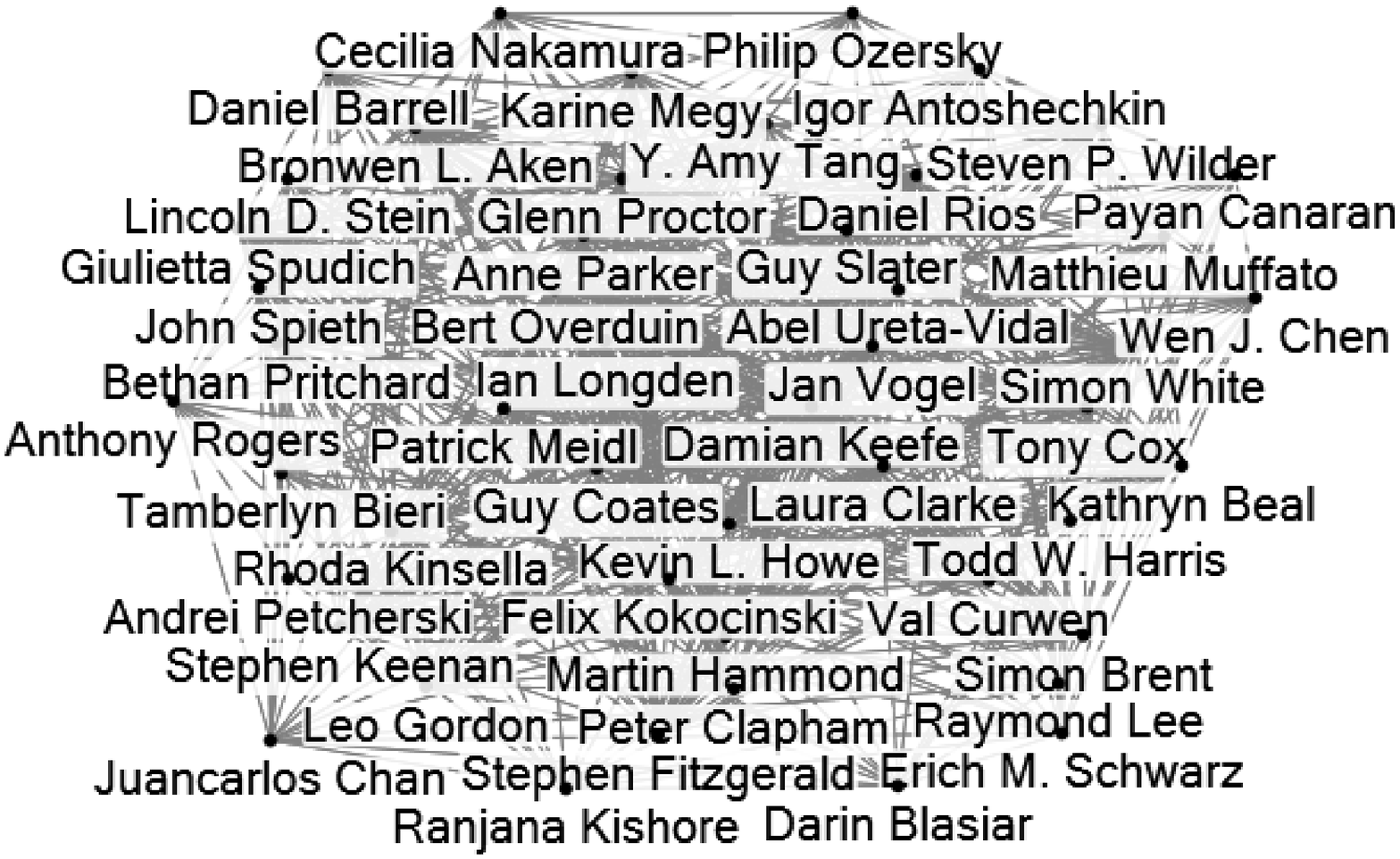}}
\end{center}
\vspace{-8mm}
\caption{\small{Case Study on DBLP (k=15, r=top 3\textperthousand)}}
\vspace{-2mm}
\label{fig:exp:effectiveness}
\end{figure}
\begin{figure}[t]
\begin{center}
      \includegraphics[width=0.8\columnwidth,height=4cm]{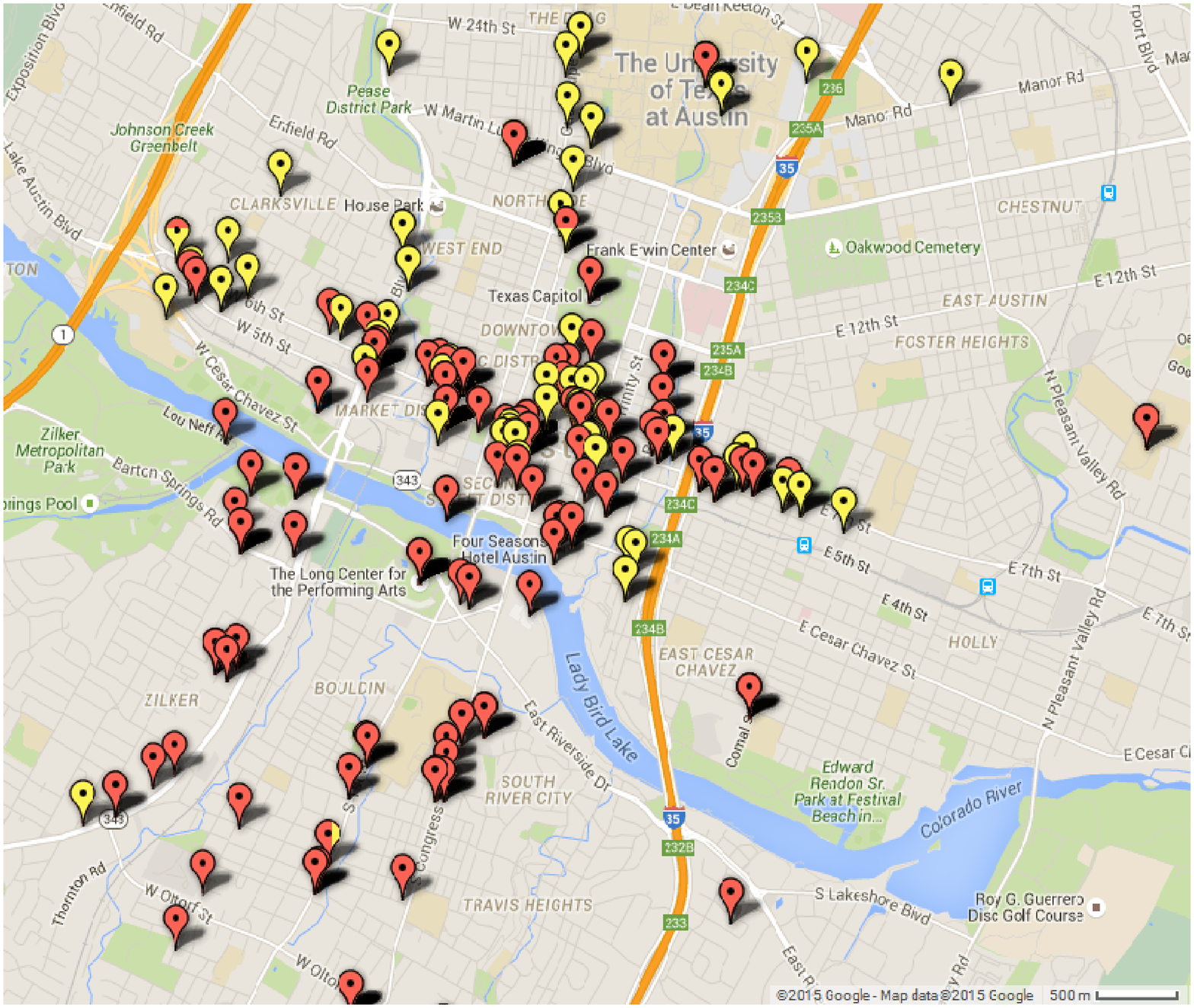}
\end{center}
\vspace{-6mm}
\caption{\small{Case Study on Gowalla (k=10, r=10km)}}
\label{fig:exp:effectiveness1}
\end{figure}

We conducted two case studies on \texttt{DBLP} and \texttt{Gowalla} to demonstrate the effectiveness of our \krc model.
We observe that, compared to \kc, \krc enables us to find more valuable information with the additional similarity constraint
on the vertex attribute.

\vspace{1mm}
\noindent \textbf{DBLP.}
Figure~\ref{fig:exp:effectiveness}(a) and (b) show two examples of \texttt{DBLP} with $k=15$ and r= $3$\textperthousand
\footnote{To avoid the noise, we enforce that there are at least three co-authored papers between two connected authors in the case study.
}. In Figure~\ref{fig:exp:effectiveness}(a), all authors come from the same $k$-core
based on their co-authorship information alone (their structure constraint).
While there are two \krcs with one common author named Steven P. Wilder, if we also consider their research background (their similarity constraint). We find from searching the internet that a large number of the authors in the left \krc are bioinformaticians
from the European Bioinformatics Institute (EBI), while many of the authors in the right \krc are from
the Wellcome Trust Centre. Moreover, it turns out that Dr. Wilder got his Ph.D. from the Wellcome Trust Centre for Human Genetics, University of Oxford in $2007$, and has worked at EBI ever since.
Figure~\ref{fig:exp:effectiveness}(b) depicts the maximum \krc of DBLP with $49$ authors.
We find that they have intensively co-authored many papers related to a project named \textit{Ensembl} (\url{http://www.ensembl.org/index.html}), which is one of the well known genome browsers.
It is very interesting that, although the size of maximum \krc changes when we vary the values of $k$ and $r$,
the authors remaining in the maximum \krc are closely related to the project.

\vspace{1mm}
\noindent \textbf{Gowalla.}
Figure ~\ref{fig:exp:effectiveness1} illustrates a set of \texttt{Gowalla} users who are from the same \kc with $k=10$.
By setting $r$ to $10$ km, two groups of users emerge, each of which is a maximal \krc, and we cannot identify them by structure constraint or similarity constraint alone.
We observe that the maximum \krc in \texttt{Gowalla} always appears at Austin when $k \geq 6$.
Then we realize that this is because the headquarters of \texttt{Gowalla} is located in Austin.

\begin{figure}[htb]

\begin{center}
    \includegraphics[width=1.0\columnwidth]{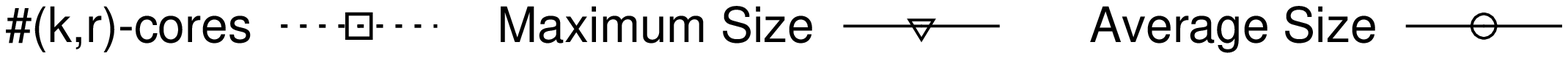}\vspace{-2mm}

    \subfigure[Gowalla, k=5]{
    \includegraphics[width=0.47\columnwidth,height=2.5cm]{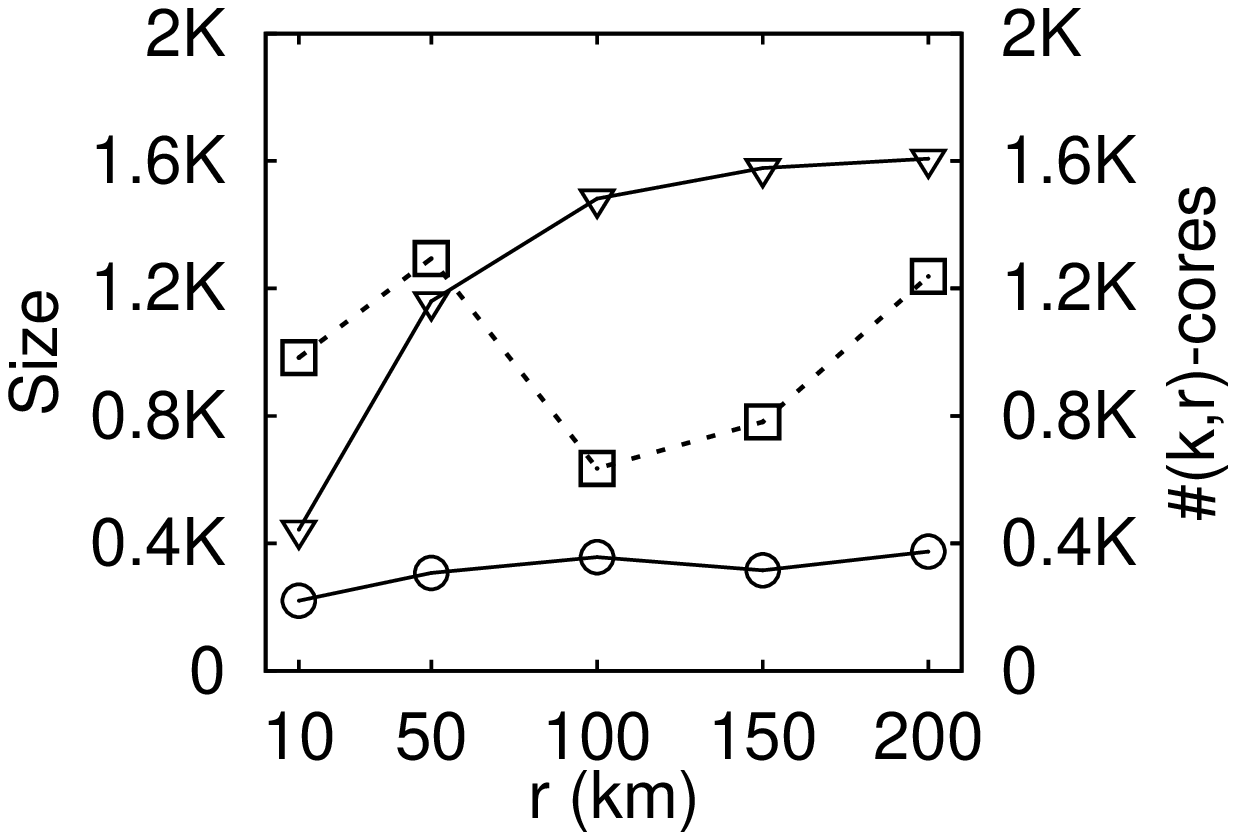}}
    \subfigure[DBLP, r=top 3\textperthousand]{
    \includegraphics[width=0.47\columnwidth,height=2.5cm]{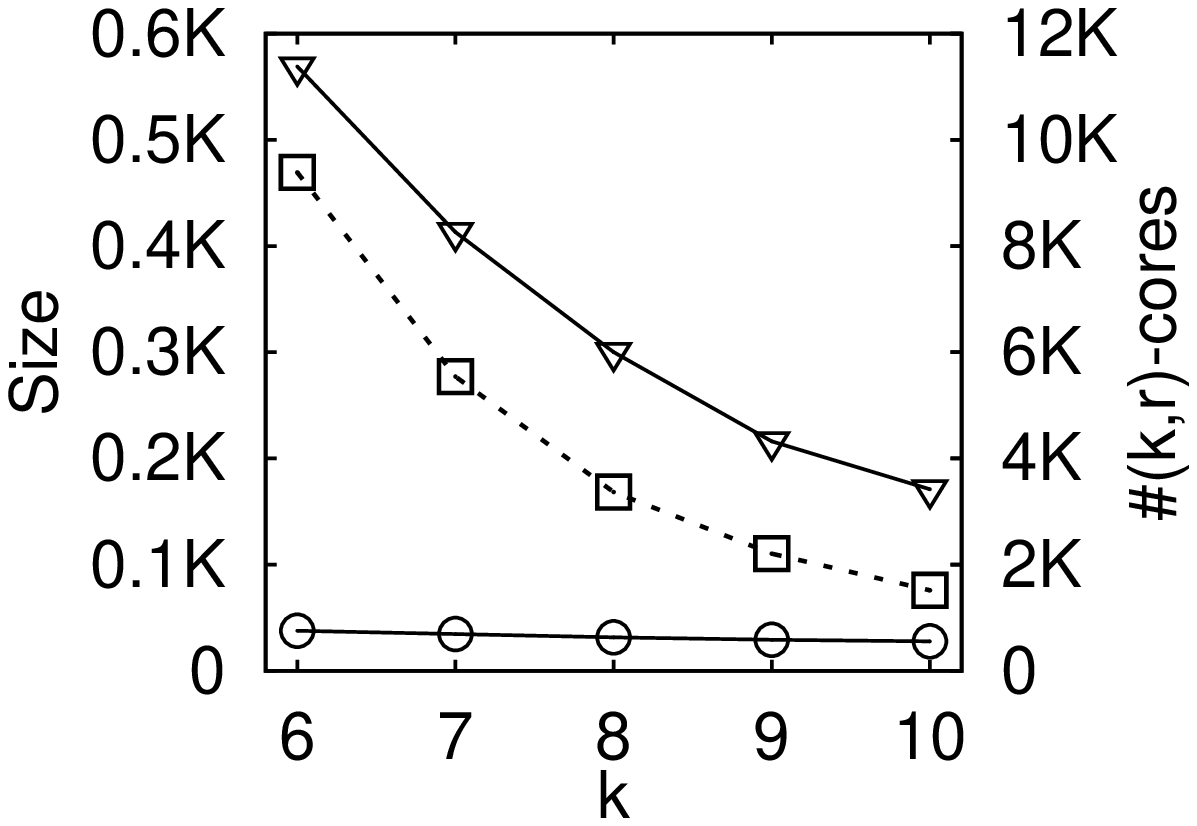}
    }
\end{center}
\vspace{-7mm}
\caption{\small{\krc Statistics}}
\label{fig:exp:corestati}
\end{figure}

We also report the number of \krcs, the average size and maximum size of \krcs on \texttt{Gowalla} and \texttt{DBLP}.
Figure~\ref{fig:exp:corestati}(a) and (b) show that both maximum size of \krcs and the number of \krcs
are much more sensitive to the change of $r$ or $k$ on the two datasets, compared to the average size.

\subsection{Efficiency}
\label{subsec:exp efficiency}
In this section, we evaluate the efficiency of the techniques proposed in this paper and report the time costs of the algorithms.

\begin{figure}
	\begin{minipage}{\columnwidth}
		\centering
		\includegraphics[width=1.0\columnwidth]{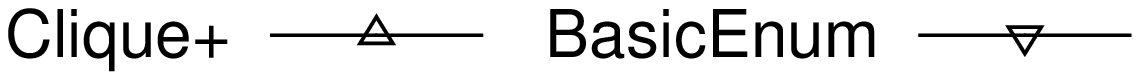}%
	\end{minipage}\vspace{-2mm}
	
	\subfigure[Gowalla, k=5]{
    	\label{fig:exp:tuning_gowalla} 
    	\includegraphics[width=0.48\columnwidth, height=2.5cm]{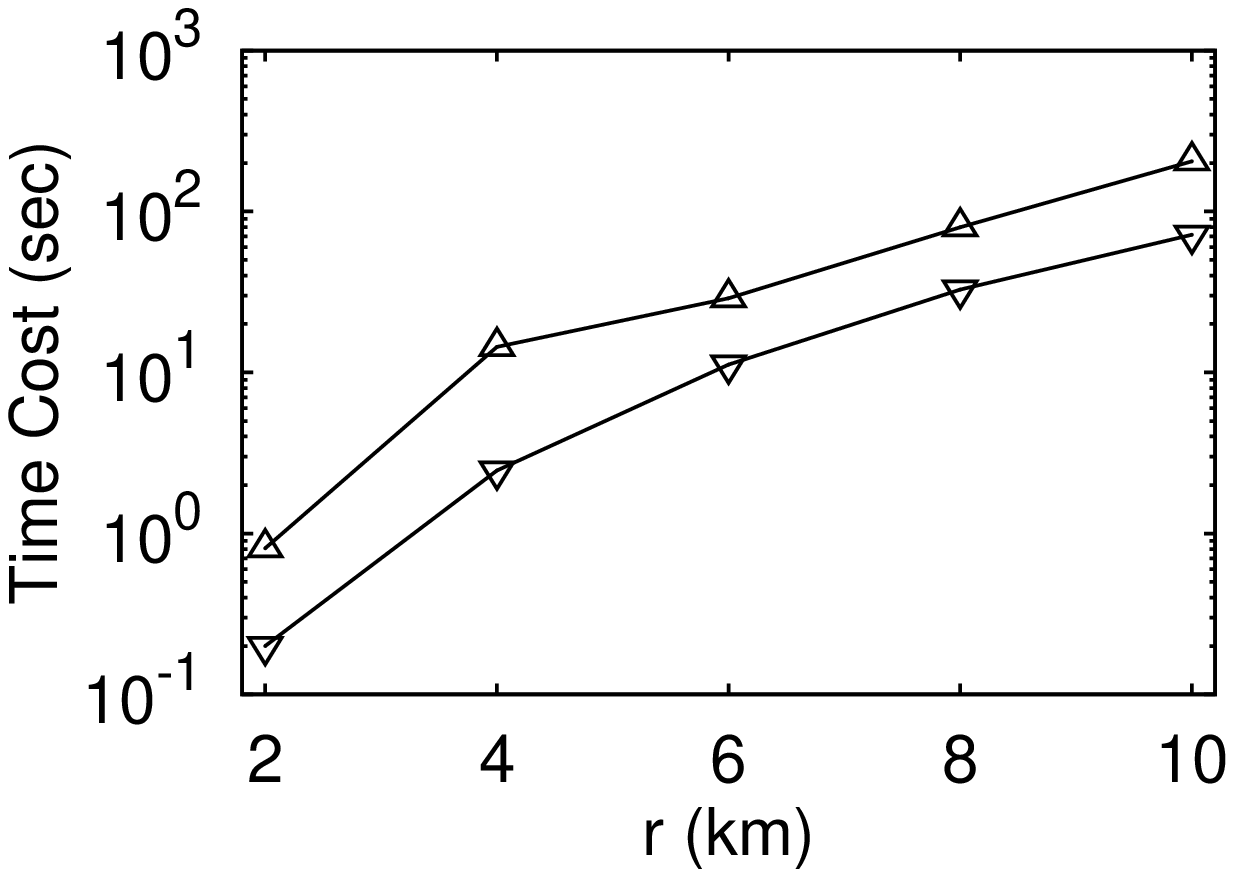}}
  	\subfigure[DBLP, r=top 3\textperthousand]{
    	\label{fig:exp:tuning_dblp} 
    	\includegraphics[width=0.48\columnwidth, height=2.5cm]{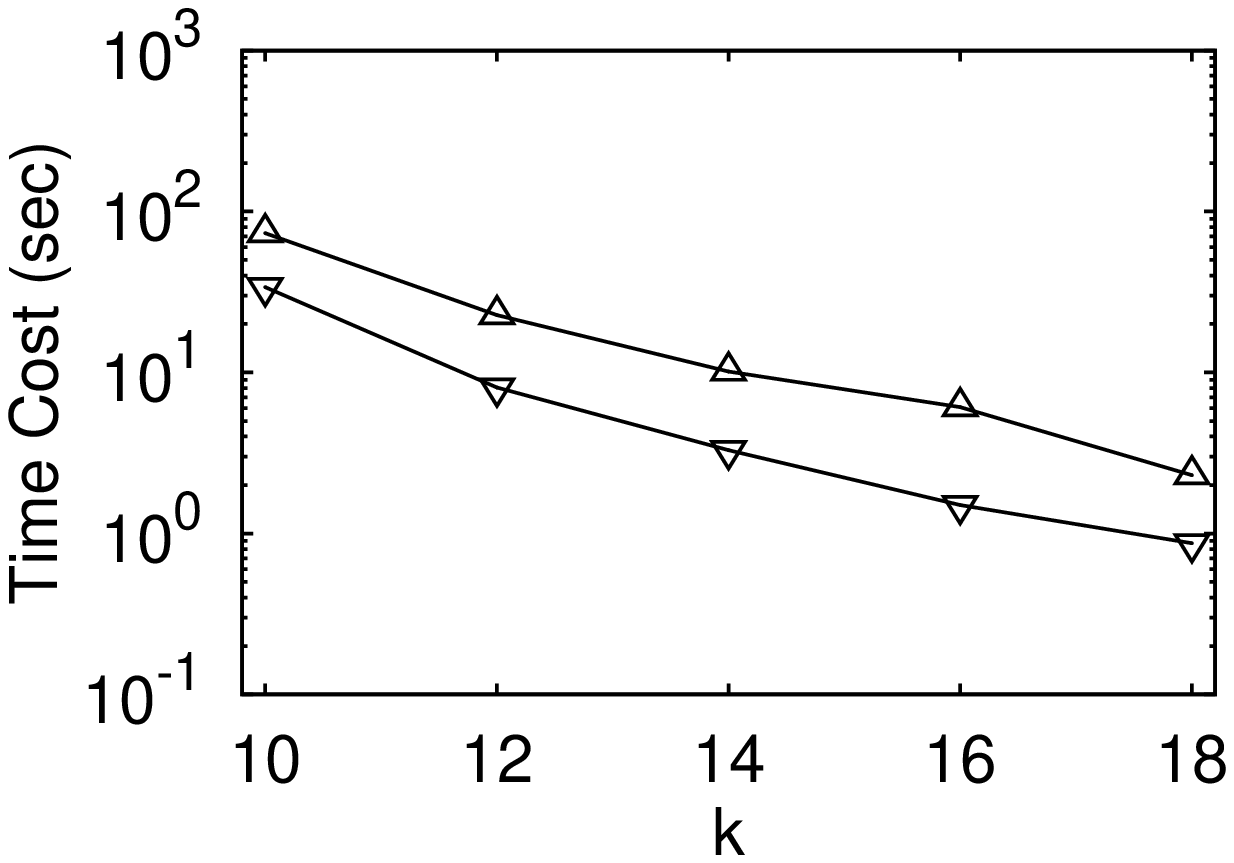}}
\vspace{-5mm}
\caption{\small Evaluate Clique-based Method}
\vspace{-3mm}
\label{fig:exp:cliquemethod} 
\end{figure}

\begin{figure}
	\begin{minipage}{\columnwidth}
		\centering
		\includegraphics[width=1.0\columnwidth]{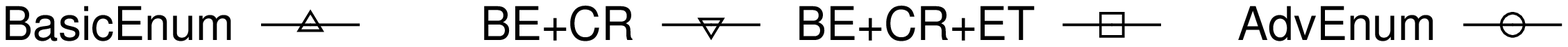}%
	\end{minipage}\vspace{-2mm}
	
	\subfigure[Gowalla, k=5]{
    	\label{fig:exp:tuning_gowalla} 
    	\includegraphics[width=0.48\columnwidth, height=2.5cm]{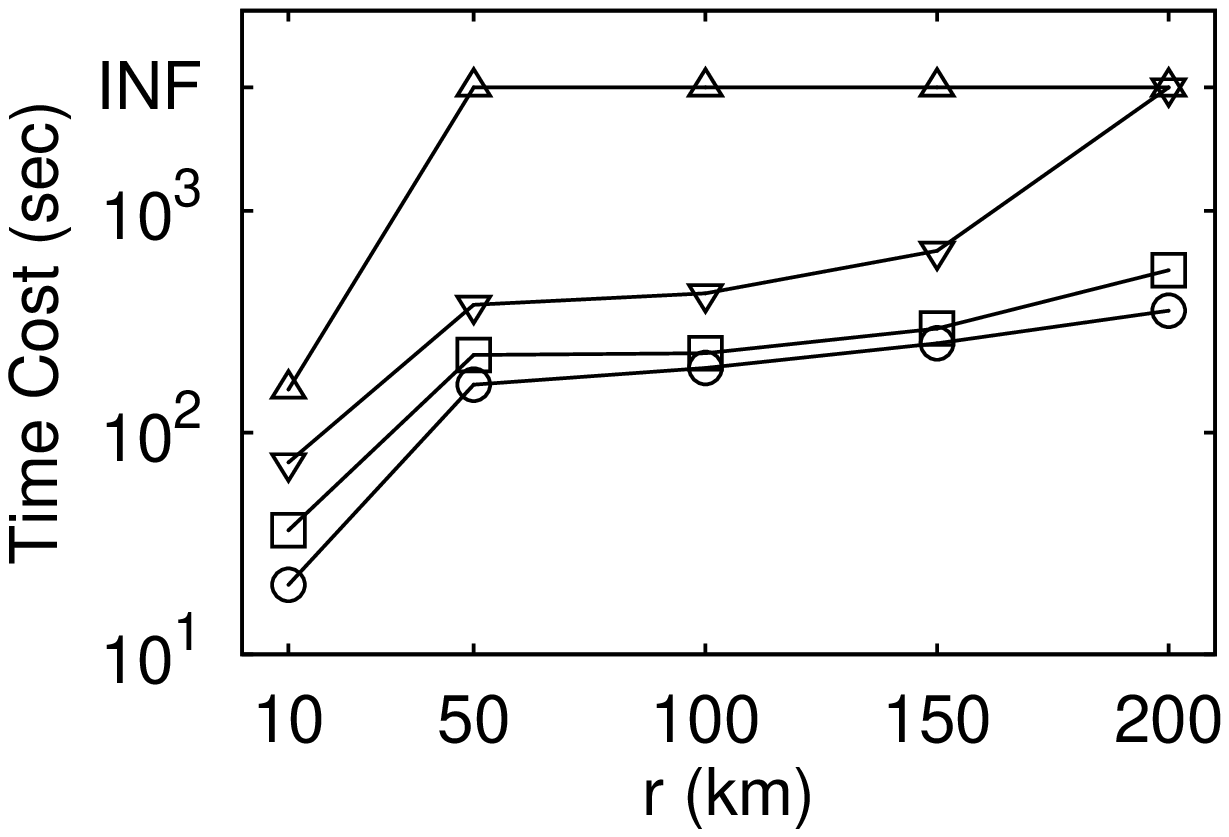}}
  	\subfigure[DBLP, r=top 3\textperthousand]{
    	\label{fig:exp:tuning_dblp} 
    	\includegraphics[width=0.48\columnwidth, height=2.5cm]{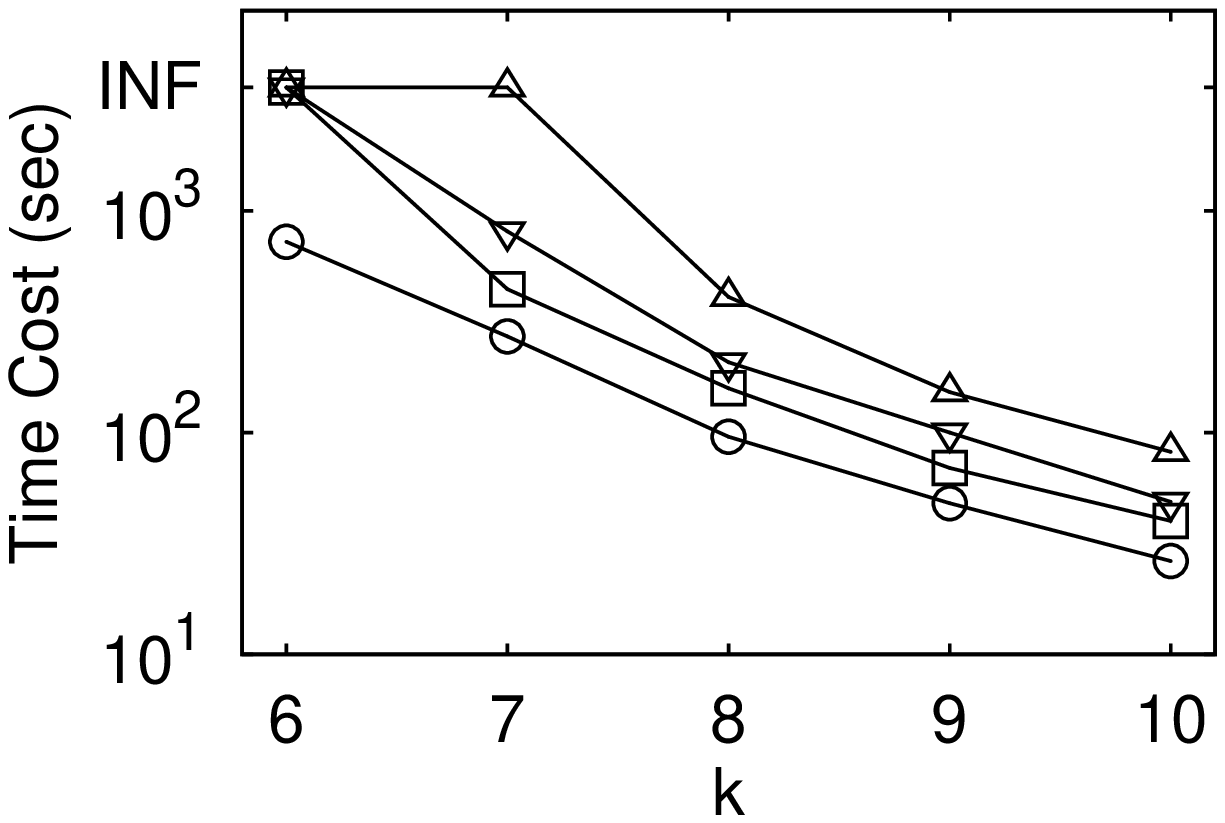}}
\vspace{-5mm}
\caption{\small Evaluate Pruning Techniques}
\label{fig:exp:tuning} 
\end{figure}

\vspace{1mm}
\noindent \textbf{Evaluating the Clique-based Method.}
In Figure~\ref{fig:exp:cliquemethod}, we evaluate the time cost of the maximal \krc enumeration for
$\textbf{Clique+}$ and \basicenum on the \texttt{Gowalla} and \texttt{DBLP} datasets.
In Figure~\ref{fig:exp:tuning_gowalla}, we fix the structure constraint $k$ to $5$ and vary the similarity threshold $r$ from $2$ km to $10$ km.
Correspondingly in Figure~\ref{fig:exp:tuning_dblp}, we fix $r$ to $3$\textperthousand~and vary $k$ from $18$ to $10$.
\basicenum is typically considered to significantly outperform $\textbf{Clique+}$ because
a large number of of cliques are materialized in the similarity graphs when the clique-based method is employed.
Consequently, we exclude $\textbf{Clique+}$ from the following experiments.

\vspace{1mm}
\noindent \textbf{Evaluating the Pruning Techniques.}
We evaluate the efficiency of our candidate size reduction, early termination and checking maximals techniques on \texttt{Gowalla} and \texttt{DBLP} in Figure~\ref{fig:exp:tuning} by incrementally integrating these techniques into the \basicenum algorithm.
Particularly, $\mathbf{BE}$+$\mathbf{CR}$ represents the \basicenum algorithm with \underline{c}andidate \underline{r}etaining technique (Theorem~\ref{the:retain}).
Then $\mathbf{BE}$+$\mathbf{CR}$+$\mathbf{ET}$ further includes the \underline{e}arly \underline{t}ermination technique (Theorem~\ref{the:early_terminate}).
By integrating the checking maximal technique (Theorem~\ref{the:maximal_check}), it turns to be our \advenum algorithm.
Note that the best search order is used for all algorithms.
The results in Figure~\ref{fig:exp:tuning} confirm that all techniques make a significant contribution to
enhance the performance of our \advenum algorithm.

\begin{figure}
\begin{center}
	\begin{minipage}{\columnwidth}
        \centering
		\includegraphics[width=1.0\columnwidth]{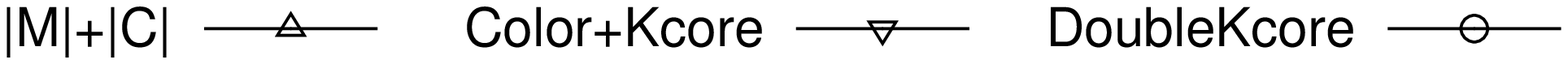}
	\end{minipage}\vspace{-2mm}

    \subfigure[DBLP, k=10]{
    \includegraphics[width=0.47\columnwidth, height=2.5cm]{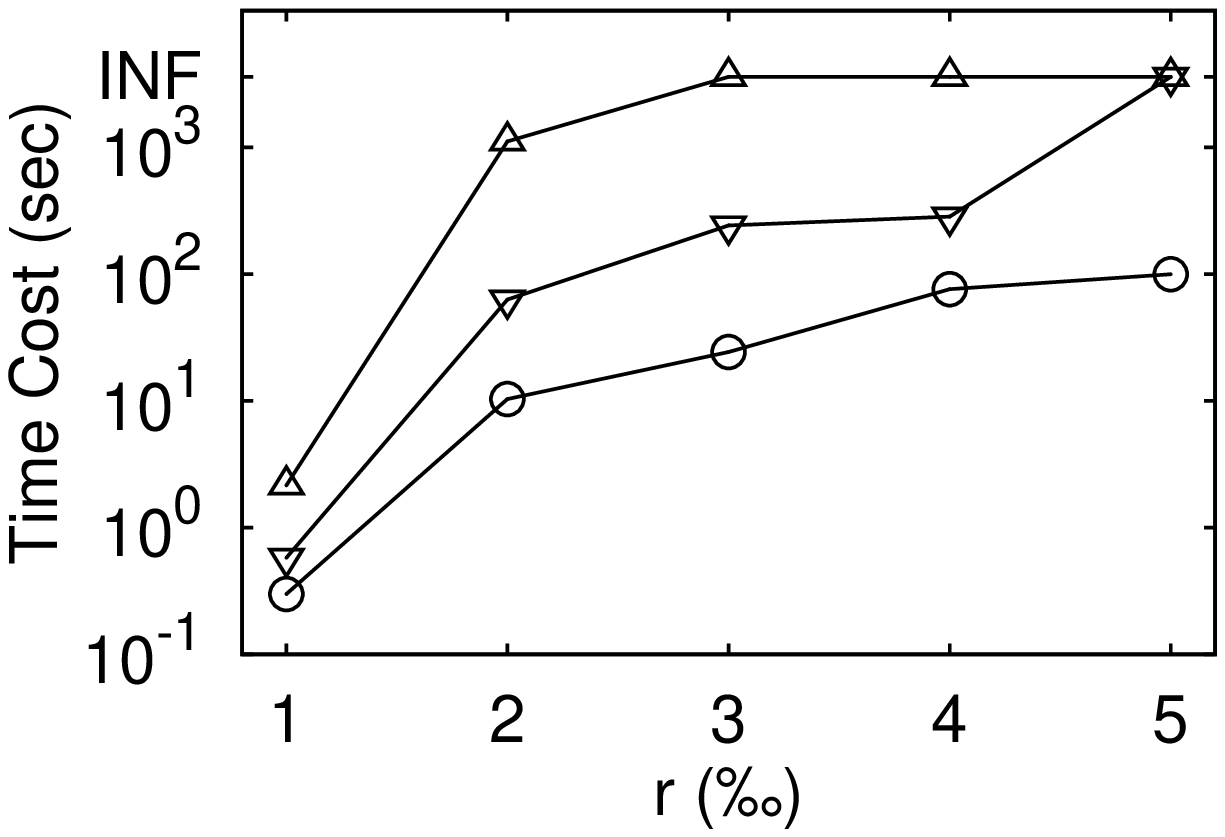}}
    \subfigure[DBLP, r=top 3\textperthousand]{
    \includegraphics[width=0.47\columnwidth, height=2.5cm]{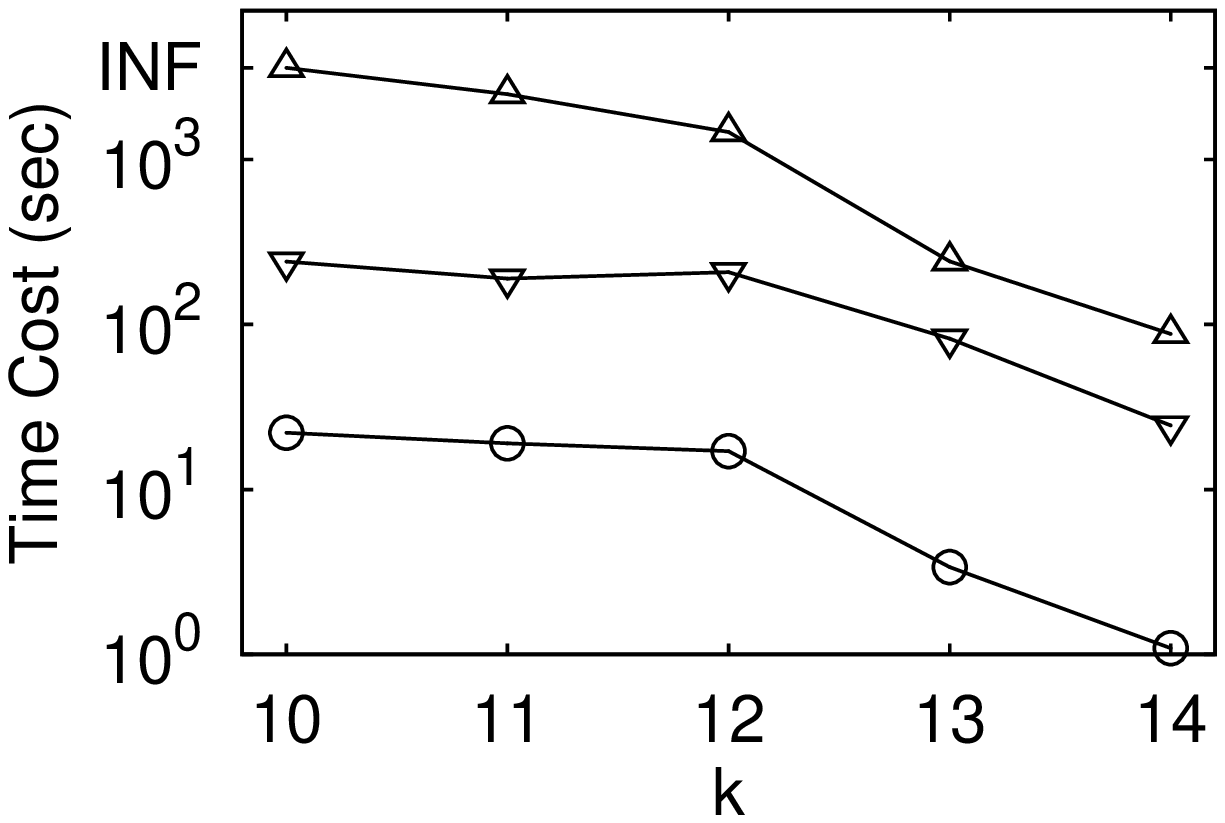}}
\end{center}
\vspace{-7mm}
\caption{\small{Evaluate Upper Bounds}}
\label{fig:exp:upperbounds}
\end{figure}

\vspace{1mm}
\noindent \textbf{Evaluating the Upper Bound Technique.}
Figure~\ref{fig:exp:upperbounds} demonstrates the effectiveness of the \kkc based upper bound technique (Algorithm~\ref{alg:kkcorebound}) on DBLP by varying the values of $r$ and $k$.
In $\mathbf{Color}$+$\mathbf{Kcore}$~\cite{DBLP:conf/icde/YuanQLCZ15}, we used the better upper bound of color and \kc based techniques.
Studies show that $\mathbf{Color}$+$\mathbf{Kcore}$ significantly enhances performance compared to the naive upper bound
($|M|+|C|$).
Nevertheless, our \kkc based upper bound technique beats the color and \kc based techniques by a large margin
because it can better exploit the structure/similarity constraints.
\begin{figure}
\begin{center}
    \subfigure[Maximum(Maxm), $\lambda$]{
    \includegraphics[width=0.48\columnwidth, height=2.5cm]{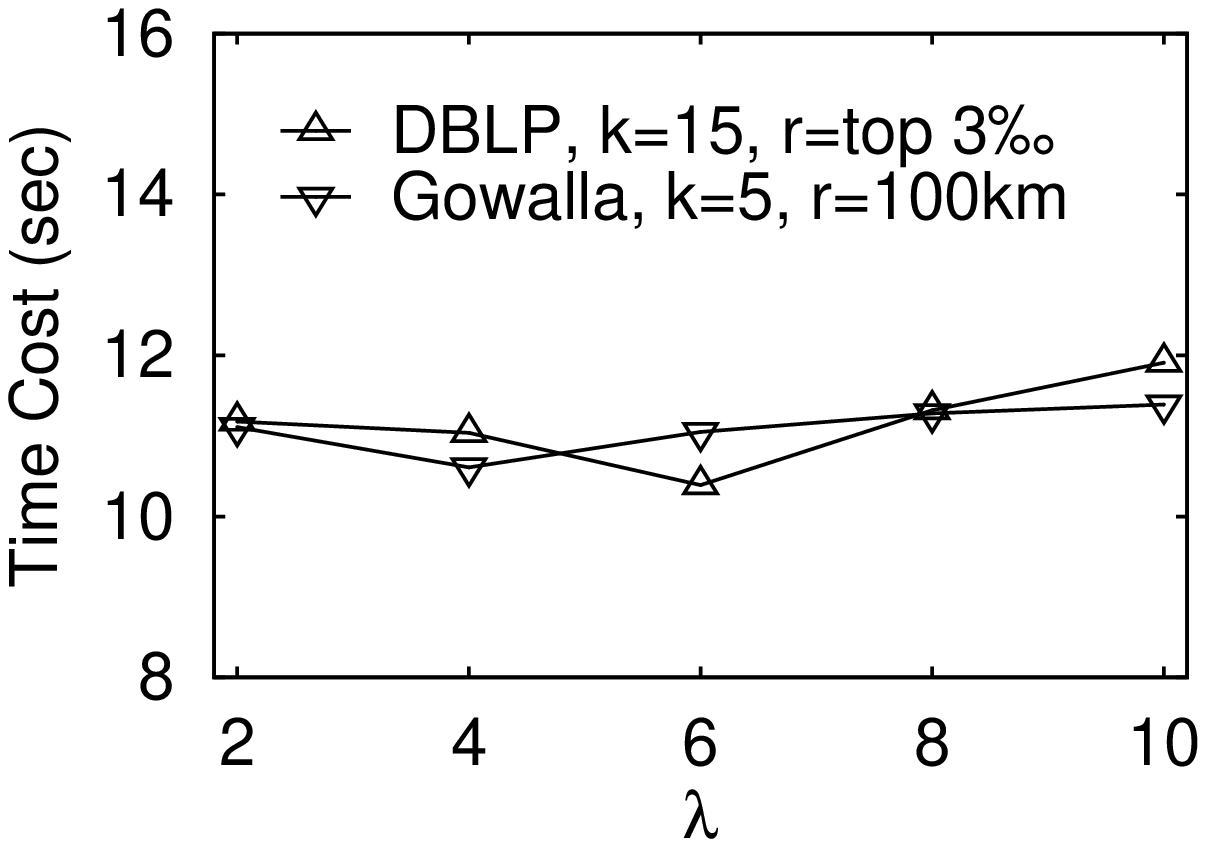}}    
    \subfigure[Maxm, DBLP, r=top 3\textperthousand]{
    \includegraphics[width=0.48\columnwidth, height=2.5cm]{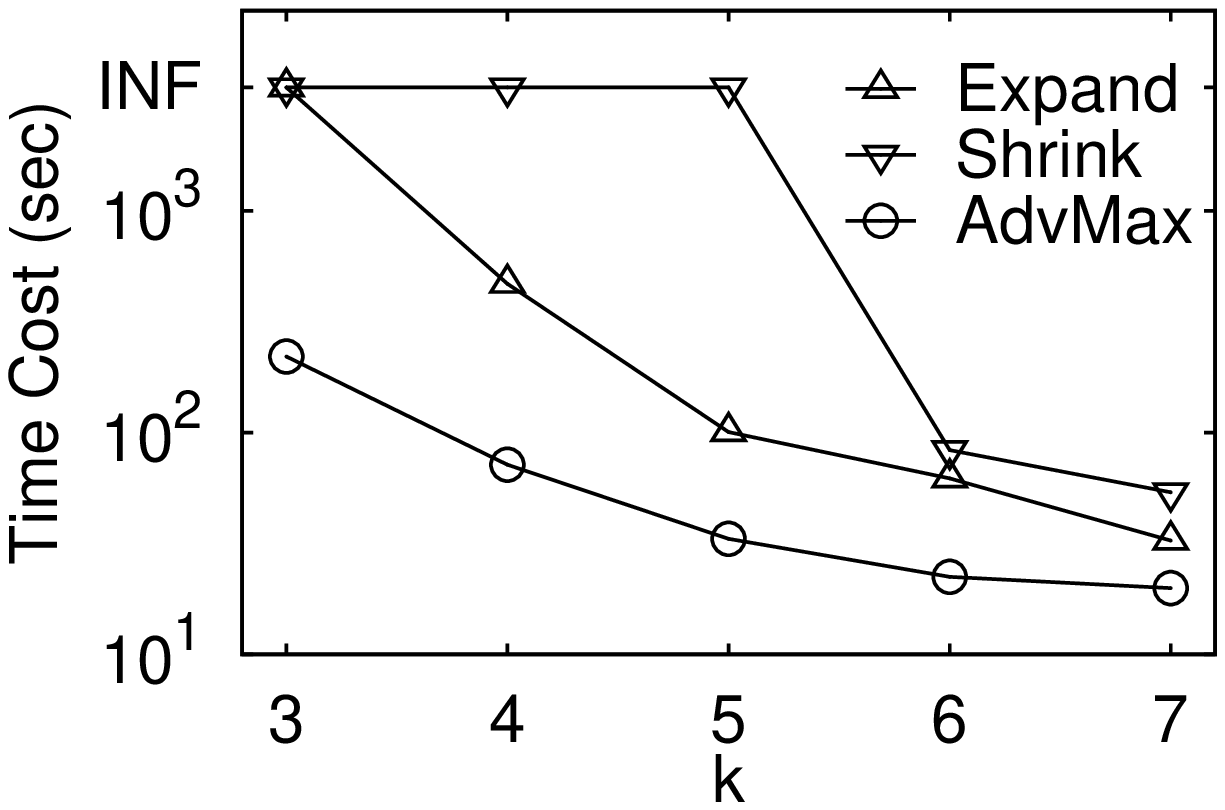}}
    \vspace{-2mm}

    \subfigure[Maxm, DBLP, r=top 3\textperthousand]{
    \includegraphics[width=0.48\columnwidth, height=2.5cm]{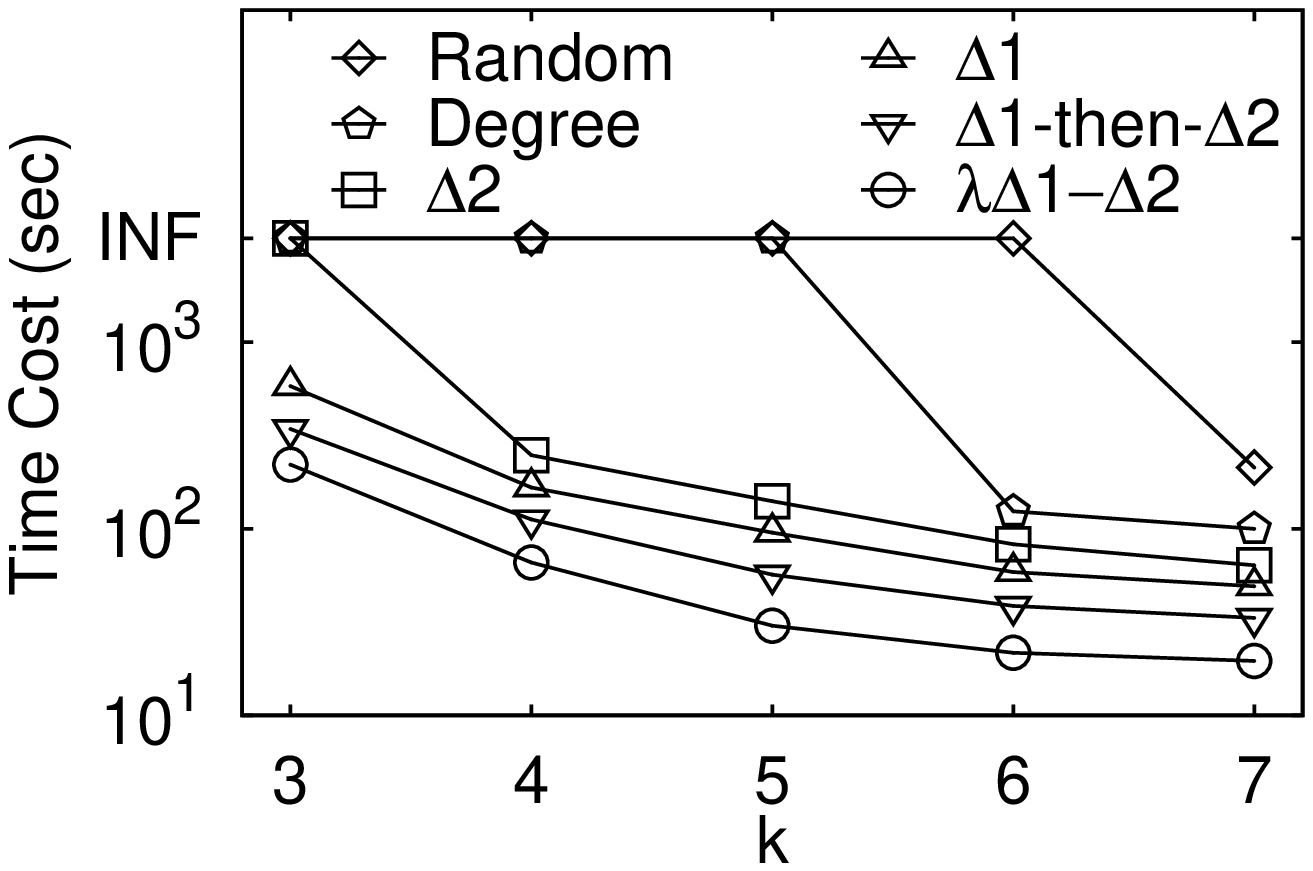}}
    \subfigure[Enumeration, Gow., k=5]{
    \includegraphics[width=0.48\columnwidth, height=2.5cm]{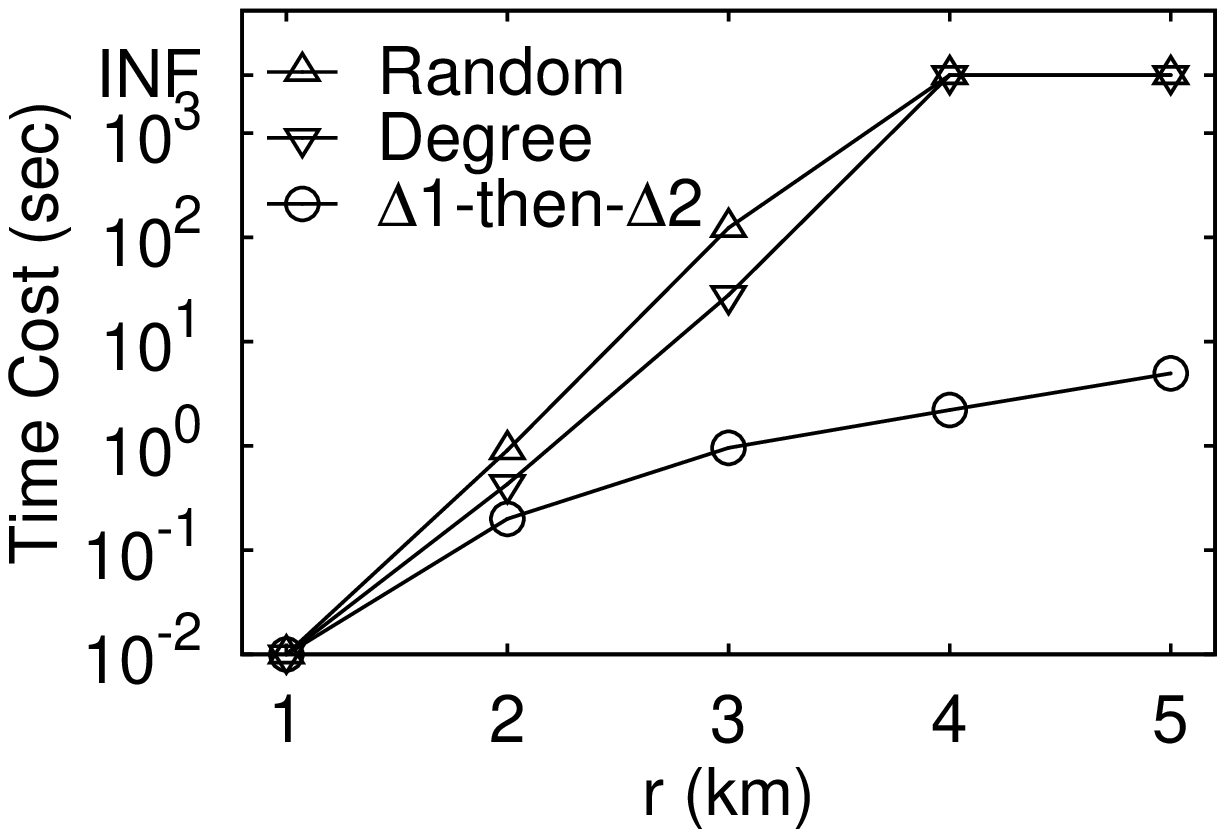}}
    \vspace{-2mm}

    \subfigure[Enumeration, Gow., k=5]{
    \includegraphics[width=0.48\columnwidth, height=2.5cm]{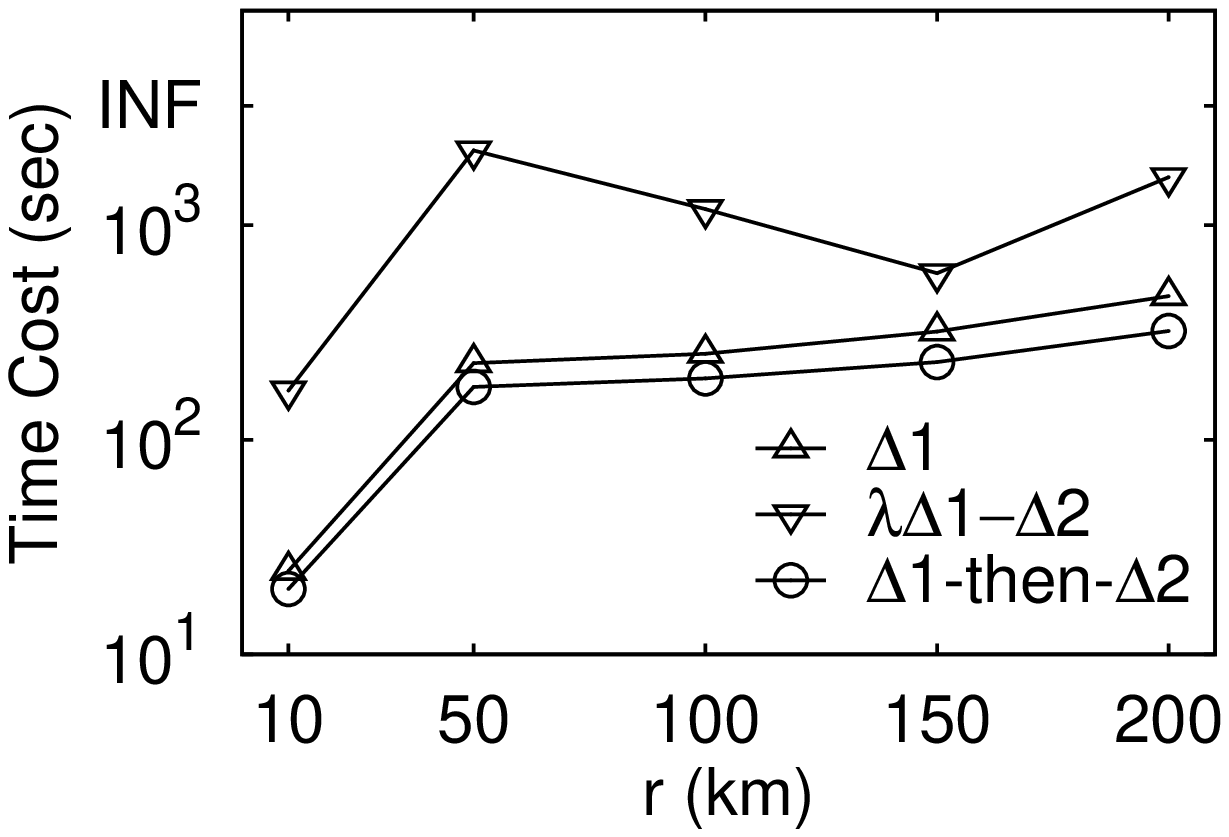}}
    \subfigure[Maximal, Gow., k=5]{
    \includegraphics[width=0.48\columnwidth, height=2.5cm]{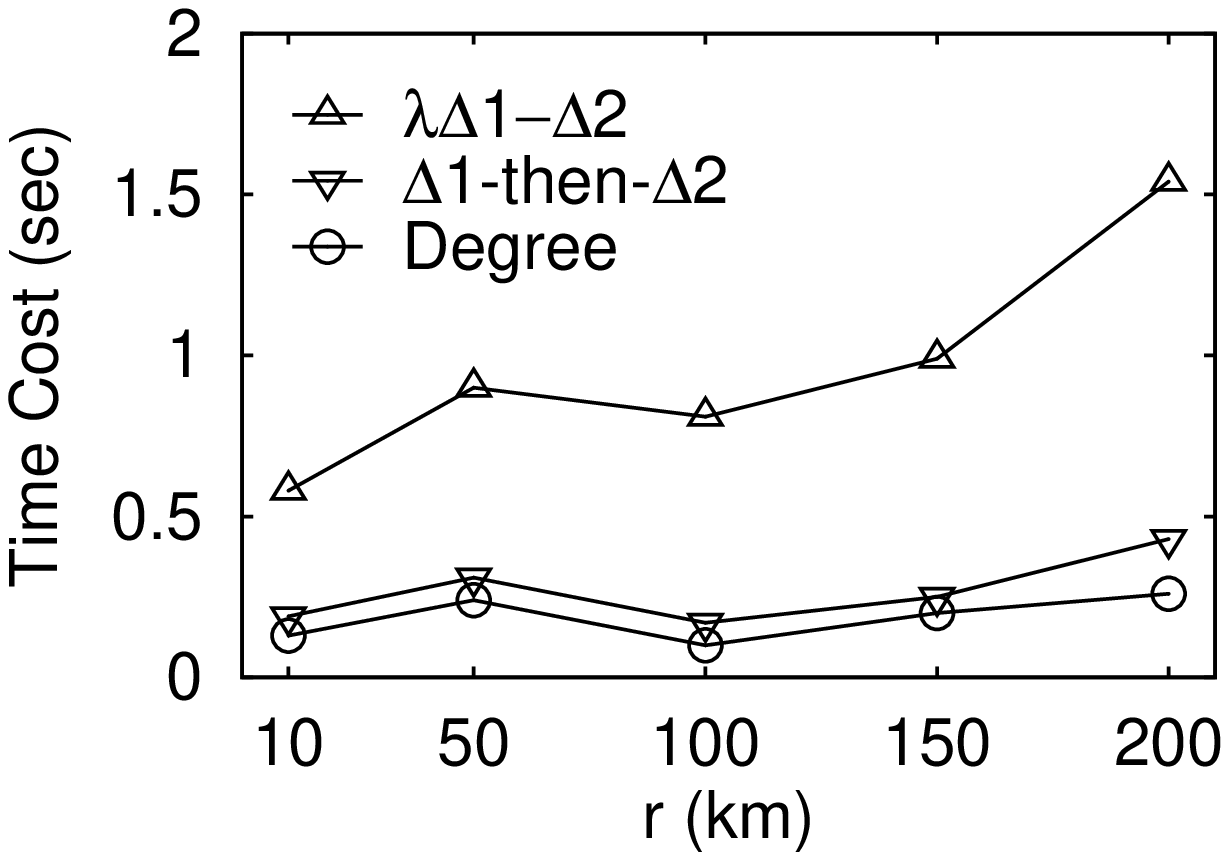}}
\end{center}
\vspace{-7mm}
\caption{\small{Evaluate Search Orders}}
\label{fig:exp:orders}
\end{figure}

\vspace{1mm}
\noindent \textbf{Evaluating the Search Orders.}
In this experiment, we evaluate the effectiveness of the three search orders
proposed for the maximum algorithm (Section~\ref{subsec:maxorder}, Figure~\ref{fig:exp:orders}(a)-(c)),
enumeration algorithm (Section~\ref{subsec:enuorder}, Figure~\ref{fig:exp:orders}(d)-(e))
and the checking maximal algorithm (Section~\ref{subsec:maxcheckorder}, Figure~\ref{fig:exp:orders}(f)).
We first tune $\lambda$ value for the search order of \advmax in Figure~\ref{fig:exp:orders}(a)
against \texttt{DBLP} and \texttt{Gowalla}. In the following experiments, we set $\lambda$ to $5$ for maximum algorithms.
Figure~\ref{fig:exp:orders}(b) verifies the importance of the adaptive order for the two branches on \texttt{DBLP}
where \textbf{Expand} (resp. \textbf{Shrink}) means the expand (resp. shrink) branch is always preferred in \advmax.
In Figure~\ref{fig:exp:orders}(c), we investigate a set of possible order strategies for \advmax.
As expected, the $\lambda\Delta_1-\Delta_2$ order proposed in Section~\ref{subsec:maxorder} outperforms
the other alternatives including random order, degree based order (Section~\ref{subsec:maxcheckorder}, used for checking maximals), $\Delta_1$ order, $\Delta_2$ order and $\Delta_1$-then-$\Delta_2$ order (Section~\ref{subsec:enuorder}, used by \advenum).
Similarly, Figure~\ref{fig:exp:orders}(d) and (e) confirm that the $\Delta_1$-then-$\Delta_2$ order
is the best choice for \advenum compared to the alternatives.
Figure~\ref{fig:exp:orders}(f) shows that the degree order achieves the best performance for the checking maximal algorithm
(Algorithm~\ref{alg:checkMaximal}) compared to the two orders used by \advenum and \advmax.

\begin{figure}[htb]
\begin{center}
     \subfigure[Enumeration]{
     \includegraphics[width=0.48\columnwidth, height=2.5cm]{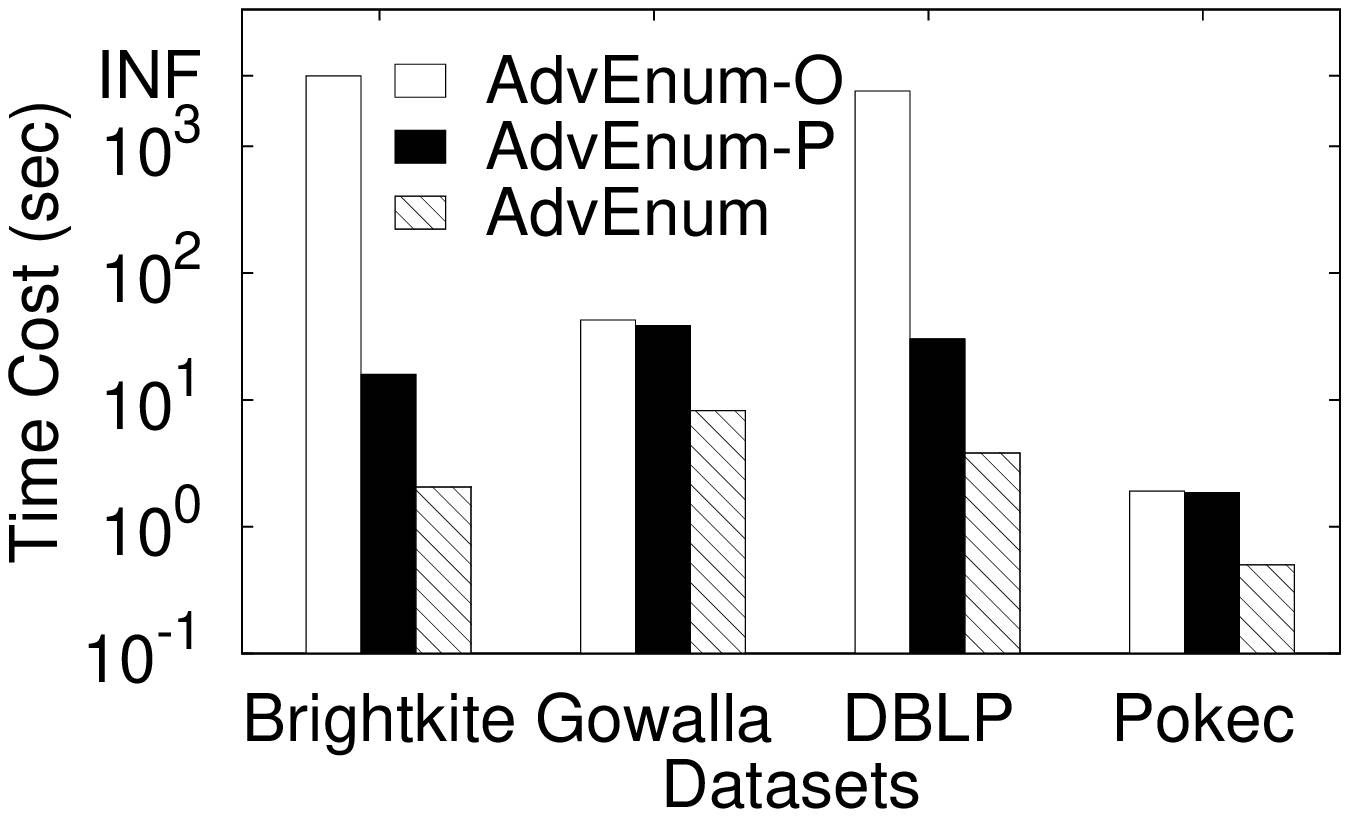}}
     \subfigure[Maximum]{
     \includegraphics[width=0.48\columnwidth, height=2.5cm]{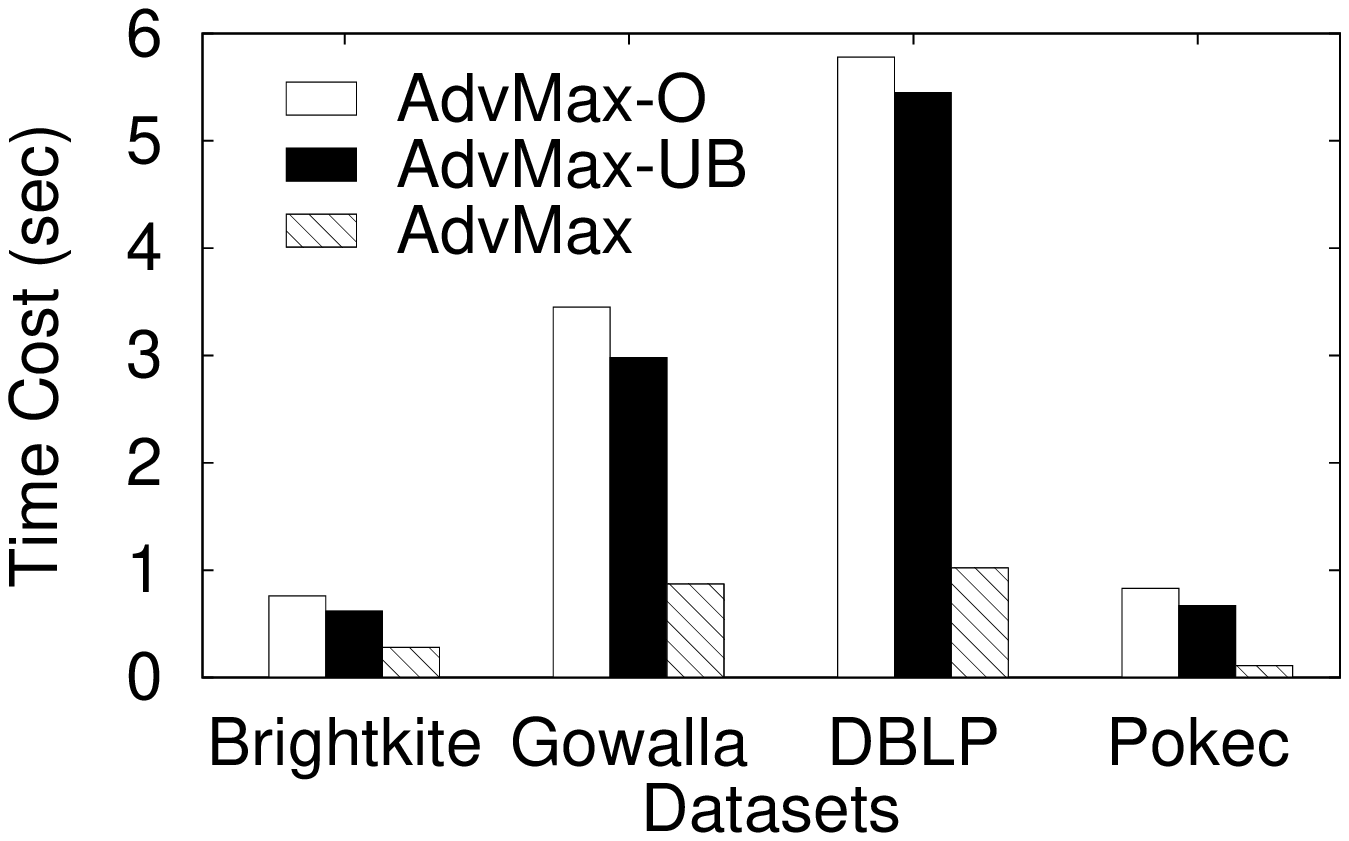}}
\end{center}
\vspace{-7mm}
\caption{\small{Performance on Four Datasets}}
\label{fig:exp:datasetsPerformance}
\end{figure}

\vspace{1mm}
\noindent \textbf{Effect of Different Datasets.}
Figure~\ref{fig:exp:datasetsPerformance} evaluates the performance of the enumeration and maximum algorithms on four datasets with $k=10$.
We set $r$ to $500$ km, $300$ km, $3$\textperthousand~and $5$\textperthousand~in \texttt{Brightkite}, \texttt{Gowalla}, \texttt{DBLP} and \texttt{Pokec}, respectively.
We use \textbf{AdvEnum-O} to denote the \advenum algorithm \textit{without} the best order
while all other advanced techniques applied (degree order is used instead).
Similarly, \textbf{AdvEnum-P} stands for the \advenum algorithm without the support of the advanced techniques
(candidate retention, early termination and checking maximals), but the best search order is employed.
Figure~\ref{fig:exp:datasetsPerformance}(a) demonstrates the efficiency of those advanced techniques and search orders on four datasets.
We also demonstrate the efficiency of the upper bound and search order for the maximum algorithm in Figure~\ref{fig:exp:datasetsPerformance}(b).
where three algorithms are evaluated - \textbf{AdvMax-O}, \textbf{AdvMax-UB}, and \advmax.
\begin{figure}[thb]
\begin{center}
	\begin{minipage}[b]{\columnwidth}
		\centering
		 \includegraphics[width=1.0\columnwidth]{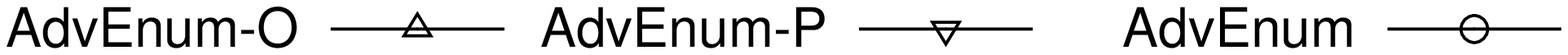}%
	\end{minipage}\vspace{-2mm}

    \subfigure[Gowalla, r=100]{
    \includegraphics[width=0.48\columnwidth, height=2.5cm]{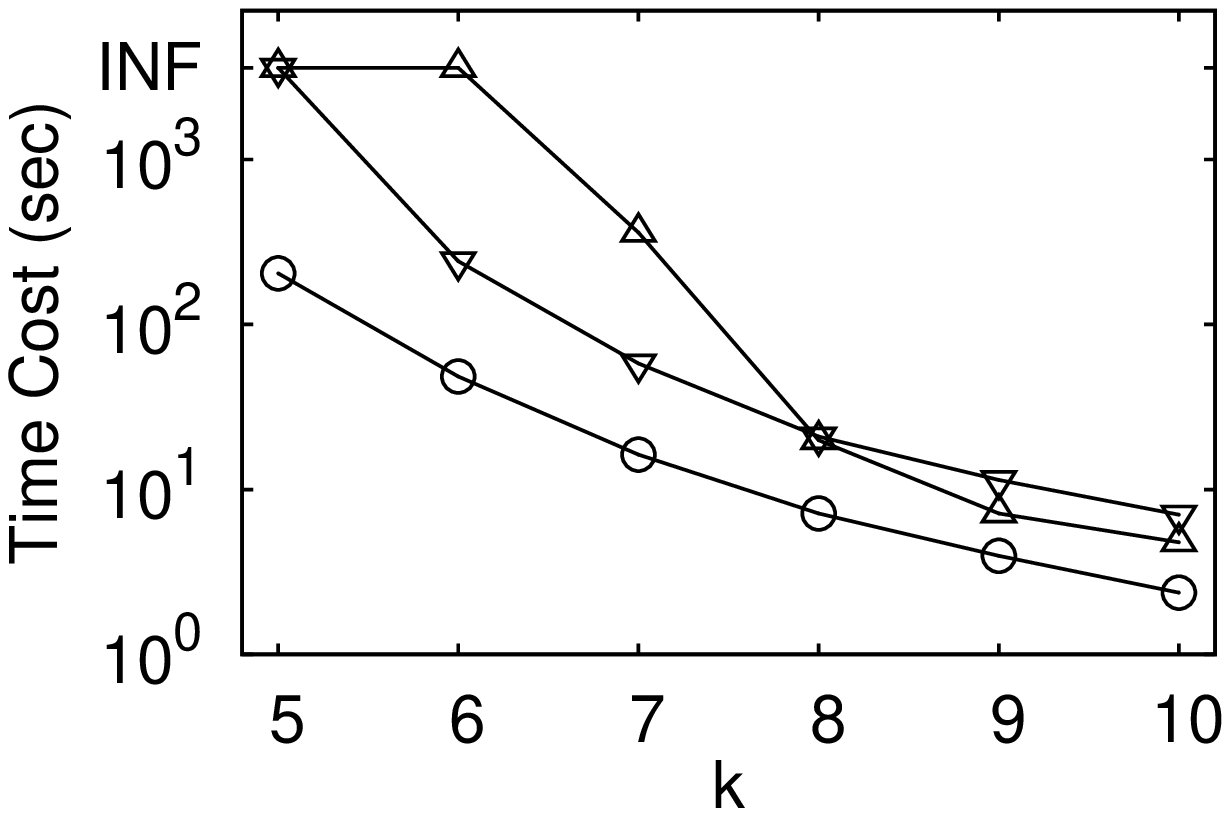}}
    \subfigure[DBLP, k=15]{
    \includegraphics[width=0.48\columnwidth, height=2.5cm]{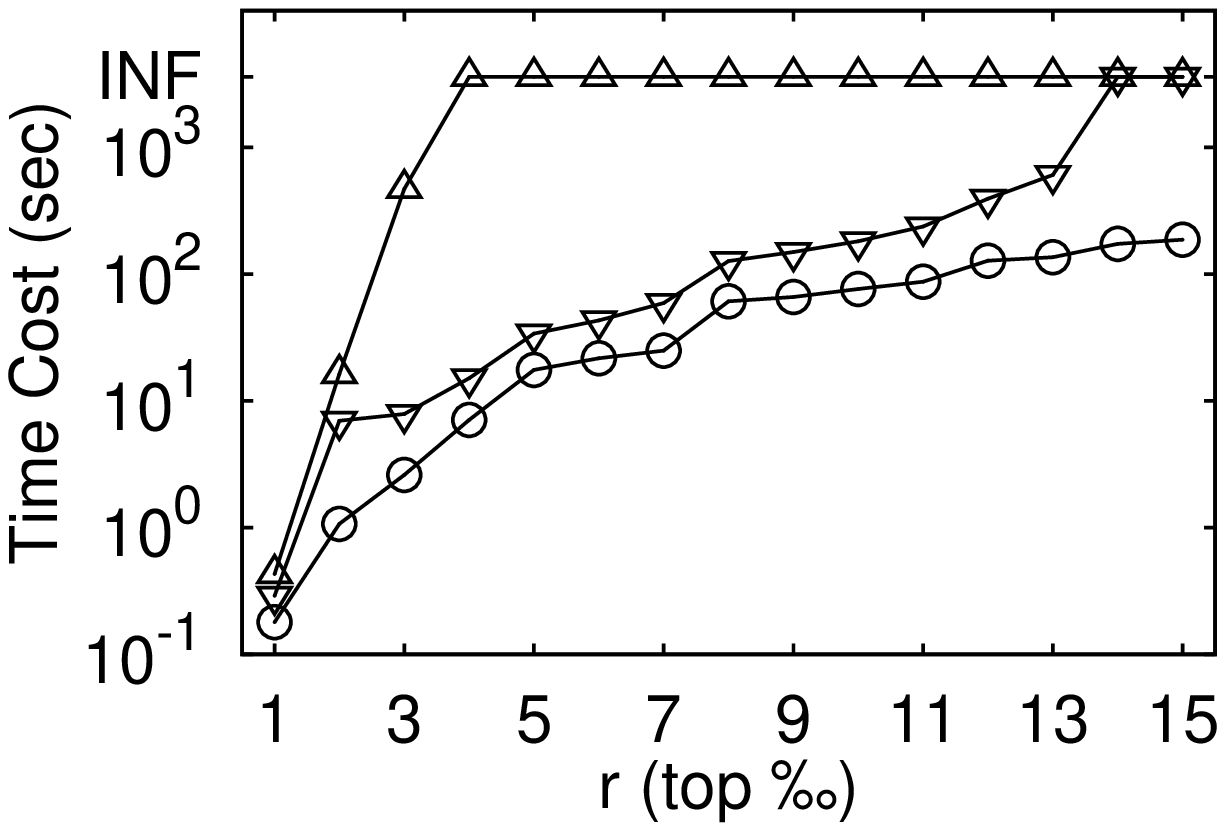}}
\end{center}
\vspace{-7mm}
\caption{\small{Effect of $k$ and $r$ for Enumeration}}
\label{fig:exp:enuPerformance}
\end{figure}
\begin{figure}[thb]
\begin{center}
	\begin{minipage}[b]{\columnwidth}
		\centering
		 \includegraphics[width=1.0\columnwidth]{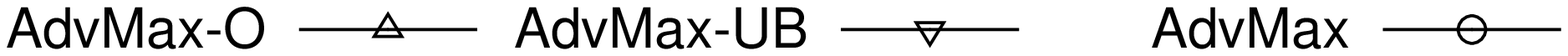}%
	\end{minipage}\vspace{-2mm}

    \subfigure[Gowalla, r=100]{
    \includegraphics[width=0.48\columnwidth, height=2.5cm]{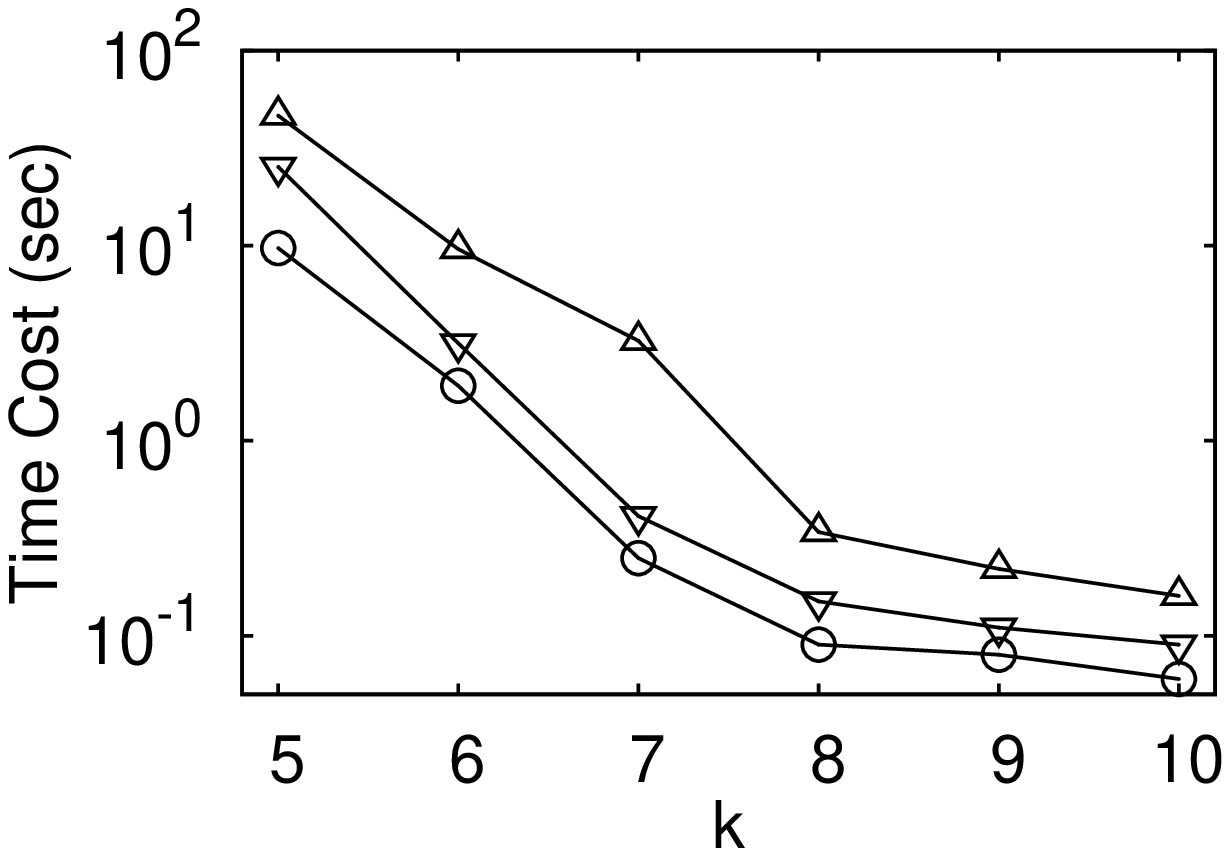}}
    \subfigure[DBLP, k=15]{
    \includegraphics[width=0.48\columnwidth, height=2.5cm]{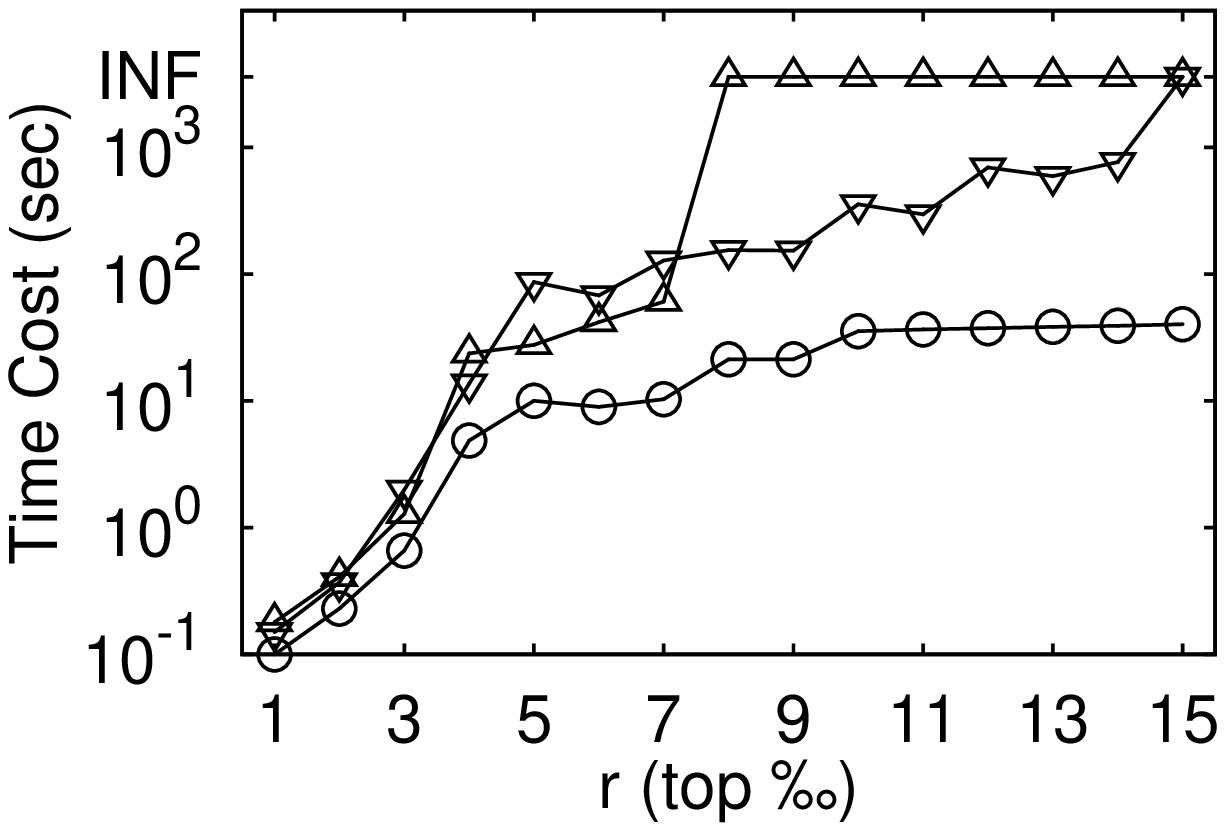}}
\end{center}
\vspace{-7mm}
\caption{\small{Effect of $k$ and $r$ for Maximum}}
\label{fig:exp:maxPerformance}
\end{figure}

\vspace{1mm}
\noindent \textbf{Effect of $k$ and $r$.}
Figure~\ref{fig:exp:enuPerformance} studies the impact of $k$ and $r$ for the three enumeration algorithms on \texttt{Gowalla} and \texttt{DBLP}.
As expected, Figure~\ref{fig:exp:enuPerformance}(a) shows that the time cost of three algorithms drops when $k$ grows
because many more vertices are pruned by the structure constraint.
In Figure~\ref{fig:exp:enuPerformance}(b), the time costs of three algorithms grow when $r$ increases
because more vertices will contribute to the same \krc when the similarity threshold drops.
Similar trends are also observed in Figure~\ref{fig:exp:maxPerformance} where three maximum algorithms with different $k$ and $r$ values are evaluated.
Moreover, Figure~\ref{fig:exp:enuPerformance} and~\ref{fig:exp:maxPerformance}
further confirm the efficiency of our pruning techniques, the \kkc based upper bound technique,
and our search orders.
As expected, we also observed that \advmax significantly outperforms \advenum under the same setting
because \advmax can further cut-off search trees based on the derived upper bound of the core size
and does not need to check the maximals.

%
%

\section{Related Work}
\label{sec:rel}
%
The first work of \kc was studied in~\cite{seidman1983network} and the model has been widely used in many applications such as social contagion~\cite{ugander2012structural}, graph visualization~\cite{DBLP:journals/pvldb/ZhaoT12}, influence study~\cite{kitsak2010identification}, and user engagement~\cite{DBLP:journals/siamdm/BhawalkarKLRS15,DBLP:conf/cikm/MalliarosV13}.
Batagelj and Zaversnik presented a liner in-memory algorithm for core decomposition~\cite{DBLP:journals/corr/cs-DS-0310049}.
I/O efficient algorithms~\cite{DBLP:conf/icde/ChengKCO11,wen2015efficient} were proposed for core decomposition on graphs that cannot fit in the main memory. Locally computing and estimating core numbers are studied in~\cite{DBLP:conf/sigmod/CuiXWW14} and~\cite{DBLP:conf/icdm/OBrienS14} respectively.
Besides, pairwise similarity has been widely used to identify groups of similar users based on their attributes (e.g.,~\cite{jaho2011iscode,rangapuram2013towards}). 
A variety of clique computation algorithms have been proposed in the literature (e.g.,~\cite{cheng2012fast,wang2013redundancy}).


Due to the popularity of attributed graphs in various applications,
a large amount of classical graph queries have been investigated in this context
such as clustering~\cite{DBLP:conf/sdm/AkogluTMF12,DBLP:conf/sigmod/XuKWCC12},
community detection~\cite{dang2012community,DBLP:conf/icdm/YangML13},
pattern mining and matching~\cite{DBLP:journals/pvldb/SilvaMZ12,DBLP:conf/kdd/TongFGE07},
event detection~\cite{DBLP:conf/kdd/RozenshteinAGT14},
and network modeling~\cite{DBLP:conf/www/PfeifferMFNG14}.
Nevertheless, the problem studied in this paper is inherently different to these works,
as none of the them consider $k$-core computation on attributed graphs.

Recently, Wu {\emph et al.}~\cite{DBLP:conf/icde/WuJZZ15} developed efficient algorithms to find dense and connected subgraphs in dual networks. Nevertheless, their model is inherently different to our \krc model because
they only consider the cohesiveness of the dual graph (i.e., the attribute similarity in our problem)
based on densest subgraph model.
The connectivity constraint on graph structure alone cannot reflect the structure cohesiveness of the graph.
Recently, Zhu {\emph et al.}~\cite{DBLP:journals/corr/ZhuHXL14} studied $k$-core computation within a given spatial region in the context of geo-social networks.
However, we consider the similarity of the attributes instead of a specific regions.
Lee {\emph et al.}~\cite{lee2014cast} proposed a model based on \kc called ($k$,$d$)-core which is different with our model because they consider structure and similarity constraint on same edges and use the number of common neighbors of two vertices in each edge as the similarity constraint.
A recent community search paper~\cite{DBLP:journals/pvldb/FangCLH16} finds a subgraph related to a query point considering cohesiveness on both structure and keyword similarity, while it focuses on maximizing the number of common keywords of the vertices in the subgraph. The techniques are different with ours and cannot be applied to solving our problem.

\section{Conclusion}
\label{sec:con}
In this paper, we propose a novel cohesive subgraph model, called \krc,
which considers the cohesiveness of a subgraph from the perspective of both graph structure and vertex attribute.
We show that the problem of enumerating the maximal \krcs and finding the maximum \krc are both NP-hard.
Several novel pruning techniques are proposed to improve algorithm efficiency,
including candidate size reduction, early termination, checking maximals and upper bound estimation techniques.
We also devise effective search orders for enumeration, maximum and maximal check algorithms.
Extensive experiments on real-life networks demonstrate the effectiveness of the \krc model, as well
as the efficiency of our techniques.

{\small
\bibliographystyle{abbrv}
\bibliography{ref}
}
\end{document}